\pdfoutput=1
\documentclass[letterpaper,11pt]{article}
\RequirePackage{etex}
\usepackage[utf8]{inputenc}
\usepackage[whole]{bxcjkjatype}
\usepackage[in]{fullpage}
\usepackage{todonotes,comment}
\usepackage{amsmath,amsthm,amssymb,amsfonts,mathtools,bm}
\usepackage[linesnumbered,ruled,vlined]{algorithm2e}
\usepackage[colorlinks=true,linkcolor=blue,citecolor=blue,hypertexnames=false]{hyperref}
\usepackage{cleveref}
\usepackage{autonum}
\newcommand{\BIB}[1]{}
\usepackage{authblk}
\usepackage{xcolor}
\definecolor{gold}{RGB}{236,225,48}

\newtheorem{theorem}{Theorem}[section]
\newtheorem{lemma}[theorem]{Lemma}
\newtheorem{proposition}[theorem]{Proposition}
\newtheorem{claim}[theorem]{Claim}
\newtheorem{example}[theorem]{Example}
\newtheorem{corollary}[theorem]{Corollary}

\usepackage{tikz}
\usepackage{booktabs}
\definecolor{myyellow}{cmyk}{0,0.02,0.23,0.01}

\DeclareMathOperator{\supp}{supp}
\DeclareMathOperator*{\argmax}{arg\,max}
\DeclareMathOperator*{\argmin}{arg\,min}

\newcommand{\B}{\mathbf{B}}

\newcommand{\cL}{\mathcal{L}}
\newcommand{\rL}{\mathrm{L}}

\renewcommand{\epsilon}{\varepsilon}
\newcommand{\ot}{\leftarrow}
\newcommand{\conv}{\mathop{\mathrm{conv}}}
\renewcommand{\mid}{:}
\renewcommand{\bar}[1]{\mkern 1.5mu\overline{\mkern-1.5mu#1\mkern-1.5mu}\mkern 1.5mu}
\renewcommand{\underbar}[1]{\mkern 1.5mu\underline{\mkern-1.5mu#1\mkern-1.5mu}\mkern 1.5mu}

\title{Fair Allocation with Binary Valuations \mbox{for Mixed Divisible and Indivisible Goods}}
\author[1]{Yasushi Kawase}
\affil{University of Tokyo}
\author[2]{Koichi Nishimura}
\affil{CRESCO LTD.}
\author[3]{Hanna Sumita}
\affil{Tokyo Institute of Technology}
\date{}
\begin{document}
\maketitle

\begin{abstract}
The fair allocation of mixed goods, consisting of both divisible and indivisible goods, has been a prominent topic of study in economics and computer science. We define an allocation as fair if its utility vector minimizes a symmetric strictly convex function. This fairness criterion includes standard ones such as maximum egalitarian social welfare and maximum Nash social welfare. We address the problem of minimizing a given symmetric strictly convex function when agents have binary valuations. If only divisible goods or only indivisible goods exist, the problem is known to be solvable in polynomial time. In this paper, firstly, we demonstrate that the problem is NP-hard even when all indivisible goods are identical. This NP-hardness is established even for maximizing egalitarian social welfare or Nash social welfare. Secondly, we provide a polynomial-time algorithm for the problem when all divisible goods are identical. To accomplish these, we exploit the proximity structure inherent in the problem. This provides theoretically important insights into the hybrid domain of convex optimization that incorporates both discrete and continuous aspects.
\end{abstract}

\section{Introduction}
The fair allocation of resources is a fundamental problem that arises in various real-life scenarios, such as dividing tasks among team members, distributing inheritance, or allocating land.
In many of these situations, the resources can be of mixed types: divisible and indivisible. 
The investigation of fair allocation mechanisms for mixed goods has thus gained considerable interest recently~\cite{Bei2021,Bhaskar+2021,Caragiannis2019,LiLLT2023,Lu+2023,NS2023,liu2023mixed}.

In this paper, we study the following fair allocation problem of mixed goods.
We are given a set $N=\{1,2,\dots,n\}$ of $n$ agents and a set $E=C\cup M$, where $C$ and $M$ are the sets of divisible and indivisible goods, respectively. 
We are also given a \emph{symmetric strictly convex} function $\Phi\colon \mathbb{R}^N \to \mathbb{R}$.
Each agent $i$ has a binary valuation $v_{ie} \in \{0,1\}$ for each good $e$, meaning that for each good, each agent either desires it or not.
An allocation is a matrix $\pi \in [0,1]^{N \times E}$ such that $\pi_{ie}\in\{0,1\}$ for all $i\in N$ and $e\in M$.
Throughout this paper, we only consider utilitarian optimum allocations, that is, $\pi_{ie}>0$ only if $v_{ie}=1$.
The entry $\pi_{ie}$ means the allocated amount of good $e$ to agent $i$.
Agents have additive utility, and the utility of agent $i$ in allocation $\pi$ is $\pi_i(E)=\sum_{e\in E} v_{ie}\pi_{ie}$. 
For an allocation $\pi$, a vector $z=(\pi_1(E),\dots,\pi_n(E))$ is called a utility vector of $\pi$.
We say that an allocation $\pi$ is \emph{$\Phi$-fair} if its utility vector $z$ minimizes $\Phi(z)$ among allocations.
The goal of our problem is to find a $\Phi$-fair allocation.

This problem is a generalization of maximizing fairness measures such as the \emph{Nash welfare} and the \emph{egalitarian social welfare} (max-min fairness), as we will see in Section~\ref{sec:preliminaries}.
Roughly speaking, the former notion is defined as the product of positive utilities $\prod_{i\in N\colon z_i>0} z_i$, and the latter is the minimum utility among agents $\min_{i\in N} z_i$.

There is a vast body of literature on the allocation of goods in cases where only divisible or indivisible goods are present.
In such cases, it suffices to find a feasible minimizer (utility vector) of $\Phi$ because we can find an allocation achieving the utilities by solving the maximum flow problem (see also Section~\ref{sec:preliminaries}).
In the continuous case, where there are only divisible goods, the set of possible utility vectors forms an \emph{integral base-polyhedron}\footnote{An integral base-polyhedron is a polyhedron that can be represented as $\{x\in\mathbb{R}^N\mid x(N)=f(N)\text{ and }x(X)\le f(X)~(\forall X\subseteq N)\}$ by an integer-valued submodular function $f\colon 2^N\to\mathbb{Z}$ with $f(\emptyset)=0$. See \Cref{subsec:definitions} for details.}.
It is known that an integral base-polyhedron has a common unique minimizer independent of $\Phi$, and the minimizer can be characterized by a structure called the \emph{principal partition}~\cite{fujishige1980,Maruyama1978} (see \Cref{subsec:partition} for the definition and details).
Since there exist polynomial-time algorithms to find an allocation achieving the maximum Nash welfare (MNW)~\cite{Orlin2010,Vegh2016}, a minimizer of $\Phi$ can be found in polynomial time.
In the discrete case, where there are only indivisible goods, the set of possible utility vectors forms an \emph{M-convex set}, which is the set of integral vectors in an integral base-polyhedron. 
It is known that a minimizer of $\Phi$ on an M-convex set can be characterized by the \emph{canonical partition}~\cite{FM2022a}, which is an aggregation of the principal partition.
Additionally, the set of minimizers of a symmetric strictly convex function does not depend on the function~\cite{FM2022a} and a minimizer (utility vector) can be found in polynomial time~\cite{FM2022b}.
Furthermore, a \emph{proximity theorem} has been established~\cite{FMdecmin2}. 
This theorem states that a minimizer of $\Phi$ in an M-convex set lies within a unit hypercube that contains the minimizer in the corresponding integral base-polyhedron.

Our problem is regarded as the hybrid of continuous and discrete optimization problems; finding an allocation whose utility vector $z$ minimizes $\Phi(z)$ under the constraint that $z$ belongs to the Minkowski sum of an integral base-polyhedron and an M-convex set.
Unfortunately, the hybrid case may not inherit nice properties of continuous or discrete cases.
When there are both divisible and indivisible goods, the set of possible utility vectors is not necessarily an integral base-polyhedron or an M-convex set. 
It also does not work to find allocations of divisible and indivisible goods separately and combine them.
We can observe these from the following example.
\begin{example}\label{ex:neither}
Suppose that there are one indivisible good $g$, one divisible good $c$, and three agents who desire both goods.
Let $\Phi(z)= - z_1 \cdot z_2 \cdot z_3$.
In this case, allocating $c$ equally to the three agents minimizes $\Phi$ when considering only $c$.
However, allocating $g$ to agent $1$ and $c$ to agents $2$ and $3$ equally minimizes $\Phi$ for mixed goods.
In addition, the set of possible utility vectors is not an M-convex set since it contains fractional utility vectors and not an integral base-polyhedron since it is not convex (see \Cref{fig:neither}). 
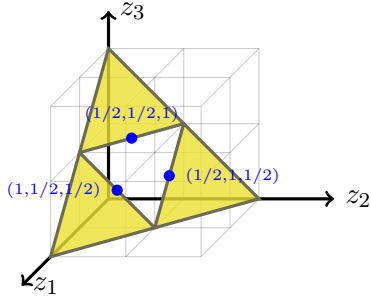
\begin{figure}[htbp]
\centering
\begin{tikzpicture}
		[cube/.style={thin,black,opacity=.2},
			util/.style={very thick,yellow!30!black,fill=gold,fill opacity=.8},
			axis/.style={->,black,very thick},
            p/.style={circle,fill=blue,inner sep=1.5pt}]
	\draw[axis] (0,0,0) -- (0,0,3) node[anchor=west]{$z_1$};
	\draw[axis] (0,0,0) -- (3,0,0) node[anchor=west]{$z_2$};
	\draw[axis] (0,0,0) -- (0,2.5,0) node[anchor=west]{$z_3$};
    \foreach \x in {0,1,2}{
        \foreach \y in {0,1,2}{
            \draw[cube] (\x,\y,0) -- (\x,\y,2);
            \draw[cube] (\x,0,\y) -- (\x,2,\y);
            \draw[cube] (0,\x,\y) -- (2,\x,\y);
        }
    }
    \draw[util] (2,0,0) -- (1,1,0) -- (1,0,1) -- cycle;
    \draw[util] (0,2,0) -- (1,1,0) -- (0,1,1) -- cycle;
    \draw[util] (0,0,2) -- (1,0,1) -- (0,1,1) -- cycle;
    \node[p,label={[blue]right:\tiny (1/2,1,1/2)}] at (1,1/2,1/2) {};
    \node[p,label={[blue]left:\tiny (1,1/2,1/2)}] at (1/2,1/2,1) {};
    \node[p,label={[blue]above:\tiny (1/2,1/2,1)}] at (1/2,1,1/2) {};
\end{tikzpicture}
\caption{The set of possible utility vectors in Example~\ref{ex:neither}. The blue points are minimizers of $\Phi$.}\label{fig:neither}
\end{figure}
\end{example}
Therefore, existing results are not applicable to our problem.
In addition, even if we obtain a minimizer of $\Phi$, it is not clear how to derive an allocation that achieves the utility vector.
Thus, investigating a structure and establishing a polynomial-time algorithm for the hybrid problem is much more challenging.

\subsection{Our contribution}
First, we investigate the structure of the fair allocation of mixed goods.
Unfortunately, certain nice properties that hold in the divisible case or the indivisible case no longer hold in the mixed case, as we will see in \Cref{ex:decmin=SOS,ex:incmax=SOS}.
Nevertheless, we demonstrate that a proximity theorem (\Cref{thm:proximity}) holds in the mixed case.
The proximity theorem states that a minimizer of $\Phi$ for mixed goods lies within a unit box containing the minimizer of $\Phi$ assuming that every good is divisible.
This generalizes a proximity theorem for the discrete case~\cite{FMdecmin2}.
It is worth mentioning that our theorem holds even when each agent evaluates indivisible goods with a matroid rank function and divisible goods with the concave closure of a matroid rank function.
By the proximity theorem, an optimal integral solution is a good approximation solution for mixed goods, in the sense that the $\ell_\infty$ distance from the optimal solution is at most 1.

Note that our proof does not depend on the fair allocation setting.
Therefore, these properties are fundamental even for the hybrid of continuous and discrete optimization.
Our proximity theorem cannot be proven by directly applying the proof to the discrete-only or the continuous-only case since the structure is different. 
To prove it, we introduce new exchange properties. 
The main idea is to see an optimality criterion in terms of an exchange graph.
Unlike the discrete case, the situation of the mixed case is much complicated.
Our main contribution to this paper is to provide an elaborate analysis of the graph.
As a consequence, we can show that the hybrid case still retains a structure of the canonical partition (\Cref{lem:decomposition}). 

Next, by utilizing the proximity theorem, we show that our problem is NP-hard even when indivisible goods are \emph{identical}, i.e., for each agent $i$, either $v_{ie}=1$ ($\forall e\in M$) or $v_{ie}=0$ ($\forall e \in M$).
\begin{theorem}\label{thm:intro-NP-hard}
For any fixed symmetric strictly convex function $\Phi$, finding a $\Phi$-fair allocation is NP-hard even when indivisible goods are identical.
\end{theorem}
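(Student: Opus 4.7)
The plan is to reduce from \textsc{Partition}: given positive integers $a_1,\dots,a_k$ with $\sum_j a_j=2S$, decide whether some subset sums to $S$. From a \textsc{Partition} instance I would construct in polynomial time an instance of our problem (with identical indivisible goods) whose $\Phi$-fair value encodes the \textsc{Partition} answer. The gadget would use a constant number of ``main'' agents (the only agents valuing the identical indivisible goods) together with $k$ auxiliary ``item'' agents $u_1,\dots,u_k$. For each $j$ I would introduce a divisible good $c_j$ of amount proportional to $a_j$ whose set of interested agents includes $u_j$ and the main agents, arranged so that the integer rounding of indivisible goods among main agents interacts with the residual shares of each $c_j$ in a \textsc{Partition}-like way. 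The number of identical indivisible goods is tuned so that, in the continuous relaxation, the unique optimum is the perfectly balanced vector $(U,U,\dots,U)$.

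For the analysis, I would use the proximity theorem (\Cref{thm:proximity}) together with the decomposition structure (\Cref{lem:decomposition}) to conclude that any $\Phi$-fair integer utility vector lies in the $\ell_\infty$-unit box around $(U,\dots,U)$, so the number of candidate integer indivisible allocations is severely restricted. Since $\Phi$ is strictly convex and symmetric, Jensen's inequality forces its unique minimizer over any convex set containing $(U,\dots,U)$ to be the balanced vector itself. The gadget is set up so that reaching $(U,\dots,U)$ in the integer problem is feasible if and only if the numbers $a_j$ can be split into two subsets of sum $S$; hence if \textsc{Partition} is YES the optimum of $\Phi$ equals $\Phi(U,\dots,U)$, while if \textsc{Partition} is NO every integer-feasible vector in the proximity box strictly majorizes $(U,\dots,U)$ in the Schur order and gives a strictly larger value of $\Phi$. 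Therefore computing the $\Phi$-fair value decides \textsc{Partition}, yielding NP-hardness for the fixed $\Phi$; since the argument uses only strict convexity and symmetry of $\Phi$, it applies uniformly to every such $\Phi$.

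The main obstacle is the gadget design. A naive construction fails because divisible goods are too flexible: the continuous flexibility in splitting each $c_j$ tends to absorb any integer rounding of the indivisibles, so multiple indivisible allocations achieve the balanced vector and no \textsc{Partition} information is forced out. Overcoming this requires using auxiliary item agents whose balanced-target utility saturates the capacity of each $c_j$, so that the residual of $c_j$ available to the main agents is pinned to exactly $a_j-U$; only then does equalizing the main agents' utilities reduce to combining the $a_j$'s into two equal-sum subsets. Verifying this pinning rigorously, and checking that the \textsc{Partition}-encoding survives the proximity restriction (so the optimum is attained at the balanced vector exactly when a partition exists), is the technically delicate step of the proof.
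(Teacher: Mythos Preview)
Your reduction from \textsc{Partition} does not work in this model, and the gap is structural rather than technical. Recall the setup: every good $e\in E$ has total mass $1$ (the constraint is $\sum_{i}\pi_{ie}=1$) and valuations are binary, $v_{ie}\in\{0,1\}$. There is no parameter you can use to make a divisible good $c_j$ carry ``amount proportional to $a_j$''. The only way to encode a magnitude $a_j$ is to create $a_j$ copies of the good, but \textsc{Partition} is NP-hard only when the $a_j$ are given in binary, so this blows up the instance exponentially. A unary \textsc{Partition} is solvable in polynomial time, so the reduction, as stated, cannot establish hardness.

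There is a second, independent problem: you stipulate a \emph{constant} number of main agents, and these are the only agents who value the identical indivisible goods. Then an allocation of the indivisible goods is completely described by the tuple $(m_1,\dots,m_c)$ of how many each main agent receives, and there are at most $(|M|+1)^{c}$ such tuples. For each fixed tuple, the residual problem is purely divisible and solvable in polynomial time (it is convex minimization over an integral base-polyhedron). Hence the whole instance would be solvable in polynomial time, contradicting the intended conclusion. No amount of pinning via auxiliary agents can rescue this: the integer part of the search space has polynomial size by construction.

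The paper avoids both issues by reducing from 3-Dimensional Matching, which is strongly NP-hard and needs no numerical magnitudes. In that reduction the agents who value the identical indivisible goods are the $m$ hyperedge-agents (a non-constant number), and the divisible gadgets (all of unit mass, with carefully chosen binary valuations) force the indivisible recipients to form a perfect matching. The proximity theorem is used exactly as you suggest, but the combinatorics lives in \emph{which} of the $m$ agents receive a unit, not in how large a share they receive.
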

We also prove that computing an MNW allocation and an optimal egalitarian allocation are both NP-hard. These results highlight the difficulty of the mixed goods case because the problems can be solved in polynomial time when there are only divisible goods or only indivisible goods.

Finally, we show the following tractability when divisible goods are identical.
\begin{theorem}\label{thm:intro-alg}
Let $\Phi$ be a symmetric strictly convex function.
There exists a polynomial-time algorithm that finds a $\Phi$-fair allocation if all the divisible goods are identical.
\end{theorem}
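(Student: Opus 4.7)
The plan is to combine the proximity theorem (\Cref{thm:proximity}) with the observation that, when all divisible goods are identical, the optimal continuous allocation of divisible goods, given any fixed integer allocation $x$ of the indivisible goods, is obtained by \emph{water-filling} on the set $D:=\{i\in N:v_{ic}=1\}$ of agents who value them. First I would compute in polynomial time the continuous optimum $z^\circ$ of the fully-divisible relaxation (treating every $e\in M$ as divisible), using existing algorithms such as \cite{Orlin2010,Vegh2016}. By the proximity theorem the sought mixed-optimum utility vector $z^*$ must satisfy $\|z^*-z^\circ\|_\infty\le 1$; in particular, $x_i^*=z_i^*$ lies in a set of at most three integer values for every $i\notin D$.

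Second, using symmetric strict convexity of $\Phi$, I would show that $z^*$ admits the following water-filling decomposition: there exist $\theta^*\in\mathbb{R}$ and $S^*\subseteq D$ with $z_i^*=\theta^*$ for $i\in S^*$ and $z_i^*=x_i^*\in\mathbb{Z}$ for $i\in N\setminus S^*$, where $\theta^*$ is pinned down by the supply constraint $\sum_{i\in S^*}(\theta^*-x_i^*)=|C|$. Proximity then forces $\theta^*\in[z_i^\circ-1,z_i^\circ+1]$ for every $i\in S^*$, so that $|z_i^\circ-z_j^\circ|\le 2$ for any $i,j\in S^*$. Combined with the canonical-partition structure of the hybrid setting (\Cref{lem:decomposition}), this confines the candidates for the pair $(\theta^*,S^*)$ to a polynomial-size family, enumerable by scanning the values $z_i^\circ$ inside each cell of the partition.

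Third, for each candidate $(\theta,S)$, I would search for an integer indivisible-good utility vector $x$ (in the M-convex set of feasible such vectors) that minimizes $\Phi(\bar z)$, where $\bar z_i:=\theta$ for $i\in S$ and $\bar z_i:=x_i$ for $i\in N\setminus S$, subject to $x_i\le\theta$ for $i\in S$, $x_i\ge\theta$ for $i\in D\setminus S$, and $x$ lying in the $\ell_\infty$-proximity box around $z^\circ$. This is a bounded M-convex function minimization, solvable in polynomial time by the algorithm of \cite{FM2022b}. Returning the best solution over all candidates produces a $\Phi$-fair allocation. I expect the main obstacle to be verifying that (i) the inner problem truly retains M-convexity after the partial substitution $z_i=\theta$ on $S$ and (ii) the candidate enumeration for $(\theta^*,S^*)$ remains polynomial when $S^*$ straddles several cells of the canonical partition; both points rely crucially on the structural results (proximity plus \Cref{lem:decomposition}) established earlier in the paper.
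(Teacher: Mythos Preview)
Your high-level plan (proximity plus water-filling plus enumeration) is in the right spirit, but the two ``expected obstacles'' you flag at the end are not peripheral technicalities: they are precisely where your argument currently breaks down, and the paper resolves them by a different route.

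First, the enumeration of $(\theta^*,S^*)$ is not polynomial as you describe it. You argue that proximity pins $\theta^*$ to an interval and forces $|z_i^\circ-z_j^\circ|\le 2$ on $S^*$, but $S^*$ is a \emph{subset} of $D$, and agents inside the relevant canonical cell $N_{j^*}$ are not interchangeable: they have different valuations on the indivisible goods, so the feasibility of a given $x\in\ddot{\B}_M$ depends on which agents are placed in $S^*$, not merely on $|S^*|$. The paper never enumerates $S^*$. It first proves a sharp structural lemma (\Cref{lem:optimal alloc}) showing that, up to value-equivalence, the optimal utility vector on $N_{j^*}$ is determined by two \emph{scalars}: $k=|\{i\in N_{j^*}^+:\pi^*_i(M_{j^*})=\beta_{j^*}\}|$ and $\ell=\sum_{i\in N_{j^*}^+}\pi^*_i(M_{j^*})$. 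The pair $(k,\ell)$ ranges over $\{0,\dots,|N_{j^*}^+|\}\times\{0,\dots,|M_{j^*}|\}$, giving $O(nm)$ candidates.

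Second, for each candidate the paper does \emph{not} solve an M-convex minimization with a partially frozen objective. Your inner problem asks to minimize $\Phi(\bar z)$ over $x\in\ddot{\B}_M$ with box constraints depending on $S$; it is not clear that the resulting feasible region (which simultaneously fixes $\sum_{i\in S}x_i$ and bounds $x_i$ on both sides of $\theta$) is an M-convex set, nor that the partially substituted objective is of the form handled by \cite{FM2022b}. The paper instead reduces, for each $(k,\ell)$, to a pure \emph{feasibility} question: does there exist an allocation of $M_{j^*}$ satisfying four explicit combinatorial constraints (a)--(d)? This is cast as an integral submodular-flow problem (\Cref{lem:submodular flow}) and solved in polynomial time. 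Divisible goods are then water-filled, and the best of the $O(nm)$ resulting allocations is returned. The value of $\Phi$ enters only in this final comparison, so no $\Phi$-dependent optimization over an M-convex set is needed.

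In short, to make your outline into a proof you would need analogues of \Cref{lem:optimal alloc} (to collapse the subset $S^*$ to scalar parameters) and \Cref{lem:submodular flow} (to replace the inner minimization by a tractable feasibility check); these are the substantive steps in the paper's argument.
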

A key tool to construct our algorithm is the canonical partition for the mixed goods.
By applying it, we can partition goods as $E_1, \dots, E_q$ and agents as $N_1,\dots, N_q$ so that goods in $E_j$ are allocated to agents in $N_j$ in a $\Phi$-fair allocation (\Cref{thm:structure fair}).
Thanks to this structure, a minimizer of $\Phi$ can be found by independently solving the subproblems of assigning $E_j$ to $N_j$ for $j=1,2,\dots,q$.
In each subproblem, the utility of every agent is almost the same.
However, unlike the continuous or discrete case, an optimal allocation depends on $\Phi$~(see \Cref{ex:decmin=SOS,ex:incmax=SOS}). 
Thus, it is not easy to obtain a full characterization of minimizers.

\subsection{Related work}
For the fair allocation of divisible homogeneous goods with additive valuations (not restricted to binary), an MNW allocation corresponds to a market equilibrium of a special case of the Fisher's market model (see, e.g.,~\cite{textbook}). 
Moreover, an MNW allocation is \emph{envy-free} (EF)~\cite{Segal-Halevi2019,Varian}, that is, no agent envies any other agent.
It is known that this problem can be solved in strongly polynomial-time~\cite{Orlin2010,Vegh2016}.

For the fair allocation of indivisible goods with additive valuations, Caragiannis et al.~\cite{Caragiannis2019} proved that an MNW allocation is \emph{envy-free up to one good} (EF1), that is, each agent $i$ does not envy another agent $j$ if some indivisible good is removed from the bundle of agent $j$.
Since computing an MNW allocation is hard in general~\cite{Lee2017}, there is a series of research to design an approximation algorithm~\cite{Anari2018,Cole2017,Cole2015,Cole2018,Garg2018}.
Benabbou et al.~\cite{Benabbou2021} proved that the set of MNW allocations coincides with that of minimizers of any symmetric strictly convex function, even when the utility of each agent is represented by a matroid rank function.\footnote{Note that \Cref{thm:proximity} is an extension of this result. Specifically, we construct an ``augmenting'' path of \cite[Section 3.2]{Benabbou2021} for a hybrid situation.}
Harvey et al.~\cite{HLLT2006} proposed efficient algorithms for computing an allocation that minimizes a certain symmetric strictly convex function.
When agents' valuations are matroid rank function, Goko et al.~\cite{Goko+2022} presented a truthful and EF mechanism by subsidizing each agent with at most $1$. 
Note that subsidies can be viewed as divisible goods that every agent desires.
When agents have binary additive valuations, an MNW allocation can be computed in polynomial time~\cite{Barman2018,Darman2015}.
Truthful mechanisms to find an MNW allocation are also proposed~\cite{halpern2020fair,babaioff2020fair}.

Fair allocation with a mixture of divisible and indivisible goods has recently gained attention and has been the subject of research.
Bei et al.~\cite{Bei2021} introduced a fairness notion called \emph{envy-freeness for mixed goods} (EFM) as a generalization of EF and EF1 notions. 
Caragiannis et al.~\cite{Caragiannis2019} mentioned that an MNW allocation is \emph{envy-free up to one good for mixed goods} (EF1M), which is a relaxation of EFM.
Very recently, Li et al.~\cite{LiLLT2023} proposed a truthful mechanism that outputs an EFM allocation for the case where agents have binary additive valuations on indivisible goods and a common valuation on a single divisible good (e.g., money).
They also showed that their mechanism runs in polynomial time, and its output achieves MNW.
We remark that their algorithm does not work in our problem even when divisible goods are identical because we allow some agents to have value $0$ on them. 
For more details, see a survey paper by Liu et al.~\cite{liu2023mixed}.

The (integral) base-polyhedron has been studied in the theory of matroids and submodular functions~\cite{fujishige2005}.
The concept of M-convex sets was introduced by Murota~\cite{Murota1998} and is defined as a set of integral vectors satisfying certain exchange axioms.
\emph{Discrete convex analysis}~\cite{Murota:DCA} is a framework of convex analysis in discrete settings, including M-convexity.

The concepts of continuous/discrete hybrid convexity have been proposed by Takamatsu et al.~\cite{THM2004} and Moriguchi et al.~\cite{MHM2007}.
In particular, Moriguchi et al.~\cite{MHM2007} provided an optimality criterion for an integral polyhedral hybrid M-convex function minimization.
However, the functions treated in the present paper are hybrid M-convex functions that are not necessarily integral polyhedral.

\subsection{Organization}
The rest of this paper is organized as follows. 
Firstly, the formal definition of our problem is provided in \Cref{sec:preliminaries}. 
Next, in \Cref{sec:M-convex}, we introduce integral base-polyhedra and M-convex sets. We formulate our fair allocation problem with these concepts.
Then, in \Cref{sec:structure}, we present a proximity structure that aids in understanding the problem. 
\Cref{sec:identical-divisible} provides a polynomial-time algorithm for the case where all divisible goods are identical.
\Cref{sec:identical-indivisible} proves NP-hardness even when all indivisible goods are identical.
We defer some proofs to Appendix.

\section{Preliminaries}\label{sec:preliminaries}
For $k\in\mathbb{N}$, we denote $[k]=\{1,2,\dots,k\}$.
Let $N=[n]$ represent the set of $n$ agents. We have two types of goods: $M=\{g_1,g_2,\dots,g_m\}$ represents the set of indivisible goods, and $C=\{c_1,c_2,\dots,c_r\}$ denotes the set of homogeneous divisible goods, that is, the valuation for a piece of a good is proportional to its fraction.
The set of all goods is denoted by $E = M \cup C$. 
Let $v_{ie}$ be the valuation of good $e\in E$ for agent $i\in N$.
Throughout this paper, we assume that agents have binary valuations, that is, the valuation $v_{ie}$ for the whole of good $e$ is either $0$ or $1$ for all $i\in N$ and $e\in E$.
An instance of the fair allocation we deal with in this paper is described as $(N,M,C,v)$.
Without loss of generality, we assume that, for any $e\in E$, there exists $i\in N$ such that $v_{ie}=1$.

A \emph{relaxed allocation} is defined as a matrix $\pi\in[0,1]^{N\times E}$ that satisfies
(i) $\sum_{i\in N}\pi_{ie}=1$ for all $e\in E$ and
(ii) $\pi_{ie}=0$ for any $i\in N$ and $e\in E$ with $v_{ie}=0$.
In a relaxed allocation $\pi$, each agent $i$ receives each good $e$ in the proportion of $\pi_{ie}$. Relaxed allocations treat indivisible goods as divisible.
A relaxed allocation $\pi$ is an \emph{allocation} if it additionally satisfies $\pi_{ie}\in\{0,1\}$ for all $i\in N$ and $e\in M$.
A relaxed allocation $\pi$ is an \emph{integral allocation} if it additionally satisfies $\pi_{ie}\in\{0,1\}$ for all $i\in N$ and $e\in E$.
For an allocation $\pi$, an agent $i\in N$, and a subset of goods $E'\subseteq E$, 
let $\pi_i(E')=\sum_{e\in E'}\pi_{ie}$, which is the valuation of agent $i$'s bundle from $E'$.
For an allocation $\pi$, the utility of agent $i\in N$ is defined as $\pi_i(E)$.
Our goal is to find a fair allocation.
As fairness measures, we employ $\Phi$-fairness defined below.

For an allocation $\pi$, let $\pi(E)$ be the utility vector $(\pi_1(E),\dots,\pi_n(E))$. 
For a utility vector $x$, let $x^\downarrow$ be the vector obtained from $x$ by rearranging its components in the decreasing order.
We call two vectors $x,y\in\mathbb{R}^N$ \emph{value-equivalent} if $x^\downarrow=y^\downarrow$.
A vector $x\in\mathbb{R}^N$ is \emph{decreasingly smaller} than a vector $y\in\mathbb{R}^N$ if $x^\downarrow$ is lexicographically smaller than $y^\downarrow$ (i.e., $x^\downarrow_1 < y^\downarrow_1$, or 
$x^\downarrow_1 = y^\downarrow_1$ and
$x^\downarrow_2 < y^\downarrow_2$, etc).
An allocation $\pi$ is called \emph{decreasingly minimal} (\emph{dec-min}, for short) if $\pi(E)$ is decreasingly smaller than or value-equivalent to $\pi'(E)$ for every allocation $\pi'$.
In other words, an allocation $\pi$ is dec-min if its largest utility is as small as possible, within this, its second largest utility (with the same or smaller value than the largest one) is as small as possible, and so on.
Similarly, an allocation $\pi$ is called \emph{increasingly maximal} (\emph{inc-max}, for short) if its smallest utility is as large as possible, within this, its second smallest utility is as large as possible, and so on.

We say that a function $\Phi\colon \mathbb{R}^N\to\mathbb{R}$ is \emph{symmetric} if 
\begin{align}
\Phi(z_1,z_2,\dots,z_n)=\Phi(z_{\sigma(1)},z_{\sigma(2)},\dots,z_{\sigma(n)})
\end{align}
for all permutations $\sigma$ of $(1,2,\dots,n)$.
We say that a function $\Phi\colon \mathbb{R}^N\to\mathbb{R}$ is strictly \emph{convex} if 
\begin{align}
\lambda\Phi(z)+(1-\lambda)\Phi(z')> \Phi(\lambda z+(1-\lambda)z')
\end{align}
for all $z,z'\in\mathbb{R}^N$ and $\lambda\in(0,1)$.
A typical example of symmetric strictly convex functions is the square-sum $\Phi(z)=\sum_{i\in N}z_i^2$.
In general, for $z\in\mathbb{R}^N$ and $i,j\in N$ with $z_i>z_j$, we have $\Phi(z-\epsilon(\chi_i-\chi_j))<\Phi(z)$ for any $\epsilon\in(0,z_i-z_j)$ because
\begin{align}\begin{split}
    \Phi(z)
    &=\lambda\Phi(z-(z_i-z_j)(\chi_i-\chi_j))+(1-\lambda)\Phi(z)\\
    &>\Phi(\lambda(z-(z_i-z_j)(\chi_i-\chi_j))+(1-\lambda)z)
    =\Phi(z-\lambda(z_i-z_j)(\chi_i-\chi_j)) \label{eq:ssc}
\end{split}\end{align}
for any $\lambda\in(0,1)$.
Here, $\chi_i$ represents a unit vector where only the $i$th component is equal to 1, while all other components are equal to 0.
An allocation $\pi$ is called \emph{$\Phi$-fair} if the utility vector $(\pi_1(E),\dots,\pi_n(E))$ minimizes $\Phi$ among allocations.

Our problem is to find a $\Phi$-fair allocation. 
When there are only divisible goods or only indivisible goods, it suffices to find a utility vector $z^*$ that minimizes $\Phi$, because once we obtain $z^*$, a $\Phi$-fair allocation is obtained from the maximum flow problem as shown below.
Let $G=(\{s,t\}\cup E\cup N, A)$ be a directed graph, where $A=\{(s,e) \mid e\in E\}\cup \{(e,i) \mid v_{i,e}=1\} \cup \{(i,t)\mid i\in N\}$.
The capacity $c$ is defined as $c(s,e)=1$ for $e\in E$, $c(e,i)=1$ for $(e,i)\in A\cap (E\times N)$, and $c(i,t)=z^*_i$ for $i\in N$.
For a maximum flow $f$ from $s$ to $t$, 
let $\pi$ be the allocation defined as $\pi_{ie}=f(e,i)$ ($e\in E, i\in N$).
Then, we can see that $\pi_i(E)=z^*_i$ for each agent $i\in N$.
Since we can find an integral maximum flow in the indivisible goods case, $\pi$ can be an integral allocation.
However, when both types of goods exist, it is not straightforward to construct an allocation from a given utility vector.
Indeed, checking the existence of an allocation achieving a given utility vector is NP-hard, as shown in Appendix (see \Cref{thm:util-hard}).

As we will see in the next section, the set of possible utility vectors forms an integral base-polyhedron (or M-convex set) if there are only divisible (or indivisible) goods.
By using the properties known for integral base-polyhedrons (M-convex sets), we can prove that the dec-min allocations coincide with the inc-max allocations and the $\Phi$-fair allocations for any symmetric strictly convex function $\Phi$.
\begin{theorem}[Fujishige~\cite{fujishige1980} and Maruyama~\cite{Maruyama1978}\footnotemark]\label{thm:divisible}
If there are only divisible goods (i.e., $M=\emptyset$), the dec-min allocations have the same utility vector up to value equivalence. 
Also, the utility vector is the unique utility vector for inc-max allocations and $\Phi$-fair allocations for any symmetric strictly convex functions $\Phi$.
\end{theorem}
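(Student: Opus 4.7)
The plan is to leverage the convexity of the set of attainable utility vectors (which is an integer base-polyhedron when $M=\emptyset$) together with the strict convexity of $\Phi$, and then invoke the exchange axiom of base-polyhedra to identify the $\Phi$-minimizer with the dec-min and inc-max utility vectors.

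First, I would observe that when $M=\emptyset$, the set $B = \{\pi(E) : \pi \text{ is a relaxed allocation}\} \subseteq \RR^N_{\ge 0}$ coincides with the integer base-polyhedron
\[
\{z \in \RR^N_{\ge 0} : z(S) \le f(S)\ (\forall S\subseteq N),\ z(N) = f(N)\},
\]
where $f(S) := |\{e \in C : v_{ie}=1 \text{ for some } i \in S\}|$ is submodular. The containment $\pi(E)\in B$ is immediate from the binarity of the valuations, and the reverse direction follows from the max-flow construction described in the paragraph preceding the theorem. In particular, $B$ is convex. Uniqueness of the $\Phi$-fair utility vector then follows at once from strict convexity: if $z^1 \ne z^2$ in $B$ both minimized $\Phi$, then $(z^1+z^2)/2 \in B$ and strict convexity would yield $\Phi((z^1+z^2)/2) < \Phi(z^1)$, a contradiction.

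To identify this unique minimizer $z^*$ with the dec-min and inc-max utility vectors, I would use the standard exchange axiom of base-polyhedra: for any $z, z' \in B$ and any $i \in N$ with $z_i > z'_i$, there exist $j \in N$ with $z_j < z'_j$ and $\delta > 0$ such that $z - \delta(\chi_i - \chi_j) \in B$. Combined with \eqref{eq:ssc}, which shows that any flattening transfer $z\mapsto z - \delta(\chi_i-\chi_j)$ with $z_i > z_j$ strictly decreases $\Phi$, one obtains that at $z^*$ no flattening transfer along any pair $(i,j)$ with $z^*_i > z^*_j$ can be feasible. Iterating the exchange axiom then shows that any $z\in B$ can be steered toward $z^*$ by a sequence of flattening transfers that stay in $B$ and weakly lex-decrease the sorted vector; hence $(z^*)^\downarrow$ is lex-minimum over $B$, so $z^*$ is dec-min. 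A symmetric argument operating at the bottom of the sorted vector shows that $z^*$ is also inc-max. Finally, value-equivalence of the utility vectors of dec-min allocations follows from convexity: if two distinct vectors in $B$ shared the same sorted form, their midpoint would be lex-strictly smaller in sorted order (averaging is doubly stochastic and strictly majorizes unless the vectors agree), contradicting dec-minimality.

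The main obstacle is carrying out the iterated-exchange argument rigorously: given $z\ne z^*$ in $B$, one must exhibit a flattening transfer at $z$ that stays in $B$ and weakly lex-reduces the sorted vector, and then show that the process terminates precisely at $z^*$. This is the technical heart of the classical Fujishige–Maruyama principal partition theorem, so a rigorous proof may appeal to \cite{fujishige1980,Maruyama1978} rather than reproducing the full iteration.
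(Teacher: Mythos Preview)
The paper does not supply its own proof of this theorem: it is stated as a classical result with citations to Fujishige~\cite{fujishige1980} and Maruyama~\cite{Maruyama1978} (and the footnote to Nagano), and the structural content is later restated as \Cref{thm:principal} on the principal partition, again without proof. So there is nothing to compare against beyond the fact that the paper treats this as background.

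Your outline is largely sound. The convexity-plus-strict-convexity argument for uniqueness of the $\Phi$-minimizer is correct, as is the midpoint argument showing that two distinct value-equivalent dec-min vectors cannot coexist in a convex $B$ (though your phrasing ``strictly majorizes'' has the direction reversed: the midpoint is strictly \emph{majorized by} the endpoints, hence has strictly lex-smaller sorted form). The one substantive gap is exactly where you flag it: the passage from ``no flattening transfer is feasible at $z^*$'' to ``$z^*$ is dec-min (and inc-max).'' Your phrasing of this step---steering an arbitrary $z$ toward $z^*$ by lex-decreasing transfers---is in the wrong direction: knowing that $z$ can be lex-decreased does not show $(z^*)^\downarrow$ is lex-minimal. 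The standard route is instead to argue by contradiction at $z^*$: if some $z\in B$ had $z^\downarrow$ lex-smaller than $(z^*)^\downarrow$, then comparing level sets $\{i:z^*_i\ge\beta\}$ and using the exchange axiom (equivalently, tightness of level sets, which is precisely the principal-partition structure of \Cref{thm:principal}) yields a feasible flattening transfer at $z^*$, contradicting \eqref{eq:ssc}. You correctly identify this as the technical heart of Fujishige--Maruyama, and deferring to those references is exactly what the paper does.
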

\footnotetext{See also Nagano~\cite[Corollary 13]{nagano2007}.}

\begin{theorem}[Frank and Murota~\cite{FM2022a}]\label{thm:indivisible}
If there are only indivisible goods (i.e., $C=\emptyset$), the set of utility vectors of the dec-min allocations forms an M-convex set. 
The set is identical for the set of utility vectors of inc-max allocations or $\Phi$-fair allocations for any symmetric strictly convex functions $\Phi$.
\end{theorem}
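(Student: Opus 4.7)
The plan is to reduce the theorem to a local-optimality criterion on an M-convex set which simultaneously characterizes dec-min, inc-max, and $\Phi$-minimality. Since $C=\emptyset$ and valuations are binary, a nonnegative integer vector $z$ is achievable by an integral allocation if and only if it is the agent-side degree sequence of some $\{0,1\}$-matrix $\pi\in\{0,1\}^{N\times M}$ with column sums $1$ and $\pi_{ig}=0$ whenever $v_{ig}=0$. The set $\mathcal{Z}$ of such sequences is exactly the integer points of a transversal base polymatroid, hence an M-convex set; in particular $\mathcal{Z}$ satisfies the simultaneous exchange axiom: for all $x,y\in\mathcal{Z}$ and all $i$ with $x_i>y_i$, there exists $j$ with $x_j<y_j$ such that both $x-\chi_i+\chi_j\in\mathcal{Z}$ and $y+\chi_i-\chi_j\in\mathcal{Z}$.

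The heart of the argument is to prove, purely at the level of an abstract M-convex set, the following criterion: $x\in\mathcal{Z}$ is dec-min iff it is inc-max iff it is a $\Phi$-minimizer for every symmetric strictly convex $\Phi$ iff there is no pair $i,j\in N$ with $x_i\ge x_j+2$ and $x-\chi_i+\chi_j\in\mathcal{Z}$. The forward direction for $\Phi$-minimizers is immediate from the strict convexity inequality \eqref{eq:ssc} already in the excerpt, which yields $\Phi(x-\chi_i+\chi_j)<\Phi(x)$ whenever $x_i\ge x_j+2$. For the converse (local implies global), I would run a minimum-distance argument: supposing $x$ is locally optimal but some $y\in\mathcal{Z}$ has $\Phi(y)<\Phi(x)$, pick such a $y$ minimizing $\|x-y\|_1$, apply the simultaneous exchange axiom to produce $x':=x-\chi_i+\chi_j$ and $y':=y+\chi_i-\chi_j$ in $\mathcal{Z}$ with $\|x'-y'\|_1<\|x-y\|_1$, and then combine the strict-convexity inequality with the hypothesis that no improving $1$-swap exists at $x$ to deduce $\Phi(y')\le \Phi(y)<\Phi(x)$, contradicting the minimality of $\|x-y\|_1$. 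The dec-min and inc-max characterizations are handled by the same machine applied at the first coordinate where $x^\downarrow$ and $y^\downarrow$ disagree.

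The M-convexity of the common locus then follows by checking that the simultaneous exchange axiom restricts to it: for locally optimal $x,y\in\mathcal{Z}$ with $x_i>y_i$, take $j$ as in the axiom and verify that $x'=x-\chi_i+\chi_j$ and $y'=y+\chi_i-\chi_j$ remain locally optimal. The main obstacle is exactly this inheritance step, which demands a careful case analysis on any hypothetical improving swap at $x'$ (splitting on whether the swap involves the exchanged indices $i,j$) and showing that every such swap either contradicts the exchange axiom inside $\mathcal{Z}$ or lifts back to an improving swap at $x$, violating its local optimality. Everything else reduces to bookkeeping around \eqref{eq:ssc} and the exchange axiom, so this case analysis is where the real work of the theorem is concentrated.
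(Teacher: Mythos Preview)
The paper does not give its own proof of this theorem; it is quoted from Frank and Murota and used as background. So there is no in-paper argument to compare against, and I can only comment on your proposal on its own terms.

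Your overall route is correct and is essentially the standard one. The minimum-distance step works exactly as you say once you make explicit the chain of inequalities: for the exchanging pair $i\in\supp^+(x-y)$, $j\in\supp^-(x-y)$ with $x-\chi_i+\chi_j\in\mathcal Z$, local optimality of $x$ gives $x_i\le x_j+1$, and integrality gives $x_i\ge y_i+1$ and $y_j\ge x_j+1$; hence $y_j\ge y_i+1$, which is precisely what you need to conclude $\Phi(y')\le\Phi(y)$ via \eqref{eq:ssc}.

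Where you misjudge the difficulty is the final step. You flag the inheritance of the exchange axiom to the optimal locus as ``the main obstacle'' requiring a case analysis on hypothetical improving swaps at $x'$. In fact it is a one-line consequence of what you have already proved. If $x$ and $y$ are \emph{both} locally optimal and $j$ is produced by the exchange axiom, then applying the non-improvement condition at $y$ as well gives $y_j\le y_i+1$, and the chain
\[
y_j \ \ge\ x_j+1 \ \ge\ x_i \ \ge\ y_i+1 \ \ge\ y_j
\]
collapses to equalities. Thus $x'=x-\chi_i+\chi_j$ merely swaps the $i$th and $j$th coordinates of $x$ (and likewise for $y'$), so $x'$ and $y'$ are value-equivalent to $x$ and $y$ and hence automatically $\Phi$-minimizers. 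No case analysis is needed; conversely, the direct case analysis you sketch (lifting an improving swap at $x'$ back to $x$) does not go through cleanly when the swap avoids $\{i,j\}$, because $x'-\chi_a+\chi_b\in\mathcal Z$ does not by itself imply $x-\chi_a+\chi_b\in\mathcal Z$. Iterating the same collapsed-chain observation also shows all locally optimal points are mutually value-equivalent, which is what closes the equivalence with dec-min and inc-max.
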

However, this is not the case when both types of goods exist (i.e., $M\ne\emptyset$ and $C\ne\emptyset$).
\begin{example}\label{ex:incmax=SOS}
Consider an instance with five agents $N=\{1,2,3,4,5\}$, five indivisible goods $M=\{g_1,g_2,g_3,g_4,g_5\}$, and three divisible goods $C=\{c_1,c_2,c_3\}$.
Suppose that agents $1$, $2$, $3$, and $4$ desire all the goods, but agent $5$ desires only the indivisible goods (see Table~\ref{tbl:incmax=SOS}).
Then, an allocation $\pi$ with $\pi(E)=(7/4,7/4,7/4,7/4,1)$ is dec-min.
However, an allocation $\rho$ with $\rho(E)=(6/4,6/4,6/4,6/4,2)$ is inc-max and square-sum minimizer.
Indeed, $\sum_{i\in N}\pi_i(E)^2=13.25$ and $\sum_{i\in N}\rho_i(E)^2=13$.
\end{example}

\begin{example}\label{ex:decmin=SOS}
Consider an instance with five agents $N=\{1,2,3,4,5\}$, five indivisible goods $M=\{g_1,g_2,g_3,g_4,g_5\}$, and \emph{two} divisible goods $C=\{c_1,c_2\}$.
Suppose that agents $1$, $2$, $3$, and $4$ desire all the goods, but agent $5$ desires only the indivisible goods (see Table~\ref{tbl:decmin=SOS}).
Then, an allocation $\pi$ with $\pi(E)=(6/4,6/4,6/4,6/4,1)$ is dec-min and square-sum minimizer.
However, an allocation $\rho$ with $\rho(E)=(5/4,5/4,5/4,5/4,2)$ is inc-max.
Indeed, $\sum_{i\in N}\pi_i(E)^2=10$ and $\sum_{i\in N}\rho_i(E)^2=10.25$.
\end{example}

\begin{table}[htb]
  \begin{minipage}{.5\textwidth}
    \centering
    \caption{Agents' valuations in Example~\ref{ex:incmax=SOS}.}\label{tbl:incmax=SOS}
    \begin{tabular}{c|cccccccc}
      \toprule
      agents & $g_1$ & $g_2$ & $g_3$ & $g_4$ & $g_5$ & $c_1$ & $c_2$ & $c_3$\\\midrule
      $1$    & $1$   & $1$   & $1$   & $1$   & $1$   & $1$   & $1$   & $1$  \\
      $2$    & $1$   & $1$   & $1$   & $1$   & $1$   & $1$   & $1$   & $1$  \\
      $3$    & $1$   & $1$   & $1$   & $1$   & $1$   & $1$   & $1$   & $1$  \\
      $4$    & $1$   & $1$   & $1$   & $1$   & $1$   & $1$   & $1$   & $1$  \\
      $5$    & $1$   & $1$   & $1$   & $1$   & $1$   & $0$   & $0$   & $0$  \\
      \bottomrule
    \end{tabular}
  \end{minipage}%
  \begin{minipage}{.5\textwidth}
    \centering
    \caption{Agents' valuations in Example~\ref{ex:decmin=SOS}.}\label{tbl:decmin=SOS}
    \begin{tabular}{c|ccccccc}
      \toprule
      agents & $g_1$ & $g_2$ & $g_3$ & $g_4$ & $g_5$ & $c_1$ & $c_2$ \\\midrule
      $1$    & $1$   & $1$   & $1$   & $1$   & $1$   & $1$   & $1$ \\
      $2$    & $1$   & $1$   & $1$   & $1$   & $1$   & $1$   & $1$ \\
      $3$    & $1$   & $1$   & $1$   & $1$   & $1$   & $1$   & $1$ \\
      $4$    & $1$   & $1$   & $1$   & $1$   & $1$   & $1$   & $1$ \\
      $5$    & $1$   & $1$   & $1$   & $1$   & $1$   & $0$   & $0$ \\
      \bottomrule
    \end{tabular}
  \end{minipage}
\end{table}

Nevertheless, a certain symmetric strictly convex function $\Phi$ induces the dec-min and inc-max solution as a $\Phi$-fair allocation. 
\begin{proposition}\label{prop:Phi}
Let $(N,M,C,v)$ be a fair allocation instance.
There exists a symmetric strictly convex function $\Phi$ such that a dec-min allocation is $\Phi$-fair.
In addition, there exists a symmetric strictly convex function $\Phi'$ such that an inc-max allocation is $\Phi'$-fair.
\end{proposition}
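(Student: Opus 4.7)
The plan is to show that $\Phi_K(z)=\sum_{i\in N}\exp(Kz_i)$ has a dec-min utility vector as its $S$-minimizer for $K>0$ sufficiently large, and analogously that $\Phi'_K(z)=\sum_{i\in N}\exp(-Kz_i)$ has an inc-max utility vector as its $S$-minimizer for $K>0$ sufficiently large. Each such $\Phi_K$ (and $\Phi'_K$) is symmetric (sum of identical single-coordinate functions) and strictly convex (since $x\mapsto\exp(\pm Kx)$ is strictly convex on $\mathbb{R}$), so it only remains to choose $K$ correctly.

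The structural step is that, for every assignment $\sigma$ of the indivisible goods, the slice $S_\sigma\subseteq S$ of utility vectors of allocations extending $\sigma$ is an integral base-polyhedron. Indeed $S_\sigma=\sigma(M)+P_{\mathrm{div}}$, where $P_{\mathrm{div}}=B(f)$ is the base-polyhedron of the divisible-only problem, and an integer translate of an integral base-polyhedron is again one: $\sigma(M)+B(f)=B(f')$ with $f'(X)=f(X)+\sum_{i\in X}\sigma_i(M)$ integer-valued and submodular. Hence \Cref{thm:divisible} applies to each $S_\sigma$ and identifies the unique minimizer of $\Phi_K$ over $S_\sigma$ (unique by strict convexity) with the dec-min utility vector $u_\sigma\in S_\sigma$ (up to value-equivalence).

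Now pick a global dec-min utility vector $u$ of $S=\bigcup_\sigma S_\sigma$ and $\sigma^\star$ with $u\in S_{\sigma^\star}$; since $S_{\sigma^\star}\subseteq S$, $u$ is also dec-min on $S_{\sigma^\star}$ and so value-equivalent to $u_{\sigma^\star}$, yielding $\Phi_K(u)=\Phi_K(u_{\sigma^\star})$. For every other $\sigma$ the sorted vector $u_\sigma^\downarrow$ is lex at least $u^\downarrow$; if they disagree at the first index $k$, then $u_{\sigma,k}^\downarrow>u_k^\downarrow$ and a term-by-term estimate gives
\[
  \Phi_K(u_\sigma)-\Phi_K(u)\;\ge\;\exp(K u_{\sigma,k}^\downarrow)-(n-k+1)\,\exp(K u_k^\downarrow),
\]
which is positive whenever $K>\ln n\,/\,(u_{\sigma,k}^\downarrow-u_k^\downarrow)$. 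There are only finitely many $\sigma$'s, and hence finitely many such lower bounds, so a single large $K$ makes $\Phi_K(u)\le\Phi_K(u_\sigma)$ for every $\sigma$; therefore $\Phi_K(u)\le\Phi_K(z)$ for every $z\in S$, so $u$ is $\Phi_K$-fair and any dec-min allocation realizing $u$ is $\Phi_K$-fair. The inc-max statement is entirely symmetric: on each $S_\sigma$, \Cref{thm:divisible} identifies the $\Phi'_K$-minimizer with the inc-max class, and the analogous exponential comparison on the smallest coordinates singles out a globally inc-max allocation.

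The main obstacle is recognizing $S_\sigma$ as an integral base-polyhedron, which lifts \Cref{thm:divisible} from the pure divisible setting to each fiber over an indivisible assignment; once that is in hand, the remaining exponential tie-breaking across the finitely many indivisible assignments is routine.
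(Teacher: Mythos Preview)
Your proof is correct and follows essentially the same approach as the paper's: both fix the indivisible assignment $\sigma$, observe (via \Cref{thm:divisible}) that each slice $S_\sigma$ has a unique dec-min/inc-max/$\Phi$-fair class $u_\sigma$, and then use an exponential-type function $\sum_i e^{\pm K z_i}$ with $K$ chosen large relative to the finitely many gaps $u_{\sigma,k}^\downarrow-u_k^\downarrow$ to force the global dec-min (resp.\ inc-max) vector to beat every $u_\sigma$. The paper writes this as $\Phi(z)=\sum_i (2n)^{z_i/\epsilon}$ with $\epsilon$ the minimum positive gap, which is the same function up to the choice of base; your explicit identification of $S_\sigma$ as an integer translate of the divisible base-polyhedron is a slightly cleaner justification of why \Cref{thm:divisible} applies on each slice than the paper's phrasing, but the argument is otherwise identical.
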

\begin{proof}

When we fix an allocation of the indivisible goods, we can observe that the utility vector of the dec-min allocations is uniquely determined by treating each indivisible good as a divisible one such that only the agent who receives it desires it and applying Theorem~\ref{thm:divisible}.
Let $U$ be the set of the utility vectors obtained in this way for all possible allocations of indivisible goods.
Since the number of possible allocations of indivisible goods is finite, $|U|$ is also finite.

Let $x$ be the utility vector of a dec-min allocation.
Without loss of generality, we may assume that $y^\downarrow\ne x^\downarrow$ for some $y\in U$ since otherwise the claim of this proposition holds for any symmetric strictly convex function $\Phi$.
Let $\epsilon=\min\{y^\downarrow_i-x^\downarrow_i\mid y\in U,~i\in [n],~y^\downarrow_i>x^\downarrow_i\}$.
We show that $x$ minimizes $\Phi(z)=\sum_{i\in N}(2n)^{z_i/\epsilon}$ over $U$.
Take any $y\in U$ with $y^\downarrow\ne x^\downarrow$.
Then, there exists an index $k\in[n]$ such that $y^\downarrow_i=x^\downarrow_i$ for $i=1,2,\dots,k-1$ and $y^\downarrow_k\ge x^\downarrow_k+\epsilon$ since $x$ is the utility vector of a dec-min allocation.
Hence, we obtain
\begin{align}
    \Phi(y)-\Phi(x)
    =\sum_{i\in N}(2n)^{y_i/\epsilon}-\sum_{i\in N}(2n)^{x_i/\epsilon}
    \ge (2n)^{y^\downarrow_k/\epsilon}-n\cdot (2n)^{(y^\downarrow_k-\epsilon)/\epsilon}
    = \frac{1}{2}\cdot (2n)^{y^\downarrow_k/\epsilon}>0.
\end{align}
The inc-max case is also proven in a similar way by setting $\Phi'(z)=\sum_{i\in N}(2n)^{-z_i/\epsilon'}$ with $\epsilon'=\min\{x_i^\downarrow-y_i^\downarrow\mid y\in U,\ i\in [n],\ x_i^\downarrow>y_i^\downarrow\}$.
\end{proof}
This proposition implies that dec-min and inc-max allocations can be viewed as minimizers of some symmetric strictly convex function.
We remark that some prominent fairness notions are naturally represented as $\Phi$-fairness for some symmetric strictly convex function $\Phi$.
An allocation $\pi$ is said to achieve the \emph{maximum Nash welfare} (MNW) if the number of agents with positive utilities is maximized, and subject to that, the Nash welfare $\prod_{i\in N: \pi_i(E)>0} \pi_i(E)$ is maximized.
Finding a utility vector of an MNW allocation is equivalent to minimizing $\prod_{i\in N: z_i>0} (z_i + \varepsilon)$ for some sufficiently small $\varepsilon>0$.
See Appendix~\ref{sec:MNW} for the detail.
The egalitarian social welfare is defined by the smallest utility among agents.
Maximizing the egalitarian social welfare is a weaker notion of inc-max, and thus we can apply \Cref{prop:Phi}.
Hence, we mainly consider $\Phi$-fair allocations in the following.

\section{Integral base-polyhedra and M-convex sets}\label{sec:M-convex}
In this section, we formulate our fair allocation problem in terms of convex analysis. 
Then we introduce known results and tools for continuous or discrete cases.

\subsection{Definitions and properties}\label{subsec:definitions}
In this subsection, we introduce the definitions and properties of integral base-polyhedra and M-convex sets that we will use later.
A set function $f$ over $N$ is called \emph{supermodular} if 
\begin{align}
    f(X)+f(Y)\le f(X\cup Y)+f(X\cap Y) \quad (\forall X,Y\subseteq N)
\end{align}
and \emph{submodular} if 
\begin{align}
    f(X)+f(Y)\ge f(X\cup Y)+f(X\cap Y) \quad (\forall X,Y\subseteq N).
\end{align}
For a subset $X\subseteq N$ and a vector $x \in \mathbb{R}^N$, we denote $x(X) = \sum_{i\in X}x_i$.
For an integer-valued supermodular set function $f$ on $N$ for which $f(\emptyset)=0$ (normalized), 
the \emph{integral base-polyhedron} $\bar{\B}$ of $f$ is defined as
\begin{align}
    \bar{\B}=\left\{x\in\mathbb{R}^N\mid \textstyle x(N)=f(N)\text{ and } x(X)\ge f(X)~(\forall X\subseteq N)\right\}.
\end{align}
In addition, we call the set $\ddot{\B}$ of the integer vectors in an integral base-polyhedron $\bar{\B}$ an \emph{M-convex set}.
Note that an M-convex set $\ddot{\B}$ induces an integral base-polyhedron $\bar{\B}$ as its convex hull.

For an integral base-polyhedra $\bar{\B}$ defined by a supermodular function $f\colon 2^N\to\mathbb{Z}$, we have the following facts.
\begin{proposition}[{\cite[Lemma 3.2]{fujishige2005}}]\label{prop:face}
For any $X\subseteq N$, there exists $x\in\ddot{\B}$ such that $x(X)=f(X)$.
\end{proposition}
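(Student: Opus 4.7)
The plan is to produce the required $x \in \ddot{\B}$ explicitly via the classical greedy construction for the base polyhedron of a supermodular function. First, I would fix a linear ordering $\sigma\colon [n] \to N$ that lists the elements of $X$ before those of $N \setminus X$, so that $X = \{\sigma(1), \ldots, \sigma(k)\}$ with $k = |X|$. Writing $X_i = \{\sigma(1), \ldots, \sigma(i)\}$ and $X_0 = \emptyset$, I would define the integer vector $x \in \mathbb{Z}^N$ by the marginal increments $x_{\sigma(i)} = f(X_i) - f(X_{i-1})$ for $i = 1, \ldots, n$. Since $f$ is integer-valued, $x$ is integral, and telescoping immediately yields $x(X_i) = f(X_i)$ for each $i$; in particular $x(N) = f(N)$ and $x(X) = f(X_k) = f(X)$, which is the equality targeted by the proposition.

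It then remains to verify the supermodular lower bounds $x(Y) \ge f(Y)$ for every $Y \subseteq N$, which together with integrality establishes $x \in \ddot{\B}$. I would do this by induction on $|Y|$; the base case $Y = \emptyset$ is immediate from $f(\emptyset) = 0$. For a nonempty $Y$, let $j$ be the largest index with $\sigma(j) \in Y$ and set $Y' = Y \setminus \{\sigma(j)\}$; by maximality of $j$ the inclusion $Y' \subseteq X_{j-1}$ holds. Supermodularity of $f$ implies that single-element marginals are nondecreasing in the base set, which gives $f(Y) - f(Y') \le f(X_j) - f(X_{j-1}) = x_{\sigma(j)}$. Combined with the inductive hypothesis $x(Y') \ge f(Y')$, this yields $x(Y) = x(Y') + x_{\sigma(j)} \ge f(Y)$, completing the induction.

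The only mildly delicate ingredient is the marginal inequality above, but it is made essentially automatic by the choice of ordering that places $X$ at the front, so that the crucial inclusion $Y' \subseteq X_{j-1}$ is available whenever the inductive step invokes supermodularity. The rest of the argument is the telescoping identity and a routine induction, so I do not anticipate a genuine obstacle beyond setting up the ordering correctly.
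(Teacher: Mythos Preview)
Your argument is correct: the greedy vector along an ordering that enumerates $X$ first is the standard construction of an integral base vector tight at $X$, and your inductive verification of the lower bounds via the increasing-marginals property of supermodular functions is clean. One minor remark: the inclusion $Y' \subseteq X_{j-1}$ follows from the maximality of $j$ for \emph{any} ordering, not specifically because $X$ was placed first; the front-loading of $X$ is what delivers $x(X) = f(X)$ via telescoping, not what makes the induction go through.

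As for comparison: the paper does not supply its own proof of this proposition; it simply cites it as \cite[Lemma~3.2]{fujishige2005}. Your greedy construction is exactly the classical proof one finds in Fujishige's book (there phrased dually for submodular $f$ with upper bounds), so your approach is the expected one and there is nothing substantively different to contrast.
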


\begin{proposition}[{\cite[Lemma 2.2]{fujishige2005}}]\label{prop:tightsets}
For $x\in\bar{\B}$, letting $\mathcal{D}(x)=\{X\subseteq N\mid x(X)=f(X)\}$, we have
\begin{align}
    X,Y\in\mathcal{D}(x) \implies X\cup Y,\,X\cap Y\in\mathcal{D}(x).
\end{align}
\end{proposition}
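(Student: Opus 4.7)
The plan is to exploit the standard sandwich between the modularity of the linear functional $x \mapsto x(X)$ and the supermodularity of $f$. Fix $X, Y \in \mathcal{D}(x)$, so by definition $x(X)=f(X)$ and $x(Y)=f(Y)$. Since $x \in \bar{\B}$, the base-polyhedron inequalities give
\begin{align}
x(X\cup Y) \ge f(X\cup Y), \qquad x(X\cap Y) \ge f(X\cap Y).
\end{align}

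Next I would add these two inequalities and compare with the identity $x(X)+x(Y)=x(X\cup Y)+x(X\cap Y)$, which holds because $x$ is a modular (linear) set function via $x(X)=\sum_{i\in X} x_i$. Using supermodularity of $f$ in the form $f(X)+f(Y)\le f(X\cup Y)+f(X\cap Y)$, the chain
\begin{align}
f(X\cup Y)+f(X\cap Y) \;\le\; x(X\cup Y)+x(X\cap Y) \;=\; x(X)+x(Y) \;=\; f(X)+f(Y) \;\le\; f(X\cup Y)+f(X\cap Y)
\end{align}
forces equality throughout. Combined with the two individual inequalities $x(X\cup Y)\ge f(X\cup Y)$ and $x(X\cap Y)\ge f(X\cap Y)$, this yields $x(X\cup Y)=f(X\cup Y)$ and $x(X\cap Y)=f(X\cap Y)$, giving $X\cup Y, X\cap Y \in \mathcal{D}(x)$ as required.

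There is no real obstacle here: the whole argument is just the modular/supermodular inequality sandwich, and the only subtle point is remembering to use both membership inequalities simultaneously so that once the sum is an equality, each summand individually is as well. This is why the family of tight sets at any point of an integral base-polyhedron defined by a supermodular function forms a lattice under union and intersection.
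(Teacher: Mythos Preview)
Your argument is correct and is exactly the standard modular/supermodular sandwich proof; the paper itself does not give a proof of this proposition but merely cites \cite[Lemma~2.2]{fujishige2005}, whose proof is precisely the one you wrote.
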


An M-convex set $\ddot{B}$ satisfies the \emph{exchange property} (B-EXC[$\mathbb{Z}$]) stated as follows.
\begin{proposition}[\cite{Murota:DCA}]\label{prop:B-EXC-Z}
For any $x,y\in \ddot{\B}$ and $i\in\supp^+(x-y)$, there exists some $j\in\supp^-(x-y)$ such that $x-\chi_i+\chi_j\in \ddot{\B}$ and $y+\chi_i-\chi_j\in \ddot{\B}$.
\end{proposition}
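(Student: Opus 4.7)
The plan is to rewrite each polyhedral membership condition as a containment condition on minimum tight sets, establish the one-sided exchange, and then bootstrap a contradiction via an iterated supermodular uncrossing argument.

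First I would observe, using $x(N)=y(N)=f(N)$ and the integrality of $x,y$, that $x-\chi_i+\chi_j\in\ddot{\B}$ holds iff $x(X)>f(X)$ for every $X$ with $i\in X$ and $j\notin X$; since Proposition~\ref{prop:tightsets} makes the family of $x$-tight sets closed under intersection, there is a minimum $x$-tight set $T_x(i)$ containing $i$, and the condition becomes $j\in T_x(i)$. Symmetrically, $y+\chi_i-\chi_j\in\ddot{\B}$ iff $i\in T_y(j)$. The one-sided exchange then follows: writing $U:=T_x(i)$, the inequality $y(U)\ge f(U)=x(U)$ together with $y_i\le x_i-1$ forces some $j\in U$ with $y_j>x_j$, so $J:=T_x(i)\cap\supp^-(x-y)$ is non-empty.

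The remaining task is to find $j\in J$ additionally satisfying $i\in T_y(j)$. I would suppose for contradiction that $i\notin T_y(j)$ for every $j\in J$; then by union-closure (Proposition~\ref{prop:tightsets}), $W:=\bigcup_{j\in J}T_y(j)$ is $y$-tight, contains $J$, and excludes $i$. Every $k\in U\setminus W$ lies outside $\supp^-(x-y)$ (otherwise $k\in J\subseteq W$), and $i\in U\setminus W$ satisfies $y_i\le x_i-1$, so $(y-x)(U\setminus W)\le -1$. Combined with $(y-x)(U)\ge 0$ this forces $(y-x)(U\cap W)\ge 1$, i.e., $y(U\cap W)\ge f(U\cap W)+1$.

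The delicate step will be bootstrapping this small slack into a contradiction through a self-reinforcing loop. The plan is to use $y(U)+y(W)=y(U\cup W)+y(U\cap W)$ together with $y(U\cup W)\ge f(U\cup W)$, supermodularity $f(U\cup W)+f(U\cap W)\ge f(U)+f(W)$, and $y(W)=f(W)$ to deduce $y(U)\ge f(U)+1$, hence $(y-x)(U)\ge 1$. Feeding this back into $(y-x)(U)=(y-x)(U\cap W)+(y-x)(U\setminus W)$ with $(y-x)(U\setminus W)\le -1$ yields $(y-x)(U\cap W)\ge 2$; iterating the same supermodular uncrossing then lifts the lower bound on $(y-x)(U)$ by one each time, giving $(y-x)(U)\ge k$ for every $k\in\mathbb{N}$, which contradicts the finiteness of $y(U)-x(U)$. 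The main subtlety is recognizing that neither the supermodular inequality on $(U,W)$ nor the integer gap on $U\setminus W$ alone suffices: the two must be alternated to drive the bootstrap, and the argument critically uses integrality of $x,y$ to turn strict inequalities into unit gaps.
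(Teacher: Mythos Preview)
The paper does not prove this proposition at all: it is quoted as a standard result from Murota's \emph{Discrete Convex Analysis} and simply cited. So there is no approach in the paper to compare yours against.

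Your argument is correct. The reduction of each membership condition to containment in a minimum tight set is standard and sound, and the one-sided exchange (finding $j\in J:=T_x(i)\cap\supp^-(x-y)$) is exactly right. The uncrossing step is also correct: from $(y-x)(U\setminus W)\le -1$ you get $(y-x)(U\cap W)\ge (y-x)(U)+1$, and from supermodularity plus tightness of $W$ you get $(y-x)(U)\ge (y-x)(U\cap W)$.

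One remark on presentation: the bootstrap is unnecessary. Combining your two implications directly yields
\[
(y-x)(U)\;\ge\;(y-x)(U\cap W)\;\ge\;(y-x)(U)+1,
\]
which is already a contradiction; there is no need to iterate and invoke finiteness. This also makes clear that integrality is used only once, at the step $y_i\le x_i-1$ that gives the unit gap on $U\setminus W$.
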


An integral base-polyhedra $\bar{\B}$ satisfies the following exchange property (B-EXC[$\mathbb{R}$]).
\begin{proposition}[\cite{Murota:DCA}]\label{prop:B-EXC-R}
For any $x,y\in \bar{\B}$ and $i\in\supp^+(x-y)$, there exists some $j\in\supp^-(x-y)$ and a positive real $\alpha_0$ such that $x-\alpha(\chi_i-\chi_j)\in \bar{\B}$ and $y+\alpha(\chi_i-\chi_j)\in \bar{\B}$ for all $\alpha\in[0,\alpha_0]$.   
\end{proposition}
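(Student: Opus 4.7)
The plan is to reduce the exchange to a tight-set question, use the lattice structure of $\mathcal{D}(\cdot)$ from \Cref{prop:tightsets} to pick the correct index $j$, and then take $\alpha_0$ to be the minimum of two exchange capacities.

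First I would reformulate feasibility. Since $x-\alpha(\chi_i-\chi_j)$ and $y+\alpha(\chi_i-\chi_j)$ both still sum to $f(N)$, only the inequalities $z(X)\ge f(X)$ need to be checked. On the $x$-side, a violation can occur only for $X$ with $i\in X$ and $j\notin X$, giving the threshold $\alpha^x_0:=\min\{x(X)-f(X):i\in X,\,j\notin X\}$, and this is positive iff no tight set of $x$ separates $i$ from $j$. Symmetrically, $\alpha^y_0:=\min\{y(X)-f(X):j\in X,\,i\notin X\}$ is positive iff no tight set of $y$ separates $j$ from $i$. So the problem reduces to finding $j\in\supp^-(x-y)$ for which both separation conditions fail.

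Let $T_x(i)$ denote the intersection of all $X\in\mathcal{D}(x)$ containing $i$, i.e.\ the smallest tight set of $x$ containing $i$; this is well defined since $N\in\mathcal{D}(x)$ and $\mathcal{D}(x)$ is closed under intersection by \Cref{prop:tightsets}. Every $j\in T_x(i)\setminus\{i\}$ automatically meets the $x$-side condition. From $y(T_x(i))\ge f(T_x(i))=x(T_x(i))$ together with $x_i>y_i$ I would then deduce that $y_j>x_j$ holds for some $j\in T_x(i)\setminus\{i\}$, so the candidate set $J:=(T_x(i)\setminus\{i\})\cap\supp^-(x-y)$ is nonempty.

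The heart of the argument, and what I expect to be the main obstacle, is to show that some $j\in J$ also satisfies the $y$-side condition, namely $i\in T_y(j)$ where $T_y(j)$ is the smallest tight set of $y$ containing $j$. I would argue by contradiction: assume every $j\in J$ admits an $X_j\in\mathcal{D}(y)$ with $j\in X_j$ but $i\notin X_j$, and set $Y:=\bigcup_{j\in J}X_j$, which is tight for $y$ by union-closure (\Cref{prop:tightsets}), contains $J$, and avoids $i$. Every $k\in T_x(i)\setminus Y$ then lies outside $J$, forcing $x_k\ge y_k$ on $T_x(i)\setminus Y$. Adding the base-polyhedron inequalities $x(T_x(i)\cap Y)\ge f(T_x(i)\cap Y)$ and $y(T_x(i)\cup Y)\ge f(T_x(i)\cup Y)$, combining with supermodularity $f(T_x(i)\cup Y)+f(T_x(i)\cap Y)\ge f(T_x(i))+f(Y)$, and substituting the tight identities $x(T_x(i))=f(T_x(i))$ and $y(Y)=f(Y)$ would yield $y(T_x(i)\setminus Y)\ge x(T_x(i)\setminus Y)$. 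Combined with the pointwise inequality on $T_x(i)\setminus Y$, equality must hold coordinatewise; but $i\in T_x(i)\setminus Y$ with $x_i>y_i$, a contradiction. For any such $j$, $\alpha_0:=\min(\alpha^x_0,\alpha^y_0)>0$ delivers the claimed exchange on $[0,\alpha_0]$.
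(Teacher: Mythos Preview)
The paper does not give its own proof of this proposition; it is simply quoted with a citation to Murota's \emph{Discrete Convex Analysis} as a standard fact about integral base-polyhedra. Your argument is correct and entirely self-contained: the reduction to tight-set separation, the use of the lattice structure of $\mathcal{D}(\cdot)$ from \Cref{prop:tightsets} to define the minimal tight sets $T_x(i)$ and $T_y(j)$, and the supermodularity-based contradiction all go through as written. In particular, the chain
\[
x(T_x(i)\cap Y)+y(T_x(i)\cup Y)\ \ge\ f(T_x(i)\cap Y)+f(T_x(i)\cup Y)\ \ge\ f(T_x(i))+f(Y)\ =\ x(T_x(i))+y(Y)
\]
yields $y(T_x(i)\setminus Y)\ge x(T_x(i)\setminus Y)$, which combined with the pointwise inequality $x_k\ge y_k$ on $T_x(i)\setminus Y$ forces equality at $i$, contradicting $x_i>y_i$. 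So where the paper outsources the result, you supply a clean proof that in fact uses nothing beyond the paper's own \Cref{prop:tightsets} and supermodularity.
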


In addition, the following variants of exchange properties also hold.
Suppose that $\bar{\B}$ and $\ddot{\B}$ are induced by a supermodular function $f\colon 2^N\to\mathbb{Z}$.
\begin{proposition}\label{prop:B-EXC-XZ}
For any $x,y\in \ddot{\B}$ and $X\subseteq N$ with $x(X)>y(X)$, there exist $i\in X$ and $j\in N\setminus X$ such that $x-\chi_i+\chi_j\in \ddot{\B}$.
Moreover, if $x(X)>f(X)$, there exist $i\in X$ and $j\in N\setminus X$ such that $x-\chi_i+\chi_j\in \ddot{\B}$.
\end{proposition}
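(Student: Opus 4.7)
The plan is to prove the first assertion by a minimum $\ell_{1}$-distance argument based on the exchange property (B-EXC$[\mathbb{Z}]$, Proposition~\ref{prop:B-EXC-Z}), and then deduce the ``moreover'' part from the first assertion together with Proposition~\ref{prop:face}.

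For the first assertion, I would first observe that $\ddot{\B}$ is finite (it is the set of integer points of a bounded polyhedron, since $x(N)=f(N)$ and the box bounds $x_{i}\ge f(\{i\})$, $x_{i}\le f(N)-f(N\setminus\{i\})$ hold on $\bar{\B}$). Since $x, y\in\ddot{\B}$ are integer-valued and $x(X)>y(X)$, we have $y(X)\le x(X)-1$, so the set $\{z\in\ddot{\B}: z(X)\le x(X)-1\}$ is nonempty. Let $z^{*}$ be an element of this set minimizing $\|x-z^{*}\|_{1}$. Because $z^{*}(X)<x(X)$, the set $\supp^{+}(x-z^{*})$ is nonempty, so I can apply Proposition~\ref{prop:B-EXC-Z} to $x$ and $z^{*}$ at any chosen $i\in\supp^{+}(x-z^{*})$ to obtain $j\in\supp^{-}(x-z^{*})$ with $x-\chi_{i}+\chi_{j}\in\ddot{\B}$ and $z':=z^{*}+\chi_{i}-\chi_{j}\in\ddot{\B}$.

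The key routine computation is that, since $i\in\supp^{+}(x-z^{*})$ and $j\in\supp^{-}(x-z^{*})$ (so $x_{i}\ge z^{*}_{i}+1$ and $x_{j}\le z^{*}_{j}-1$), the contributions of coordinates $i$ and $j$ to $\|x-z'\|_{1}$ each decrease by $1$ and the other coordinates are unchanged, giving $\|x-z'\|_{1}=\|x-z^{*}\|_{1}-2$. By the minimality of $z^{*}$, the new point $z'$ cannot satisfy $z'(X)\le x(X)-1$; hence $z'(X)\ge x(X)$. Substituting $z'(X)=z^{*}(X)+[i\in X]-[j\in X]$ and $z^{*}(X)\le x(X)-1$ yields $[i\in X]-[j\in X]\ge 1$, which forces $i\in X$ and $j\in N\setminus X$. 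This gives the desired pair.

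For the ``moreover'' part, I would invoke Proposition~\ref{prop:face} with the set $X$ to obtain $y\in\ddot{\B}$ with $y(X)=f(X)<x(X)$, and then apply the first assertion to this $y$. I do not anticipate a real obstacle here; the only subtleties are making sure the auxiliary minimization is over a nonempty finite set and that the signs in the $\ell_{1}$-distance computation behave as claimed, both of which follow directly from integrality of $\ddot{\B}$ and from $i,j$ being in the respective positive/negative supports of $x-z^{*}$.
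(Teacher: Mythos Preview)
Your argument is correct. It differs from the paper's proof in a genuine way, so a brief comparison is in order.

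The paper argues by contradiction via the lattice of tight sets (Proposition~\ref{prop:tightsets}): assuming no exchange $x-\chi_i+\chi_j\in\ddot{\B}$ exists with $i\in X$, $j\notin X$, for each such pair one picks a tight set $Y_{ij}$ separating $i$ from $j$, intersects over $j$ and unions over $i$ to conclude that $X$ itself is $x$-tight, i.e.\ $x(X)=f(X)$, contradicting $x(X)>y(X)\ge f(X)$. Your route instead treats (B-EXC$[\mathbb{Z}]$) as a black box and runs a closest-vector argument: take $z^{*}\in\ddot{\B}$ with $z^{*}(X)\le x(X)-1$ minimizing $\|x-z^{*}\|_{1}$, apply one symmetric exchange, and let minimality force $i\in X$, $j\notin X$. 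Both derive the ``moreover'' clause identically from Proposition~\ref{prop:face}.

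What each buys: the paper's proof is structurally informative---it pinpoints $x(X)=f(X)$ as the exact obstruction and makes the role of the tight-set lattice explicit, which is reused later (e.g.\ in Claim~\ref{claim:edge}). Your proof is more self-contained: it needs neither Proposition~\ref{prop:tightsets} nor any direct appeal to $f$, only the exchange axiom and finiteness of $\ddot{\B}$, so it would transfer verbatim to any finite set satisfying (B-EXC$[\mathbb{Z}]$). A minor streamlining: since $(x-z^{*})(X)>0$, you may as well choose $i\in\supp^{+}(x-z^{*})\cap X$ at the outset; then only $j\notin X$ remains to be forced by minimality, which slightly shortens the final inequality step.
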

\begin{proof}
Suppose to the contrary that $x-\chi_i+\chi_j\not\in\ddot{\B}$ for any $i\in X$ and $j\in N\setminus X$.
For each $i\in X$ and $j\in N\setminus X$, let $Y_{ij}$ be a subset of $N$ such that $x(Y_{ij})=f(Y_{ij})$, $i\in Y_{ij}$, and $j\not\in Y_{ij}$.
Then, $Y_i\coloneqq\bigcap_{j\in N\setminus X}Y_{ij}$ satisfies $i\in Y_i$, $Y_i\subseteq X$, and $x(Y_i)=f(Y_i)$ for each $i\in N$ by \Cref{prop:tightsets}.
Furthermore, $Y\coloneqq\bigcup_{i\in X}Y_i$ satisfies $Y=X$ and $x(Y)=f(X)$ by \Cref{prop:tightsets}.
Hence, we obtain $f(X)=x(X)>y(X)\ge f(X)$, which is a contradiction.

In addition, the latter statement holds by setting $y$ such that $y(X)=f(X)$, whose existence is guaranteed by \Cref{prop:face}.
\end{proof}

\begin{proposition}\label{prop:B-EXC-XR}
For any $x,y\in \bar{\B}$ and $X\subseteq N$ with $x(X)>y(X)$, there exist $i\in X$, $j\in N\setminus X$, and $\epsilon>0$ such that $x-\epsilon(\chi_i-\chi_j)\in \bar{\B}$.
Moreover, if $x(X)>f(X)$, there exist $i\in X$, $j\in N\setminus X$, and $\epsilon>0$ such that $x-\epsilon(\chi_i-\chi_j)\in \bar{\B}$.
\end{proposition}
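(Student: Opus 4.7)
The plan is to mirror the proof of \Cref{prop:B-EXC-XZ} by replacing the unit exchange with an infinitesimal one and again exploiting the lattice of tight sets from \Cref{prop:tightsets}. The starting observation is that for any $i\in X$, $j\in N\setminus X$, and $\epsilon>0$, the perturbation $x-\epsilon(\chi_i-\chi_j)$ preserves $x(N)=f(N)$ (because $(\chi_i-\chi_j)(N)=0$), leaves $x(S)$ unchanged or increased whenever $i\notin S$ or $j\in S$, and decreases $x(S)$ by exactly $\epsilon$ whenever $i\in S$ and $j\notin S$. Consequently, $x-\epsilon(\chi_i-\chi_j)\in\bar{\B}$ holds for some $\epsilon>0$ if and only if no tight set $S$ (i.e., $S$ with $x(S)=f(S)$) satisfies $i\in S$ and $j\notin S$; in that case any positive $\epsilon$ below $\min\{x(S)-f(S):i\in S,\ j\notin S\}$ works, the minimum being a finite minimum of strictly positive reals.

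With this in hand, I will argue by contradiction: suppose that no valid triple $(i,j,\epsilon)$ exists. Then for each $i\in X$ and $j\in N\setminus X$ I can fix a tight set $Y_{ij}\subseteq N$ with $i\in Y_{ij}$ and $j\notin Y_{ij}$. From here the combinatorial step copies the proof of \Cref{prop:B-EXC-XZ} verbatim: by \Cref{prop:tightsets}, $Y_i\coloneqq\bigcap_{j\in N\setminus X}Y_{ij}$ is tight, contains $i$, and is contained in $X$; then $Y\coloneqq\bigcup_{i\in X}Y_i$ is tight and equals $X$, so $x(X)=f(X)$. Combined with $y(X)\ge f(X)$, this contradicts $x(X)>y(X)$.

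For the ``moreover'' clause, I will apply \Cref{prop:face} to produce some $y\in\ddot{\B}\subseteq\bar{\B}$ with $y(X)=f(X)<x(X)$ and then invoke the first half. The only subtle point is the epsilon-selection argument in the opening observation, but because $\bar{\B}$ is cut out by finitely many linear constraints this reduces to the remark that a finite minimum of positive reals is positive; beyond that, the proof is a faithful translation of the discrete argument into the polyhedral setting.
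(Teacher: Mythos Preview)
Your proposal is correct and matches the paper's approach exactly: the paper simply states that the proof is done in a similar way to \Cref{prop:B-EXC-XZ}, and you have faithfully carried out that translation to the continuous setting, including the tight-set lattice argument via \Cref{prop:tightsets} and the appeal to \Cref{prop:face} for the ``moreover'' clause.
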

\begin{proof}
The proof is done in a similar way to \Cref{prop:B-EXC-XZ}.
\end{proof}

\begin{proposition}\label{prop:B-EXC-IJZ}
For any $x\in \ddot{\B}$ and disjoint subsets $I,J\subseteq N$ such that $x(J)<f(I\cup J)-f(I)$, there exist $i\in I$ and $j\in J$ such that $x-\chi_i+\chi_j\in \ddot{\B}$.
\end{proposition}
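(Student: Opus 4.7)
The plan is to mimic the proof of \Cref{prop:B-EXC-XZ} by arguing via tight sets, then close the gap with one application of supermodularity. I would argue by contradiction: assume that for every $i\in I$ and every $j\in J$, the vector $x-\chi_i+\chi_j$ fails to lie in $\ddot{\B}$. Since $x(N)$ is unchanged by the swap, the only way this can happen is that some set $Y_{ij}\subseteq N$ is tight at $x$ (that is, $x(Y_{ij})=f(Y_{ij})$) and contains $i$ but not $j$; otherwise all the supermodular inequalities would remain satisfied.

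Next I would aggregate these tight sets in the same two-step fashion as in the proof of \Cref{prop:B-EXC-XZ}, relying on \Cref{prop:tightsets}. For each $i\in I$, I set $Y_i=\bigcap_{j\in J}Y_{ij}$; this is tight, contains $i$, and is disjoint from $J$. Then $Y=\bigcup_{i\in I}Y_i$ is tight, contains $I$, and is disjoint from $J$.

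The key new ingredient, which is where the argument departs from that of \Cref{prop:B-EXC-XZ}, is combining $Y$ with $I\cup J$ via supermodularity. Because $I\subseteq Y$ and $Y\cap J=\emptyset$, one computes $Y\cup(I\cup J)=Y\cup J$ and $Y\cap(I\cup J)=I$, so supermodularity of $f$ gives
\begin{align}
f(Y)+f(I\cup J)\le f(Y\cup J)+f(I).
\end{align}
Combining this with the tightness $x(Y)=f(Y)$ and the hypothesis $x(J)<f(I\cup J)-f(I)$ yields
\begin{align}
x(Y\cup J)=x(Y)+x(J)<f(Y)+f(I\cup J)-f(I)\le f(Y\cup J),
\end{align}
contradicting $x\in\ddot{\B}$.

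I expect the main obstacle to be choosing the right sets to feed into supermodularity: the statement pins down a swap between $I$ and $J$, but the obstructing tight sets $Y_{ij}$ may extend well outside $I\cup J$, so one must verify that the aggregated set $Y$ interacts cleanly with $I\cup J$. Once $I\subseteq Y$ and $Y\cap J=\emptyset$ are established from \Cref{prop:tightsets}, the supermodular inequality collapses to precisely the bound needed to derive the contradiction, so the rest is routine.
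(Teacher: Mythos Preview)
Your argument is correct and is cleaner than the paper's. You build the tight set $Y\supseteq I$ disjoint from $J$ exactly as the paper does (they call it $T$), but then finish in one supermodular step: applying $f(Y)+f(I\cup J)\le f(Y\cup J)+f(I)$ together with tightness of $Y$ and the hypothesis $x(J)<f(I\cup J)-f(I)$ immediately gives $x(Y\cup J)<f(Y\cup J)$. The paper instead first derives $x(I)>f(I)$, then iteratively applies \Cref{prop:B-EXC-XZ} to produce a sequence $x^{(1)},\dots,x^{(s)}\in\ddot{\B}$ lowering $x$ on $I$ down to $f(I)$, arguing each exchange partner must land in $T\setminus I$ (hence outside $J$); the final point then satisfies $x^{(s)}(I\cup J)=f(I)+x(J)<f(I\cup J)$, a contradiction. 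Your single supermodularity application replaces this entire iterative detour, so your proof is shorter and avoids invoking \Cref{prop:B-EXC-XZ} at all.
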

\begin{proof}
Since $f(I\cup J)\le x(I\cup J)< x(I)+f(I\cup J)-f(I)$, we have $x(I)>f(I)$.

Suppose to the contrary that $x-\chi_i+\chi_j\not\in\ddot{\B}$ for any $i\in I$ and $j\in J$.
For each $i\in I$ and $j\in J$, let $T_{ij}$ be a subset of $N$ such that $x(Y_{ij})=f(T_{ij})$, $i\in T_{ij}$, and $j\not\in T_{ij}$.
Then, $T\coloneqq\bigcup_{i\in I}\bigcap_{j\in J}T_{ij}$ satisfies $I\subseteq T\subseteq N\setminus J$ and $x(T)=f(T)$ by \Cref{prop:tightsets}.

By \Cref{prop:B-EXC-XZ} and $x(I)>f(I)$, there exist $i_1\in I$ and $j_1\in N\setminus I$ such that $x^{(1)}\coloneqq x-\chi_{i_1}+\chi_{j_1}\in\ddot{\B}$.
Here, $j_1$ must be in $T$ since otherwise $x^{(1)}(T)=x(T)-1 \geq f(T)$, which is a contradiction.
Let $s \coloneqq f(I)-x(I)$.
Similarly, for each $k=1,2,\dots,s$, 
since $x^{(k-1)}(I)>f(I)$,
we can choose $i_k\in I$ and $j_k\in T\setminus I$ such that $x^{(k)}\coloneqq x^{(k-1)}-\chi_{i_k}+\chi_{j_k}\in\ddot{\B}$.
However, we have $x^{(s)}(I\cup J)=f(I)+x(J)<f(I\cup J)$, which contradicts $x^{(s)}\in\ddot{\B}$.
\end{proof}

\begin{proposition}\label{prop:B-EXC-M}
For any $x\in \ddot{\B}$ and $y\in\bar{\B}$, and $i\in \supp^+(x-y)$, there exists $j\in\supp^-(x-y)$ such that $x-\chi_i+\chi_j\in\ddot{\B}$.
Also, for any $x\in \ddot{\B}$ and $y\in\bar{\B}$, and $i\in \supp^-(x-y)$, there exists $j\in\supp^+(x-y)$ such that $x+\chi_i-\chi_j\in\ddot{\B}$.
\end{proposition}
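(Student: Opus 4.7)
The plan is to mimic the tight-set argument behind the M-convex exchange property (Proposition~\ref{prop:B-EXC-Z}) and the proof of Proposition~\ref{prop:B-EXC-XZ}, modified to accommodate the fact that $y$ lies in the (continuous) base-polyhedron $\bar{\B}$ and need not be integral. The crucial observation is that whenever $x(X)=f(X)$ one still has the one-sided bound $y(X)\ge f(X)=x(X)$ simply because $y\in\bar{\B}$; tightness against $f$ is an obstruction that survives the passage from $\ddot{\B}$ to $\bar{\B}$.

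For the first claim I would argue by contradiction. Assuming $x-\chi_i+\chi_j\notin\ddot{\B}$ for every $j\in\supp^-(x-y)$, I would first note that the only way a unit swap between two coordinates of the integral vector $x\in\ddot{\B}$ can fail is to violate $x(X)\ge f(X)$ for some $X$ with $i\in X$, $j\notin X$; since $x(X)$ and $f(X)$ are integers with $x(X)-1<f(X)\le x(X)$, tightness $x(X)=f(X)$ must hold. Extracting for each $j\in\supp^-(x-y)$ such a tight set $Y_j$ and setting $Y=\bigcap_j Y_j$, which remains tight by Proposition~\ref{prop:tightsets}, gives a set with $i\in Y$ and $Y\cap\supp^-(x-y)=\emptyset$. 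Every $k\in Y$ then satisfies $x_k\ge y_k$, strictly so at $k=i$, so $x(Y)>y(Y)\ge f(Y)$, contradicting $x(Y)=f(Y)$.

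For the second claim I would apply the dual argument via unions: for each $j\in\supp^+(x-y)$ violating $x+\chi_i-\chi_j\in\ddot{\B}$, pick a tight $Y_j$ with $j\in Y_j$ and $i\notin Y_j$, and set $Y=\bigcup_j Y_j$, which is tight by Proposition~\ref{prop:tightsets}, contains $\supp^+(x-y)$, and avoids $i$. Passing to the complement $Z=N\setminus Y$ and using $x(N)=y(N)=f(N)$ together with $x(Y)=f(Y)\le y(Y)$ yields $x(Z)\ge y(Z)$; but on $Z$ every coordinate satisfies $y_k\ge x_k$, strictly at $k=i$, forcing $y(Z)>x(Z)$, again a contradiction.

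I do not expect any serious obstacle: the opening step that converts the non-existence of a feasible swap into a family of tight sets is the same maneuver as in Propositions~\ref{prop:B-EXC-XZ} and~\ref{prop:B-EXC-IJZ}, and the only new ingredient is the harmless observation that $y\in\bar{\B}$ still obeys $y(X)\ge f(X)$ even without integrality. The mild subtlety worth stating clearly is the switch between intersections (for the $\supp^-$ case) and unions (for the $\supp^+$ case), reflecting the asymmetry between an obstructing set containing $i$ versus avoiding $i$.
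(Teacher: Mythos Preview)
Your argument is correct and complete; the tight-set contradiction works exactly as you describe in both directions, and the switch from intersections to unions is handled properly.

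The paper, however, takes a different and somewhat shorter route. Instead of building obstructing tight sets directly, it observes that $x\in\ddot{\B}\subseteq\bar{\B}$ and applies the \emph{continuous} exchange property (Proposition~\ref{prop:B-EXC-R}) to the pair $x,y\in\bar{\B}$: this immediately yields some $j\in\supp^-(x-y)$ and $\alpha_0>0$ with $x-\alpha_0(\chi_i-\chi_j)\in\bar{\B}$. From this, for every $X$ with $i\in X$ and $j\notin X$ one has $x(X)-\alpha_0\ge f(X)$, and integrality of $x(X)$ and $f(X)$ upgrades this to $x(X)-1\ge f(X)$, giving $x-\chi_i+\chi_j\in\ddot{\B}$. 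So the paper reduces the mixed statement to the already-proven continuous exchange and a one-line integrality step, whereas your proof is a self-contained tight-set argument in the spirit of Propositions~\ref{prop:B-EXC-XZ} and~\ref{prop:B-EXC-IJZ}. Your approach has the virtue of not depending on Proposition~\ref{prop:B-EXC-R} and of making the role of $y\in\bar{\B}$ (only the inequality $y(X)\ge f(X)$ is used) more transparent; the paper's approach is more economical once B-EXC[$\mathbb{R}$] is available.
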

\begin{proof}
We only provide a proof for the former part, as the latter part can be demonstrated in a similar manner.
By \Cref{prop:B-EXC-R}, there exists some $j\in\supp^-(x-y)$ and a positive real $\alpha_0$ such that $x'\coloneqq x-\alpha_0(\chi_i-\chi_j)\in \bar{\B}$.
For any $X\subseteq N$ with $i\in X$ and $j\not\in X$, we have $x'(X)=x(X)-\alpha_0\ge f(X)$, and hence $x(X)-1\ge f(X)$ since $x(X)$ and $f(X)$ are integers.
Therefore, $x-\chi_i+\chi_j \in \ddot{B}$ holds.
\end{proof}

Moreover, the following properties also hold.
\begin{proposition}\label{prop:M-ijk}
    For any $x\in\ddot{\B}$, if 
    $x-\chi_{i}+\chi_{j}\in\ddot{\B}$ and 
    $x-\chi_{j}+\chi_{k}\in\ddot{\B}$,
    then it holds that $x-\chi_i+\chi_k\in\ddot{\B}$.
\end{proposition}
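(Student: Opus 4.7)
The plan is to verify directly that $y := x - \chi_i + \chi_k$ belongs to $\ddot{\B}$ by checking the defining inequalities of the integral base-polyhedron. The vector $y$ is integer-valued and satisfies $y(N) = x(N) = f(N)$, so the only thing to establish is $y(X) \ge f(X)$ for every $X \subseteq N$. If $i = k$ there is nothing to prove, so we assume $i \ne k$ (the degenerate cases $j = i$ and $j = k$ also reduce immediately, since then one hypothesis becomes $x \in \ddot{\B}$ and the other is exactly the desired conclusion).

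For the generic inequality, I split on which of $i, k$ lies in $X$. If $i \notin X$ or $k \in X$, then $y(X) \ge x(X) \ge f(X)$ and we are done. The only nontrivial case is $i \in X$ and $k \notin X$, where $y(X) = x(X) - 1$, so I need $x(X) \ge f(X) + 1$.

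The main step is therefore to rule out tightness: show that no subset $X$ with $i \in X$, $k \notin X$ can satisfy $x(X) = f(X)$. I argue by contradiction. Assume $x(X) = f(X)$. Applying the membership $x - \chi_i + \chi_j \in \ddot{\B}$ to the set $X$ gives
\begin{align*}
f(X) \le (x - \chi_i + \chi_j)(X) = f(X) - 1 + [j \in X],
\end{align*}
which forces $j \in X$. But then, applying $x - \chi_j + \chi_k \in \ddot{\B}$ to the same $X$ (using $j \in X$ and $k \notin X$) yields
\begin{align*}
f(X) \le (x - \chi_j + \chi_k)(X) = f(X) - 1,
\end{align*}
a contradiction. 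Hence $x(X) \ge f(X) + 1$ whenever $i \in X$ and $k \notin X$, completing the verification that $y \in \ddot{\B}$.

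I do not anticipate any real obstacle here: the argument is a short case check, and the core idea is simply that the two given exchanges pass through a tight set only if that set already contains $j$, which then breaks the second exchange. The only care needed is to handle the degenerate coincidences among $i, j, k$ at the outset so that the tightness argument applies to the genuinely nontrivial case.
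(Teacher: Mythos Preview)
Your proof is correct. It differs from the paper's in approach: the paper sets $x' = x - \chi_i + \chi_j$ and $y' = x - \chi_j + \chi_k$, observes that $\supp^+(x'-y') = \{j\}$ and $\supp^-(x'-y') = \{i,k\}$, and applies the symmetric exchange property (B-EXC[$\mathbb{Z}$], \Cref{prop:B-EXC-Z}) once; either choice of partner in $\{i,k\}$ yields $x - \chi_i + \chi_k \in \ddot{\B}$. Your argument instead goes back to the defining inequalities $y(X) \ge f(X)$ and uses a short tightness contradiction: a tight set $X$ with $i \in X$, $k \notin X$ is forced to contain $j$ by the first hypothesis, and then violates the second. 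This is more elementary in that it does not invoke the exchange axiom as a black box, and it makes transparent exactly which inequalities are at stake; the paper's route is shorter once B-EXC[$\mathbb{Z}$] is available and highlights that the result is a formal consequence of the exchange structure alone.
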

\begin{proof}
    Let $x'\coloneqq x-\chi_{i}+\chi_{j}$ and $y'\coloneqq x-\chi_{j}+\chi_{k}$. 
    We apply \Cref{prop:B-EXC-Z} to $x'$ and $y'$. 
    Since $\supp^+(x'-y')=\{j\}$ and $\supp^-(x'-y')=\{i,k\}$, we have two possibilities.
    In both cases, we have $x'-\chi_j+\chi_k \in \ddot{B}$.
\end{proof}

\begin{proposition}[{\cite[Lemma~4.5]{fujishige2005}}]\label{prop:no-shortcut}
    Let $x\in\ddot{\B}$. 
    Suppose that there exist a sequence $i_1,j_1,\dots,i_r,j_r$ of $2r$ distinct elements in $N$
    such that 
    $x-\chi_{i_h}+\chi_{j_k}\in\ddot{\B}$ if $h=k$ and 
    $x-\chi_{i_h}+\chi_{j_k}\not\in\ddot{\B}$ if $h>k$ for $h,k\in[r]$. 
    Then, it holds that $x-\sum_{k\in[r]}(\chi_{i_k}-\chi_{j_k})\in\ddot{\B}$.
\end{proposition}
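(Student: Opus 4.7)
The plan is to argue by induction on $r$. The cases $r\le 1$ are immediate, since for $r=1$ the conclusion coincides with the diagonal hypothesis. For the inductive step I would peel off the first swap by setting $y:=x-\chi_{i_1}+\chi_{j_1}$, which lies in $\ddot{\B}$ by the hypothesis at $h=k=1$. If I can verify that the truncated sequence $i_2,j_2,\dots,i_r,j_r$ satisfies the hypotheses of the proposition relative to $y$, then the inductive hypothesis applied to $y$ yields $y-\sum_{k=2}^{r}(\chi_{i_k}-\chi_{j_k})\in\ddot{\B}$, which is exactly the desired vector $x-\sum_{k\in[r]}(\chi_{i_k}-\chi_{j_k})$.

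Verifying the diagonal inclusion $y-\chi_{i_h}+\chi_{j_h}\in\ddot{\B}$ for each $h\in\{2,\dots,r\}$ should be a single application of the exchange axiom (Proposition~\ref{prop:B-EXC-Z}) to the pair $y$ and $x-\chi_{i_h}+\chi_{j_h}$. Their difference has positive support $\{j_1,i_h\}$ and negative support $\{i_1,j_h\}$, so choosing the excess index $i_h$ produces some $j\in\{i_1,j_h\}$ with $y-\chi_{i_h}+\chi_j\in\ddot{\B}$. The option $j=i_1$ would give $x-\chi_{i_h}+\chi_{j_1}\in\ddot{\B}$, contradicting the no-shortcut hypothesis since $h>1$; hence $j=j_h$ and the diagonal inclusion holds.

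Verifying the no-shortcut exclusion for $y$ is the main obstacle. Suppose for contradiction that some $h,k\in\{2,\dots,r\}$ with $h>k$ satisfies $z:=y-\chi_{i_h}+\chi_{j_k}\in\ddot{\B}$. I would apply Proposition~\ref{prop:B-EXC-Z} to the pair $z,x$: the difference $z-x$ has positive support $\{j_1,j_k\}$ and negative support $\{i_1,i_h\}$, and choosing the excess index $j_1$ yields some $j'\in\{i_1,i_h\}$ such that both $z-\chi_{j_1}+\chi_{j'}\in\ddot{\B}$ and $x+\chi_{j_1}-\chi_{j'}\in\ddot{\B}$. The branch $j'=i_1$ produces $x-\chi_{i_h}+\chi_{j_k}\in\ddot{\B}$, contradicting the hypothesis as $h>k$, while the branch $j'=i_h$ produces $x-\chi_{i_h}+\chi_{j_1}\in\ddot{\B}$, contradicting the hypothesis as $h>1$. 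Thus neither branch is consistent, so $z\notin\ddot{\B}$ and the exclusion is established. The crucial subtlety is that both conclusions of the exchange axiom, namely the swap applied to $z$ \emph{and} the companion swap applied to $x$, must be exploited, one per branch; using only the swap on $z$ (as is typical) does not suffice. With both checks in hand, the induction closes.
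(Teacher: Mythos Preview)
The paper does not actually prove this proposition; it is quoted from Fujishige's monograph \cite[Lemma~4.5]{fujishige2005} without argument. So there is no in-paper proof to compare against.

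Your inductive argument is correct. The two verifications for the truncated sequence at the shifted base point $y=x-\chi_{i_1}+\chi_{j_1}$ are exactly the right checks, and you handle each cleanly with the symmetric exchange axiom (Proposition~\ref{prop:B-EXC-Z}). In the diagonal step, the option $j=i_1$ rewrites to $x-\chi_{i_h}+\chi_{j_1}$, which is excluded since $h>1$; in the no-shortcut step, your observation that one branch is ruled out by the $z$-side conclusion and the other by the $x$-side conclusion of B-EXC is the crux, and both contradictions are valid as written.

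For context, the textbook proof in Fujishige's Lemma~4.5 is usually presented via tight sets rather than induction: one shows that for each $X\subseteq N$ the vector $x'=x-\sum_k(\chi_{i_k}-\chi_{j_k})$ satisfies $x'(X)\ge f(X)$ by pairing each $j_k\in X$ with some $i_{h(k)}\in X$ using the fact that $x-\chi_{i_h}+\chi_{j_k}\notin\ddot{\B}$ forces a tight set separating them. The no-shortcut condition guarantees a system of distinct representatives (a permutation matching) via Hall's theorem or a triangularity argument. Your inductive route is arguably more elementary in that it never unpacks the defining inequalities of $\ddot{\B}$ and works purely from the exchange axiom; the price is the slightly delicate use of \emph{both} halves of the symmetric exchange in the exclusion step, which you correctly flag.
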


\begin{proposition}[integrally convexity~\cite{Murota:DCA}]\label{prop:integrally-convexity}
\begin{align}
x\in\bar{\B} \implies x\in\conv(\bar{\B}\cap \{y\in\mathbb{Z}^N\mid \|x-y\|_{\infty}<1\}).
\end{align}    
\end{proposition}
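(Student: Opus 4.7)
The plan is to induct on $|F(x)|$, where $F(x) := \{i \in N : x_i \notin \mathbb{Z}\}$ denotes the set of fractional coordinates of $x$. The base case $|F(x)| = 0$ is immediate since $x$ itself lies in $\ddot{\B}$ and satisfies $\|x - x\|_\infty = 0 < 1$.

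For the inductive step with $|F(x)| \geq 1$, I would find two distinct indices $i, j \in F(x)$ and positive reals $\alpha^+, \alpha^-$ such that $x^\pm := x \pm \alpha^\pm (\chi_i - \chi_j)$ both lie in $\bar{\B}$ and satisfy $|F(x^\pm)| < |F(x)|$. The existence of such a pair rests on the following structural observation: define the equivalence relation $k \sim l$ on $N$ by the condition that no tight set (i.e., no $X \subseteq N$ with $x(X) = f(X)$) separates $k$ and $l$. Using Proposition~\ref{prop:tightsets}, each equivalence class $K$ can be written as $T_1 \setminus T_2$ for tight sets $T_2 \subsetneq T_1$, taking $T_1$ to be the smallest tight set containing $K$ and $T_2$ the union of all tight sets strictly contained in $T_1$ (both tight by the lattice property). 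Hence $x(K) = f(T_1) - f(T_2) \in \mathbb{Z}$, and since indices outside $F(x)$ contribute integers to $x(K)$, the fractional parts of $\{x_k\}_{k \in K \cap F(x)}$ must themselves sum to an integer. This forces $|K \cap F(x)| \neq 1$, so some class must contain at least two elements of $F(x)$; pick $i, j$ there. Since no tight set separates $i$ from $j$, the direction $\chi_i - \chi_j$ preserves every tight inequality, and the slack in the remaining inequalities yields a maximal interval $[-\alpha^-, \alpha^+]$ with $\alpha^\pm > 0$ on which $x + \epsilon(\chi_i - \chi_j) \in \bar{\B}$. If an endpoint is reached by a previously non-tight inequality becoming tight rather than by $x_i$ or $x_j$ reaching an integer value, I iterate with the refined tight-set family; this terminates in finitely many rounds because the tight-set family strictly enlarges and is bounded in size by $2^{|N|}$.

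To conclude, I write $x = \tfrac{\alpha^-}{\alpha^+ + \alpha^-}\, x^+ + \tfrac{\alpha^+}{\alpha^+ + \alpha^-}\, x^-$ and apply the induction hypothesis to $x^+$ and $x^-$. Because the perturbations only modify coordinates $i, j$ and keep $x^\pm_i \in [\lfloor x_i \rfloor, \lceil x_i \rceil]$ and $x^\pm_j \in [\lfloor x_j \rfloor, \lceil x_j \rceil]$, the unit box around each of $x^\pm$ is contained in that around $x$, so combining the two decompositions yields $x \in \conv(\ddot{\B} \cap \{y \in \mathbb{Z}^N : \|x - y\|_\infty < 1\})$. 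The main obstacle is proving the structural lemma on tight-set equivalence classes — that each class has integer $x$-mass and therefore cannot contain a lone fractional coordinate — which depends crucially on the distributive-lattice property of tight sets (Proposition~\ref{prop:tightsets}) and the integer-valuedness of the supermodular function $f$; additional care is required to handle termination in the sub-procedure when a new tight set emerges before an integer value is reached.
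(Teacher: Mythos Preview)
The paper does not supply a proof of this proposition; it is quoted with a citation to Murota and used as a black box, so there is no in-paper argument to compare against.

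Your argument is correct in outline and is one of the standard direct proofs. The structural lemma is right: each $\sim$-class $K$ equals $T_1 \setminus T_2$ for tight sets $T_2 \subsetneq T_1$ (take $T_1$ to be the intersection of all tight sets containing $K$ and $T_2$ the union of all tight sets strictly inside $T_1$; both are tight by Proposition~\ref{prop:tightsets}, and any tight $Y \subsetneq T_1$ must miss $K$, since otherwise $K \subseteq Y$ by the definition of $\sim$, contradicting minimality of $T_1$). Hence $x(K) = f(T_1) - f(T_2) \in \mathbb{Z}$, and no class meets $F(x)$ in exactly one point.

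The only place that needs tightening is the ``iterate with the refined tight-set family'' clause. As stated, your induction is on $|F(x)|$ alone, but when the feasibility endpoint $x^+$ is reached via a new tight constraint before either $x_i$ or $x_j$ becomes integer, you have $|F(x^+)| = |F(x)|$ and the induction hypothesis does not apply. ``Iterating'' from $x^+$ in a fresh direction $\chi_{i'} - \chi_{j'}$ would destroy the convex-combination identity $x = \lambda x^+ + (1-\lambda)x^-$ that you rely on. The clean fix is to induct instead on the pair $\bigl(|F(x)|,\ 2^{|N|} - |\mathcal{D}(x)|\bigr)$ in lexicographic order, with $\mathcal{D}(x)$ as in Proposition~\ref{prop:tightsets}: since $(\chi_i - \chi_j)(X) = 0$ for every $X \in \mathcal{D}(x)$, each such $X$ remains tight at $x^\pm$, so $\mathcal{D}(x) \subseteq \mathcal{D}(x^\pm)$, and at each endpoint either $|F|$ drops or $\mathcal{D}$ strictly grows. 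You may then apply the strengthened induction hypothesis directly to both $x^+$ and $x^-$ with no separate iteration, and your unit-box containment observation carries over unchanged.
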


\subsection{The hybrid problem and relation to fair allocation}\label{subsec:relation_fair}
Here, we explain how integral base-polyhedra and M-convex sets appear in a fair allocation with binary additive valuations.
For a subset $X\subseteq N$ of agents, we define $f_M,f_C,f_E\colon 2^N\to\mathbb{Z}_+$ as follows:
\begin{itemize}
    \item $f_M(X)=|\{g\in M: v_{ig}=0~(\forall i\not\in X)\}|$ is the number of indivisible goods that must be allocated to agents in $X$,
    \item $f_C(X)=|\{c\in C: v_{ic}=0~(\forall i\not\in X)\}|$ is the number of divisible goods that must be allocated to agents in $X$, and
    \item $f_E(X)=f_M(X)+f_C(X)$ is the number of goods that must be allocated to agents in $X$.
\end{itemize}
It is not difficult to see that the functions $f_M,f_C,f_E$ are normalized integer-valued supermodular.\footnote{Most of our results can be extended to the case when each agent evaluates indivisible goods with a matroid rank function and divisible goods with the concave closure of a matroid rank function because the functions $f_M,f_C,f_E$ continue to be normalized integer-valued supermodular in this scenario. }
Let $\ddot{\B}_M$ and $\bar{\B}_C$ be the M-convex set of $f_M$ and the integral base-polyhedron of $f_C$, respectively. In addition, let $\B_E$ be the Minkowski sum of $\ddot{\B}_M$ and $\bar{\B}_C$, i.e., $\B_E=\{x+y\mid x\in\ddot{\B}_M,\ y\in\bar{B}_C\}$.
Then, $\B_E$ is the set of possible utility vectors.

Recall that $\bar{\B}_M=\conv(\ddot{\B}_M)$ and  $\ddot{\B}_C=\bar{B}_C\cap\mathbb{Z}^N$.
We denote $\bar{\B}_E=\conv(\B_E)$, and $\ddot{\B}_E=\B_E\cap\mathbb{Z}^N$.
Note that $\B_E$ is not necessarily an M-convex set or an integral base-polyhedron as we have seen in \Cref{ex:neither}.

Then we can rewrite finding a $\Phi$-fair allocation for a symmetric strictly convex function $\Phi\colon \mathbb{R}^N\to\mathbb{R}$ as the problem of finding a vector $z$ that attains
\begin{align}
\min_{z\in\B_E}\Phi(z)~(=\min_{x\in\ddot{\B}_M}\min_{y\in\bar{\B}_C}\Phi(x+y)).
\end{align}
This formulation is not restricted to the fair allocation.
An optimal solution $z$ is called \emph{$\Phi$-minimizer} (on $\B_E$).
We omit the feasible region if it is clear.
If $M=\emptyset$ or $C=\emptyset$, this minimization problem can be solved in polynomial time, as shown in the next subsection.

\subsection{Principal Partition and Canonical Partition}\label{subsec:partition}

In this subsection, we summarize the structures of $\Phi$-minimizers on integral base-polyhedra or M-convex sets.

Consider the integral base-polyhedron $\bar{\B}$ and the M-convex set $\ddot{\B}$ of a supermodular function $f\colon 2^N\to\mathbb{Z}_+$.
For any real number $\lambda$, let $\cL(\lambda)$ be the set of all maximizers of $f(X)-\lambda|X|$, i.e., $\cL(\lambda)=\argmax_{X\subseteq N}(f(X)-\lambda|X|)$. 
Note that $\cL$ has a lattice structure (see \Cref{prop:tightsets}), i.e., $\cL$ is closed under union and intersection. Let $\rL(\lambda)$ be the smallest member in $\cL(\lambda)$.
It is known that $\rL(\lambda)\subseteq \rL(\lambda')$ for any $\lambda>\lambda'$ (see, e.g., \cite[Proposition 3.1]{FMdecmin2}).

Fujishige~\cite{fujishige1980} characterized the optimal utility vectors by the \emph{principal partition} of $N$.
There are at most $|N|$ number of $\lambda$ for which $|\cL(\lambda)|\ge 2$. Let us denote such numbers as $\lambda_1>\lambda_2>\dots>\lambda_r$, which are called the \emph{critical values}.
The principal partition $\hat{N}_1,\hat{N}_2,\dots,\hat{N}_r$ is a partition of $N$ defined by
\begin{align}
    \hat{N}_j=\rL(\lambda_j')-\rL(\lambda_j)\quad(j=1,2,\dots,r),
\end{align}
where $\lambda_j'$ is an arbitraly real satisfying $\lambda_j>\lambda_j'>\lambda_{j+1}$ (assuming that $\lambda_{r+1}=-\infty$).
\begin{theorem}[{Fujishige~\cite{fujishige1980} and Maruyama~\cite{Maruyama1978}}]\label{thm:principal}
    The unique minimizer $x^*$ of $\min_{x\in\bar{\B}}\Phi(x)$ satisfies $x^*_i=\lambda_j$ for each $i\in\hat{N}_j$ and $j \in [r]$.
    The principal partition and critical values can be found in strongly polynomial time.
\end{theorem}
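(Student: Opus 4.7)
The plan is to reduce the theorem to a local tightness condition enjoyed by the unique minimizer and then to identify the level sets of this minimizer with the principal partition. First, I would note that existence of a minimizer on the compact convex set $\bar{\B}$ is standard, and uniqueness is immediate from the strict convexity of $\Phi$; denote this minimizer by $x^*$.

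The main step I would prove is the following tightness property: for every $\alpha\in\mathbb{R}$, the super-level set $X_\alpha:=\{i\in N: x^*_i>\alpha\}$ satisfies $x^*(X_\alpha)=f(X_\alpha)$. The argument is by contradiction: if $x^*(X_\alpha)>f(X_\alpha)$, then \Cref{prop:B-EXC-XR} supplies $i\in X_\alpha$, $j\in N\setminus X_\alpha$, and $\epsilon>0$ with $x^*-\epsilon(\chi_i-\chi_j)\in\bar{\B}$; since $x^*_i>\alpha\ge x^*_j$, inequality~\eqref{eq:ssc} then shows that a sufficiently small such move strictly decreases $\Phi$, contradicting optimality.

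Once this tightness is in hand, identification with the principal partition is essentially bookkeeping. For any $\alpha$ and any $X\subseteq N$, $f(X)-\alpha|X|\le x^*(X)-\alpha|X|=\sum_{i\in X}(x^*_i-\alpha)$, with equality on the left iff $X$ is tight, and the right-hand side is maximized exactly on those $X$ sandwiched between $\{i:x^*_i>\alpha\}$ and $\{i:x^*_i\ge\alpha\}$. Hence $\rL(\alpha)=X_\alpha$, the critical values $\lambda_1>\dots>\lambda_r$ are the distinct coordinate values of $x^*$, and unwinding the definition gives $\hat{N}_j=\{i:x^*_i=\lambda_j\}$, the desired characterization. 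For the computational statement, I would trace the monotone family $\{\rL(\lambda)\}_\lambda$ by repeatedly computing the largest density $\max_{\emptyset\ne X\subseteq N}f(X)/|X|$ via submodular function minimization and contracting the ground set to the residual block, yielding a strongly polynomial procedure.

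The hard part is the tightness step: extracting a strict $\Phi$-decrease from a single exchange in $\bar{\B}$ requires matching the set-level exchange property (\Cref{prop:B-EXC-XR}) with the pointwise strict convexity inequality~\eqref{eq:ssc}, and choosing $\epsilon$ small enough to preserve both feasibility and the $x^*_i>x^*_j$ ordering needed to invoke~\eqref{eq:ssc}; everything else then follows by essentially rewriting the definitions.
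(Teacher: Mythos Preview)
The paper does not prove this theorem; it is quoted as a classical result of Fujishige and Maruyama, with a pointer to Fujishige's book and survey~\cite{fujishige2005,fujishige2009} for details, so there is no in-paper argument to compare against.

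That said, your argument is correct and is essentially the standard one. The tightness of every strict super-level set $X_\alpha$ via \Cref{prop:B-EXC-XR} and \eqref{eq:ssc} is exactly the mechanism the paper itself deploys later in \Cref{lem:div-layer} (for the hybrid setting), and once that is in hand the identification $\rL(\alpha)=X_\alpha$ follows as you describe. One detail worth making explicit: to conclude that the critical values are precisely the distinct coordinates of $x^*$ you also need the upper endpoint $\{i:x^*_i\ge\alpha\}$ of the sandwich to be a maximizer, hence tight; this holds because it equals $X_{\alpha-\delta}$ for all sufficiently small $\delta>0$. The algorithmic sketch via iterated maximum-density computations and submodular function minimization is likewise the standard route.
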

For more details of the principal partition, see a book and a survey of Fujishige~\cite{fujishige2005,fujishige2009}.

Frank and Murota~\cite{FM2022a} characterized the optimal utility vectors by the \emph{canonical partition} of $N$.
There are at most $|N|$ number of $\beta\in\mathbb{Z}$ for which $\rL(\beta)\ne\rL(\beta-1)$. 
Let us denote such numbers as $\beta_1>\beta_2>\dots>\beta_q$, which are called the \emph{essential values}.
The canonical partition $N_1,N_2,\dots,N_q$ is a partition of $N$ defined by
\begin{align}
    N_i=\rL(\beta_i-1)-\rL(\beta_i)\quad(i=1,2,\dots,q).
\end{align}
Alternatively, the canonical partition and the essential values can be obtained by the following procedure~\cite[Section 3]{FMdecmin2}: for $j=1,2,\dots,q$, define
\begin{align}
    \beta_j&=\max\left\{\left\lceil\frac{f(X\cup \bigcup_{j'=1}^{j-1}N_{j'})-f(\bigcup_{j'=1}^{j-1}N_{j'})}{|X|}\right\rceil\mid \emptyset\ne X\subseteq N\setminus \bigcup_{j'=1}^{j-1}N_{j'}\right\},\label{eq:beta_j}\\
    h_j(X)&=\textstyle f(X\cup \bigcup_{j'=1}^{j-1}N_{j'})-(\beta_{j}-1)|X|-f(\bigcup_{j'=1}^{j-1}N_{j'})\quad(\forall X\subseteq N\setminus \bigcup_{j'=1}^{j-1}N_{j'}), \ \text{and}\\
    N_j&=\textstyle\text{smallest subset of $N\setminus \bigcup_{j'=1}^{j-1}N_{j'}$ maximizing }h_j.
\end{align}
They provided a strongly polynomial-time algorithm to compute the canonical partition and the essential values by using this structure and a strongly polynomial-time algorithm for the submodular function minimization~\cite{IFF2001,Schrijver2000}.
\begin{theorem}[{Frank and Murota~\cite{FM2022a,FM2022b,FMdecmin2}}]\label{thm:canonical}
    The essential values $\beta_1>\beta_2>\dots>\beta_q$ are obtained from the critical values $\lambda_1>\lambda_2>\dots>\lambda_r$ as the distinct members of the rounded-up integers $\lceil\lambda_1\rceil\ge\lceil\lambda_2\rceil\ge\dots\ge\lceil\lambda_r\rceil$.
    Moreover, the canonical partition is an aggregation of the principal partition as $N_i=\bigcup_{j:\,\lceil \lambda_j\rceil=\beta_i}\hat{N}_j$ for each $i\in[q]$.
    Any minimizer $y^*$ of $\min_{y\in\ddot{\B}}\Phi(y)$ satisfies $\beta_j-1\le y^*_i\le\beta_j$ for each $i\in N_j$ and $j\in [q]$.
    The minimizer $y^*$, the canonical partition, and essential values can be found in strongly polynomial time with respect to $|N|$.
\end{theorem}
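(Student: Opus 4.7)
The plan is to derive all four claims from the behavior of the monotone set-valued step function $\rL(\lambda)$, whose jump discontinuities are exactly the critical values $\lambda_1 > \dots > \lambda_r$, with the $j$-th jump equal to the principal part $\hat{N}_j$.

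First I would handle the structural claims (1) and (2). By definition, $\beta \in \mathbb{Z}$ is essential iff $\rL(\beta) \ne \rL(\beta - 1)$. Since $\rL$ changes only at critical values, this is equivalent to some $\lambda_j$ lying in the interval $(\beta - 1, \beta]$, i.e., $\lceil \lambda_j \rceil = \beta$. Hence the essential values are precisely the distinct members of $\lceil \lambda_1 \rceil \ge \dots \ge \lceil \lambda_r \rceil$. The set $N_i = \rL(\beta_i - 1) \setminus \rL(\beta_i)$ accumulates the jumps of $\rL$ over $(\beta_i - 1, \beta_i]$, so $N_i = \bigsqcup_{j : \lceil \lambda_j \rceil = \beta_i} \hat{N}_j$ by disjointness of the principal parts.

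For claim (3), let $x^* \in \bar{\B}$ be the unique real $\Phi$-minimizer from Theorem~\ref{thm:principal}. Combining this with (1) and (2) gives $x^*_i \in (\beta_j - 1, \beta_j]$ whenever $i \in N_j$. I would then invoke the proximity theorem for M-convex minimization (recalled in the introduction), which asserts that any integer $\Phi$-minimizer $y^* \in \ddot{\B}$ satisfies $\|y^* - x^*\|_\infty < 1$, and hence $y^*_i \in \{\lfloor x^*_i \rfloor, \lceil x^*_i \rceil\} \subseteq \{\beta_j - 1, \beta_j\}$. A self-contained argument for the proximity step would pick $i \in \supp^+(y^* - x^*)$ with $y^*_i - x^*_i \ge 1$, use \Cref{prop:B-EXC-M} to find $j \in \supp^-(y^* - x^*)$ with $y^* - \chi_i + \chi_j \in \ddot{\B}$, express $x^*$ as a convex combination of integer points near $x^*$ via \Cref{prop:integrally-convexity}, and apply strict convexity \eqref{eq:ssc} to compare $\Phi(y^*)$ against that average and deduce $\Phi(y^* - \chi_i + \chi_j) < \Phi(y^*)$, contradicting optimality of $y^*$. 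Because a single swap can fail to decrease $\Phi$ when $\Phi$ is symmetric in the swapped coordinates, one may need to chain several exchanges using \Cref{prop:no-shortcut}.

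Finally, claim (4) is a corollary of the structural picture: Theorem~\ref{thm:principal} gives strongly polynomial computation of the critical values and the principal partition; rounding and aggregation produce the essential values and canonical partition; and an optimal $y^*$ is located within the box $\prod_j \{\beta_j - 1, \beta_j\}^{|N_j|} \cap \ddot{\B}$ by M-convex steepest descent, each iteration requiring a single SFM oracle call~\cite{IFF2001,Schrijver2000}. The hard part will be the proximity step in claim (3): strict (but not strong) convexity of $\Phi$ makes the existence of a strictly improving exchange delicate, and one must combine several of the exchange propositions of \Cref{subsec:definitions} with the fine-grained integral convexity structure of $\ddot{\B}$ to certify a strict decrease rather than merely a non-increase.
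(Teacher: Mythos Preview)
The paper does not prove \Cref{thm:canonical}; it is quoted from Frank and Murota~\cite{FM2022a,FM2022b,FMdecmin2} with no argument supplied. So there is no ``paper's own proof'' to compare against directly. That said, the paper \emph{does} prove a strict generalization of claim~(3): \Cref{thm:proximity} specializes to \Cref{thm:proximity_FM} (and hence to claim~(3) of \Cref{thm:canonical}) by taking $f_C\equiv 0$, as the paper itself notes.

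Your derivation of claims~(1) and~(2) from the step-function behaviour of $\rL(\lambda)$ is correct and is the standard argument. Your reduction of claim~(3) to the proximity theorem \Cref{thm:proximity_FM} is also correct: once $x^*_i=\lambda_k$ for $i\in\hat N_k$ and $\lceil\lambda_k\rceil=\beta_j$, the inequalities $\lfloor x^*_i\rfloor\le y^*_i\le\lceil x^*_i\rceil$ immediately give $\beta_j-1\le y^*_i\le\beta_j$. Claim~(4) follows as you say.

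Where your proposal is thin is the ``self-contained'' sketch of the proximity step. Picking $i$ with $y^*_i-x^*_i\ge 1$ and exchanging to some $j\in\supp^-(y^*-x^*)$ via \Cref{prop:B-EXC-M} does \emph{not} by itself give $y^*_i>y^*_j$, which is what \eqref{eq:ssc} needs for a strict decrease; $x^*_j$ can be much larger than $x^*_i$ (different principal parts), so $y^*_j<x^*_j$ is compatible with $y^*_j\ge y^*_i$. You acknowledge this and gesture at chaining exchanges via \Cref{prop:no-shortcut}, but that is exactly where all the work lies. The paper's proof of the more general \Cref{thm:proximity} shows what is actually required: one fixes a level $\beta$, builds an exchange graph on the level sets $N^{>},N^{=},N^{<}$, proves existence of a path from $N^{>}$ to $N^{<}$ (\Cref{claim:edge,claim:find-path}), and then carefully alternates discrete and continuous exchanges along a \emph{shortest} such path to produce a strictly improving move (\Cref{lem:transfer}). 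This level-by-level argument over the canonical chain (\Cref{lem:S1-upper,lem:S1-lower,lem:S1-complete,lem:S1-separating,lem:decomposition}) is structurally different from, and substantially more involved than, the single-global-exchange picture you outline. If you intend a self-contained proof rather than citing \Cref{thm:proximity_FM}, that machinery (or the analogous purely discrete version from~\cite{FMdecmin2}) is what you would need to supply.
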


\section{Structure of \texorpdfstring{$\Phi$}{Phi}-minimizers}\label{sec:structure}

In this section, we prove a proximity theorem and a structure based on the canonical partition.
Let $f_M$ and $f_C$ be two supermodular functions over $M$ and $C$, respectively.
Let also $f_E=f_M+f_C$. 
Recall that $\ddot{\B}_M$ and $\bar{\B}_C$ are a corresponding M-convex set and integral base-polyhedron, respectively.
In addition, $\B_E$ is the Minkowski sum of $\ddot{\B}_M$ and $\bar{\B}_C$.
Note that, in this section, we consider general integral supermodular functions, not restricted to the ones appearing in Section~\ref{subsec:relation_fair}.

\begin{theorem}\label{thm:proximity}
Let $\Phi$ be a symmetric strictly convex function.
For any $z^*\in\argmin_{z\in\B_E}\Phi(z)$ and $\bar{z}\in\argmin_{z\in\bar{\B}_E}\Phi(z)$, 
we have
\begin{align}
\lfloor \bar{z}_i\rfloor \le z^*_i\le \lceil \bar{z}_i\rceil \quad (i\in N).
\end{align}
\end{theorem}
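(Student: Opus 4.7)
The plan is to prove only the upper bound $z^*_i \le \lceil \bar z_i \rceil$; the lower bound is analogous. First I would observe that $\bar\B_E = \bar\B_M + \bar\B_C$ is the integral base-polyhedron of $f_E = f_M + f_C$ (the Minkowski sum of base-polyhedra of supermodular functions is the base-polyhedron of the sum), and therefore by \Cref{thm:principal} the minimizer $\bar z$ is unique and has coordinates constant on each block of the principal partition of $f_E$.

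Assume for contradiction that some index $i_0$ satisfies $z^*_{i_0} > \lceil \bar z_{i_0}\rceil$. The strategy is to find a second index $j$ and a positive real $\varepsilon$ with
\begin{equation}\label{eq:plan-exchange}
z^* - \varepsilon(\chi_{i_0} - \chi_j) \in \B_E
\quad\text{and}\quad
z^*_{i_0} > z^*_j,
\end{equation}
whereby \eqref{eq:ssc} gives $\Phi(z^* - \varepsilon(\chi_{i_0} - \chi_j)) < \Phi(z^*)$, contradicting the $\Phi$-minimality of $z^*$. To this end I would fix decompositions $z^* = x^* + y^*$ with $x^*\in\ddot\B_M$, $y^*\in\bar\B_C$ and $\bar z = \bar x + \bar y$ with $\bar x\in\bar\B_M$, $\bar y\in\bar\B_C$, chosen so as to minimize $\|x^* - \bar x\|_1$ among all valid joint decomposition pairs (with $z^*$ and $\bar z$ held fixed).

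Next I would build an exchange digraph $D$ on $N$ encoding feasible single-arc moves at $z^*$: a \emph{discrete} arc $u \to v$ whenever \Cref{prop:B-EXC-M} applied to $x^*\in\ddot\B_M$ and $\bar x\in\bar\B_M$ guarantees $x^* - \chi_u + \chi_v \in \ddot\B_M$, and a \emph{continuous} arc $u \to v$ whenever \Cref{prop:B-EXC-R} applied to $y^*$ and $\bar y$ in $\bar\B_C$ yields a positive-$\alpha$ swap. Using the tight-set closure of \Cref{prop:tightsets} together with the variant exchange properties \Cref{prop:B-EXC-XZ,prop:B-EXC-XR,prop:B-EXC-IJZ}, I would show that the set $R\subseteq N$ of vertices reachable from $i_0$ in $D$ cannot be a proper subset of $N$: otherwise $R$ would be simultaneously tight for the supermodular functions underlying $\ddot\B_M$ and $\bar\B_C$ at the chosen decompositions, and the extremal choice of $\|x^*-\bar x\|_1$ combined with the flow balance $\sum_k(z^*_k-\bar z_k)=0$ would force $z^*(R)\le \bar z(R)$, contradicting $z^*_{i_0}>\bar z_{i_0}$. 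Consequently, some $j$ with $z^*_j<\bar z_j$ is reachable from $i_0$; applying \Cref{prop:no-shortcut} and \Cref{prop:M-ijk} to the discrete segments of the path, together with a continuous analogue derived from \Cref{prop:B-EXC-R} for the mixed segments, I would contract the path into the single simultaneous move appearing in \eqref{eq:plan-exchange}.

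The main obstacle I anticipate is precisely this reachability step. In the purely discrete setting of \cite{FMdecmin2}, only the exchange properties of $\ddot\B_M$ are needed, but in the hybrid setting an arc in $D$ may require coupling a continuous move in $\bar\B_C$ with an opposite discrete move in $\ddot\B_M$, and the tight-set argument has to be run simultaneously for two supermodular functions whose tight families only interact partially. This is exactly the ``elaborate analysis of the exchange graph'' advertised by the authors, and it is what the extremal choice of decomposition is designed to make tractable. Once the path is in hand, the strict inequality $z^*_{i_0}>z^*_j$ in \eqref{eq:plan-exchange} is forced because the $\Phi$-minimality of $z^*$ already rules out any single-arc improving swap along $D$, which renders $z^*$ weakly non-increasing along every directed path; strictness at the endpoint then follows from the gap $z^*_{i_0}>\lceil \bar z_{i_0}\rceil$ together with $z^*_j<\bar z_j\le \lceil \bar z_{i_0}\rceil$, provided $j$ is selected as the first deficient vertex along the path whose $\bar z$-value is dominated by $\bar z_{i_0}$ in the principal partition.
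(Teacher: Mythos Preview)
Your plan has a genuine structural gap at the path-contraction step, and it cannot be repaired without essentially rebuilding the paper's machinery.

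The target move \eqref{eq:plan-exchange} asks for $z^* - \varepsilon(\chi_{i_0}-\chi_j)\in\B_E$ with a \emph{small} positive $\varepsilon$. But any path in your digraph $D$ that uses a discrete arc $u\to v$ (i.e.\ $x^*-\chi_u+\chi_v\in\ddot\B_M$) forces an \emph{integer} jump in the $\ddot\B_M$-component; $x^*-\varepsilon\chi_u+\varepsilon\chi_v$ is not in $\ddot\B_M$ for $0<\varepsilon<1$. \Cref{prop:no-shortcut} and \Cref{prop:M-ijk} let you compress several discrete arcs into a single unit exchange, and \Cref{prop:B-EXC-R} lets you compress several continuous arcs into a single $\varepsilon$-exchange, but there is no hybrid statement that fuses a unit discrete move with an $\varepsilon$ continuous move into one $\varepsilon$-move in $\B_E$. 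So once the path from $i_0$ to $j$ contains even one discrete arc, the conclusion \eqref{eq:plan-exchange} simply does not follow. The paper confronts exactly this obstruction: its \Cref{lem:transfer} does \emph{not} produce $z^*-\varepsilon(\chi_{i_0}-\chi_j)$. Instead it works at an \emph{integer} level $\beta$, exploits that on the level set $N^{=}=\{i:z^*_i=\beta\}$ the restriction of $y^*$ is integral (so both discrete and continuous arcs can be treated by no-shortcut arguments), obtains first $z'=z^*-\chi_{i_1}+\chi_{i_k}$, and then---because even this $z'$ need not improve $\Phi$ when $z^*_{i_1}\le z^*_{i_k}+1$---invokes integral convexity (\Cref{prop:integrally-convexity}) together with further exchanges via \Cref{prop:B-EXC-M} to build a genuinely different improving point $z''$. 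None of this is present in your outline.

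A second, smaller gap: your claim that ``$z^*$ is weakly non-increasing along every directed path'' is correct for continuous arcs (this is \Cref{lem:div-layer}) but false for discrete arcs. If $x^*-\chi_u+\chi_v\in\ddot\B_M$ and $z^*_u=z^*_v+1$, the unit swap does not improve $\Phi$, so optimality of $z^*$ is compatible with $z^*_u>z^*_v$ along a discrete arc. Hence the endpoint inequality $z^*_{i_0}>z^*_j$ is not delivered by your monotonicity argument. The paper avoids this by comparing $z^*$ not to $\bar z$ directly but to the integer thresholds $\beta_j$ coming from the canonical partition of $\ddot\B_E$ (\Cref{lem:decomposition}), and only at the very end translating $\beta_j=\lceil\lambda_k\rceil=\lceil\bar z_i\rceil$ via \Cref{thm:canonical} and \Cref{thm:principal}. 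Your extremal choice of decomposition minimizing $\|x^*-\bar x\|_1$ is an interesting idea, but as stated it is never actually used to close either gap.
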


It should be noted that, for the integral base-polyhedron $\bar{\B}$ and the M-convex set $\ddot{\B}$ of a common supermodular function, the following proximity theorem has been shown by Frank and Murota~\cite{FMdecmin2}.
\begin{theorem}[{\cite[Theorem 4.1]{FMdecmin2}}]\label{thm:proximity_FM}
Let $\Phi$ be a symmetric strictly convex function.
For any $x^*\in\argmin_{x\in\ddot{\B}}\Phi(x)$ and $y^*\in\argmin_{y\in\bar{\B}}\Phi(y)$, 
we have
\begin{align}
\lfloor y^*_i\rfloor \le x^*_i\le \lceil y^*_i\rceil \quad (i\in N).
\end{align}
\end{theorem}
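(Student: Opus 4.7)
Plan: The proof is by contradiction, extending the augmenting-path technique behind Frank and Murota's proximity theorem (\Cref{thm:proximity_FM}) to the hybrid setting. Suppose some index $i \in N$ violates the bound, say $z^*_i > \lceil \bar z_i \rceil$ (the lower-bound case is analogous). Fix a decomposition $z^* = x^* + y^*$ with $x^* \in \ddot{\B}_M$ and $y^* \in \bar{\B}_C$; recall that, since $\bar{\B}_E = \bar{\B}_M + \bar{\B}_C$ is the integral base-polyhedron of $f_E$, the vector $\bar z$ is its unique $\Phi$-minimizer by \Cref{thm:principal}. The goal is to produce $z'' \in \B_E$ with $\Phi(z'') < \Phi(z^*)$, contradicting optimality.

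The main tool is a hybrid exchange digraph $G$ on $N$ with two kinds of arcs: an M-arc $k \to \ell$ whenever $x^* - \chi_k + \chi_\ell \in \ddot{\B}_M$, and a C-arc $k \to \ell$ whenever $y^* - \varepsilon(\chi_k - \chi_\ell) \in \bar{\B}_C$ for some $\varepsilon > 0$. Each arc encodes an elementary feasible perturbation of $z^*$ inside $\B_E$. Applying the contrapositives of \Cref{prop:B-EXC-XZ} and \Cref{prop:B-EXC-XR} (together with \Cref{prop:tightsets}) yields a tightness lemma: any subset $R \subseteq N$ closed under out-arcs of $G$ satisfies $x^*(R) = f_M(R)$ and $y^*(R) = f_C(R)$, hence $z^*(R) = f_E(R) \le \bar z(R)$, where the last inequality uses $\bar z \in \bar{\B}_E$.

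Let $R$ denote the set of vertices reachable from $i$ in $G$. Since $i \in R$ and $z^*_i - \bar z_i > 0$, the tightness inequality $z^*(R) \le \bar z(R)$ forces some $j \in R$ with $z^*_j < \bar z_j$. Take a shortest $i$-to-$j$ path in $G$; using \Cref{prop:M-ijk} and \Cref{prop:no-shortcut}, any maximal consecutive run of M-arcs can be compressed into a single simultaneous $\ddot{\B}_M$-swap, and runs of C-arcs can be rescaled to share a common small $\varepsilon > 0$. Concatenating these yields $z'' = z^* - \delta(\chi_i - \chi_j) \in \B_E$ for some $\delta > 0$. By strict convexity of $\Phi$ (inequality \eqref{eq:ssc}), this gives $\Phi(z'') < \Phi(z^*)$ whenever $z^*_i > z^*_j$.

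The main obstacle is precisely this last verification $z^*_i > z^*_j$: it is \emph{not} immediate from $z^*_i > \lceil \bar z_i \rceil$ and $z^*_j < \bar z_j$ alone, since $\bar z_i$ and $\bar z_j$ may lie in different blocks of the principal partition of $f_E$. Resolving it requires the ``elaborate analysis of the graph'' highlighted in the introduction: one must leverage the block-constant structure of $\bar z$ (\Cref{thm:principal}) together with the mixed exchange property \Cref{prop:B-EXC-M} bridging $\ddot{\B}_M$ and $\bar{\B}_C$, and refine the path construction so that its endpoints satisfy $\bar z_j \le \bar z_i$, yielding $z^*_j < \bar z_j \le \bar z_i \le \lceil \bar z_i \rceil < z^*_i$. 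This structural refinement---a hybrid analogue of the Frank--Murota augmenting-path argument that also treats mixed M/C-arc concatenations as legitimate single moves in $\B_E$---is the technical heart of the proof.
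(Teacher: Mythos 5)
The statement you are asked to prove is the Frank--Murota proximity theorem for a single supermodular function; the paper does not reprove it but obtains it as the special case $f_C\equiv 0$ of its Theorem~\ref{thm:proximity}, whose proof occupies Section~\ref{sec:structure}. Your sketch targets that general theorem, which is a legitimate route, and it has the right ingredients (exchange graph, tightness of out-closed sets, path transfer, strict convexity via~\eqref{eq:ssc}). However, there are two genuine gaps. First, the endpoint ordering $z^*_i>z^*_j$: you correctly flag that $z^*_i>\lceil\bar z_i\rceil$ and $z^*_j<\bar z_j$ do not imply it, but you leave the resolution as an unspecified ``refinement.'' This is not a detail --- it is where the paper's proof does all its work, and the paper resolves it by \emph{not} comparing $z^*$ with $\bar z$ along paths at all. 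Instead it fixes an integer threshold $\beta$ (an essential value of the canonical partition of $\ddot{\B}_E$), partitions $N$ into $N^>,N^=,N^<$ relative to $\beta$, and shows (Lemmas~\ref{lem:S1-upper}--\ref{lem:S1-complete}, iterated via Lemma~\ref{lem:S1-separating}) that a path from $N^>$ to $N^<$ exists and yields a contradiction; the endpoints then satisfy $z^*_{i_1}>\beta>z^*_{i_k}$ by construction. The comparison with $\bar z$ enters only at the very end, through $\beta_j=\lceil\lambda_k\rceil$ (Theorem~\ref{thm:canonical}) and the block-constancy of $\bar z$ (Theorem~\ref{thm:principal}), plus a sign-flip for the lower bound. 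Without this or an equivalent device your contradiction does not go through, even in the pure M-convex case $f_C\equiv 0$.

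Second, the transfer step itself. You claim that after compressing runs you obtain $z''=z^*-\delta(\chi_i-\chi_j)\in\B_E$ for ``some $\delta>0$'' by rescaling the C-arcs. But exchanges in $\ddot{\B}_M$ are inherently unit-sized: if the path contains any M-arc, every arc on the path must carry the same flow for the intermediate coordinates to cancel, which forces $\delta=1$. Consequently the case $z^*_i\le z^*_j+1$, where a unit transfer overshoots and need not decrease $\Phi$, cannot be dismissed; this is precisely what the second half of Lemma~\ref{lem:transfer} handles by moving a full unit along the alternating path (Proposition~\ref{prop:no-shortcut}), decomposing the resulting $y'$ into integral neighbors (Proposition~\ref{prop:integrally-convexity}), and repairing each integral piece with further exchanges (Proposition~\ref{prop:B-EXC-M}) so that the averaged vector $z''$ strictly decreases $\Phi$. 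Relatedly, defining C-arcs by $\epsilon$-moves makes your tightness lemma clean but is incompatible with the unit moves needed on mixed paths; the paper uses unit C-arcs throughout and pays for it with the almost-tight-set computation in Claim~\ref{claim:edge}. As it stands, your proposal identifies the shape of the argument but omits the two pieces that constitute its technical content.
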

Note that \Cref{thm:proximity_FM} is a special case of our 
\Cref{thm:proximity} when $f_C(X)=0~(\forall X\subseteq N)$.
We prove \Cref{thm:proximity} following the same approach as for \Cref{thm:proximity_FM}. However, we need to conduct a more detailed analysis to handle $\bar{\B}_C$.

Throughout this section, we fix a symmetric strictly convex function $\Phi\colon\mathbb{R}^N\to\mathbb{R}$ and its minimizer $z^*\in\argmin_{z\in\B_E}\Phi(z)$.
By definition, $z^*$ can be represented as $x^*+y^*$ by $x^*\in\ddot{\B}_M$ and $y^*\in\bar{\B}_C$.
Moreover, let $N_1,\dots,N_q$ and $\beta_1,\dots,\beta_q$ be the canonical partition and the essential values of the M-convex set $\ddot{\B}_E$.

In subsequent subsections, we prove \Cref{thm:proximity} through the following steps.
First, in Section~\ref{sec:proximity-peak}, we demonstrate that $z_i^*$ lies within the interval $[\beta_1-1,\, \beta_1]$ for $i\in N_1$.
Then, in Section~\ref{sec:proximity-decomp}, we decompose the problem of finding $z^*$ into two independent problems on $N_1$ and $N\setminus N_1$. 
By iteratively applying the same procedure, we obtain the desired structure.
In Section~\ref{sec:proximity-fair}, we show an additional result when $\B_E$ emerges from the fair allocation.

\subsection[The peak-set]{The peak-set $N_1$}\label{sec:proximity-peak}
In this subsection, we prove that $\beta_1\ge z_i^*\ge \beta_1-1$ and $z^*(N_1)=f_E(N_1)$.

We first observe that we can transfer some amounts from elements with high value to those with low value.
\begin{lemma}\label{lem:div-layer}
    If $y^*+\epsilon(\chi_i-\chi_j)\in\bar{\B}_C$ for some $i,j\in N$ and $\epsilon>0$, then $z^*_i=x^*_i+y^*_i\ge x^*_j+y^*_j=z^*_j$.
    In addition, for any $\beta\in\mathbb{R}$, it holds that $y^*(N')=f_C(N')$ with $N'=\{i\in N\mid z^*_i\ge \beta\}$.
\end{lemma}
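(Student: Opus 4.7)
My plan is to prove the two assertions in sequence, both by contradiction using the strict convexity of $\Phi$ (specifically inequality~\eqref{eq:ssc}) together with the exchange structure of the base-polyhedron $\bar{\B}_C$.

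For the first assertion, I will suppose toward a contradiction that $z^*_j > z^*_i$. Convexity of $\bar{\B}_C$ applied to the endpoints $y^*$ and $y^* + \epsilon(\chi_i - \chi_j)$ yields $y^* + \delta(\chi_i - \chi_j) \in \bar{\B}_C$ for every $\delta \in [0, \epsilon]$, so that adding $x^*$ produces $z^{(\delta)} \coloneqq z^* + \delta(\chi_i - \chi_j) \in \B_E$. Applying inequality~\eqref{eq:ssc} to $z^*$ with the roles of $i$ and $j$ interchanged (which is legitimate because $z^*_j > z^*_i$) then gives $\Phi(z^{(\delta)}) < \Phi(z^*)$ for any $\delta \in (0, \min\{\epsilon, z^*_j - z^*_i\})$. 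This contradicts the $\Phi$-optimality of $z^*$ on $\B_E$, completing the first part.

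For the second assertion, the inequality $y^*(N') \ge f_C(N')$ is immediate from $y^* \in \bar{\B}_C$. To rule out strict inequality, I plan to invoke the second clause of \Cref{prop:B-EXC-XR} with $X = N'$, producing indices $i \in N'$ and $j \in N \setminus N'$ and a scalar $\epsilon > 0$ with $y^* - \epsilon(\chi_i - \chi_j) \in \bar{\B}_C$. Rewriting this as $y^* + \epsilon(\chi_j - \chi_i) \in \bar{\B}_C$ and feeding it into the first part (with the roles of $i$ and $j$ swapped) yields $z^*_j \ge z^*_i$. But this contradicts the chain $z^*_i \ge \beta > z^*_j$ forced by $i \in N'$ and $j \notin N'$.

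The only real hazard in this argument is bookkeeping: it is easy to confuse which coordinate is gaining mass and which is losing it, and thus to misorient the strict-convexity step when moving between $y^* + \epsilon(\chi_i - \chi_j)$ and $y^* - \epsilon(\chi_i - \chi_j)$. Conceptually the proof just combines two tools already stated in the paper, and I do not anticipate further technical difficulties.
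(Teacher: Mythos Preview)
Your proposal is correct and follows essentially the same argument as the paper's proof: both parts proceed by contradiction, the first via \eqref{eq:ssc} after moving along the segment in $\bar{\B}_C$, and the second by invoking \Cref{prop:B-EXC-XR} to produce an exchange that feeds back into the first part. The only cosmetic difference is that you explicitly cite convexity of $\bar{\B}_C$ to justify the intermediate points, whereas the paper simply picks one admissible step size.
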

\begin{proof}
    For the former statement, suppose to the contrary that $y^*+\epsilon(\chi_i-\chi_j)\in\bar{\B}_C$ for some $i,j\in N$ such that $x^*_i+y^*_i<x^*_j+y^*_j$ and $\epsilon>0$.
    Then, $y=y^*+\min\{\epsilon,\,(y^*_j-y^*_i)/2\}\cdot(\chi_i-\chi_j)\in\bar{\B}_C$ and $\Phi(x^*+y)<\Phi(x^*+y^*)$ by \eqref{eq:ssc}.
    This contradicts the assumption that $z^*=x^*+y^*$ is a $\Phi$-minimizer.
    
    For the latter statement, suppose that $y^*(N')>f_C(N')$ for some $N'=\{i\in N\mid z^*_i\ge \beta\}$ with $\beta\in\mathbb{R}$.
    Then, by \Cref{prop:B-EXC-XR}, there exist $i\in N'$, $j\in N\setminus N'$, and $\epsilon>0$ such that $y^*+\epsilon(\chi_j-\chi_i)\in\bar{\B}_C$.
    By the former statement, this implies $z^*_j\ge z^*_i$, which contradicts $j\not\in N'$.
\end{proof}

\begin{lemma}\label{lem:transfer}
    Let $\beta$ be an integer and $N'\subseteq N$.
    Define 
$N^>=\{i\in N \mid z^*_i >\beta \}$, $N^==\{i\in N \mid z^*_i =\beta \}$, and $N^<=\{i\in N \mid z^*_i <\beta \}$.
    Construct a graph
    \begin{align}
G=\big(N',\,\big\{(i,j)\in N'\times N'\mid x^*-\chi_i+\chi_j\in\ddot{\B}_M~\text{or}~y^*-\chi_i+\chi_j\in\bar{\B}_C\big\}\big).
\end{align}
    If $G$ has a path from some $i\in N'\cap N^>$ to some $j \in N'\cap N^>$,
    then there exists a vector $z''$ such that $\Phi(z'') < \Phi(z^*)$.
\end{lemma}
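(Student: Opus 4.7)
Reading the hypothesis as requiring the terminal vertex $j$ to lie in $N' \cap N^<$ (so that $z^*_i > \beta > z^*_j$; the displayed condition $j \in N' \cap N^>$ appears to be a typographical slip, since without the inequality $z^*_i > z^*_j$ the conclusion cannot follow), the strategy is to produce a feasible $z'' = z^* - \epsilon(\chi_i - \chi_j) \in \B_E$ for some small $\epsilon > 0$ and then invoke \eqref{eq:ssc} to conclude $\Phi(z'') < \Phi(z^*)$.

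I would start by fixing a shortest simple path $i = u_0, u_1, \ldots, u_k = j$ in $G$. Each edge $(u_{\ell-1}, u_\ell)$ is certified either by an M-exchange $x^* - \chi_{u_{\ell-1}} + \chi_{u_\ell} \in \ddot{\B}_M$ or by a C-exchange $y^* - \chi_{u_{\ell-1}} + \chi_{u_\ell} \in \bar{\B}_C$. The task is to exhibit $x' \in \ddot{\B}_M$ and $y' \in \bar{\B}_C$ with $x' + y' = z''$.

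The two extremal cases come first and are largely routine. If every edge is of M-type, iterating \Cref{prop:M-ijk} along the path yields $x^* - \chi_{u_0} + \chi_{u_k} \in \ddot{\B}_M$, so I take $x' = x^* - \chi_{u_0} + \chi_{u_k}$, $y' = y^*$, and $\epsilon = 1$. If every edge is of C-type, I would show $y^* + \epsilon(\chi_{u_k} - \chi_{u_0}) \in \bar{\B}_C$ for all sufficiently small $\epsilon > 0$ by propagating tightness: on any tight set $X$ with $y^*(X) = f_C(X)$, a step with $u_{\ell-1} \in X$ and $u_\ell \notin X$ would force $(y^* - \chi_{u_{\ell-1}} + \chi_{u_\ell})(X) < f_C(X)$, contradicting the C-edge certification; hence $u_0 \in X$ implies $u_k \in X$ and the $\epsilon$-perturbation preserves every defining inequality.

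The main obstacle is the mixed case, where M-moves are unit-integral while C-moves are infinitesimal. My plan is to partition the path at each M/C boundary, collapse each maximal M-run $u_a \to \cdots \to u_b$ into a single integer exchange $x \mapsto x - \chi_{u_a} + \chi_{u_b}$ via iterated \Cref{prop:M-ijk}, and perform an $\epsilon$-transfer in $y$ along each maximal C-run; to stitch these together into a single net $\epsilon$-transfer from $u_0$ to $u_k$, I would absorb each unit M-shift into $x'$ and compensate it with a $(1-\epsilon)$-counter-shift on the same endpoints in $y'$. The delicate step is verifying feasibility of the resulting $(x', y')$: on the M-side this uses the shortest-path hypothesis together with \Cref{prop:no-shortcut} to prevent the composed exchanges from creating a violated set, and on the C-side it uses an extension of the tight-set propagation argument from the pure-C case, now adapted to absorb the $(1-\epsilon)$ counter-shifts introduced at the M-run boundaries. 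Once $z''$ is shown to lie in $\B_E$, the conclusion follows from \eqref{eq:ssc} with any $\epsilon \in (0, z^*_i - z^*_j)$.
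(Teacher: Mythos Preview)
There are two genuine gaps in the proposal.

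\textbf{The counter-shift in $y'$ is not justified.} In the mixed case you want to undo each unit M-run $u_a \to u_b$ inside $y'$ by adding $(1-\epsilon)(\chi_{u_a}-\chi_{u_b})$. But the only information you have about $u_a,u_b$ comes from the $M$-certificate $x^*-\chi_{u_a}+\chi_{u_b}\in\ddot{\B}_M$; nothing guarantees that $y^*$ has nearly a full unit of slack at $u_b$ in $\bar{\B}_C$. Concretely, if some $X$ with $u_b\in X$ satisfies $y^*(X)=f_C(X)$ and no other path vertex lies in $X$, then $y'(X)<f_C(X)$ and $y'\notin\bar{\B}_C$. The ``tight-set propagation'' argument you sketch for the pure-$C$ case does not extend here, because at an $M$-endpoint there is no $C$-edge certificate to propagate along.

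\textbf{The overshoot case is not handled.} Even your pure-$M$ case is incomplete: you take $\epsilon=1$, but \eqref{eq:ssc} only gives $\Phi(z'')<\Phi(z^*)$ for $\epsilon\in(0,z^*_i-z^*_j)$. When $\beta<z^*_i\le z^*_j+1<\beta+1$, a unit transfer overshoots and can raise $\Phi$. Since $M$-moves are integral, there is no smaller $\epsilon$ available on the $x$-side, so this case requires a genuinely different mechanism.

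The paper's proof avoids both problems as follows. It first uses the shortest-path hypothesis together with \Cref{lem:div-layer} to show that every intermediate vertex lies in $N^=$, that the first and last edges must be $M$-edges (a $C$-edge into a strictly smaller $z^*$-value would contradict \Cref{lem:div-layer}), and that the edge types then \emph{strictly alternate} $M,C,M,C,\ldots,M$ (consecutive same-type edges would yield a shortcut via \Cref{prop:M-ijk}, using integrality of $y^*$ on $N^=$ for the $C$-case). With this structure, one defines $x'=x^*-\sum_{\ell\text{ odd}}(\chi_{i_\ell}-\chi_{i_{\ell+1}})$ and $y'=y^*-\sum_{\ell\text{ even}}(\chi_{i_\ell}-\chi_{i_{\ell+1}})$; feasibility of \emph{both} follows from \Cref{prop:no-shortcut}, and the net effect is the unit shift $z'=z^*-\chi_{i_1}+\chi_{i_k}$ with no counter-shifts needed. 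The overshoot case $z^*_{i_1}\le z^*_{i_k}+1$ is then handled by a separate argument: write $y'$ as a convex combination of integer neighbours via \Cref{prop:integrally-convexity}, and repair each piece using \Cref{prop:B-EXC-M} inside the restriction of $\bar{\B}_C$ to $N^>$ and its contraction by $N^{\ge}$, producing a $z''$ that moves every coordinate weakly toward $\beta$. Your outline does not supply either of these ingredients.
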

\begin{proof}
We first observe that Lemma~\ref{lem:div-layer} implies $y^*(N^>)=f_C(N^>)$
and $y^*(N^>\cup N^=)=f_C(N^>\cup N^=)$.

    Let $P=(i_1,i_2,\dots,i_k)$ be a shortest path from some $i_1 \in N'\cap N^>$ to some $i_k \in N'\cap N^<$.
    Then we have $i_1\in N^>$, $i_2,\dots,i_{k-1}\in N^=$, and $i_k\in N^<$.
    Since $z^*_{i_2} < z^*_{i_1}$, \Cref{lem:div-layer} implies that $y^*-\chi_{i_{1}}+\chi_{i_{2}}\notin\bar{\B}_C$, and thus $x^*-\chi_{i_1}+\chi_{i_{2}}\in\ddot{\B}_M$.
If $x^*-\chi_{i_2}+\chi_{i_{3}}\in\ddot{\B}_M$, then $x^*-\chi_{i_1}+\chi_{i_{3}}\in\ddot{\B}_M$ by \Cref{prop:M-ijk}, which leads to a shortcut of $P$.
Thus, we have $x^*-\chi_{i_2}+\chi_{i_{3}}\notin\ddot{\B}_M$ and instead, $y^*-\chi_{i_2}+\chi_{i_{3}}\in\bar{\B}_C$ holds.
Next, let us assume that $y^*-\chi_{i_3}+\chi_{i_{4}}\in\bar{\B}_C$ and $i_4 \in N^=$, and derive a contradiction.
Consider 
\begin{align}
\bar{\B}^= = \{ \tilde{y}\in \mathbb{R}^{N^=} \mid \tilde{y}(N^=)=f_C(N^=\cup N^>)-f_C(N^>)\text{ and } \tilde{y}(X) \geq f_C(X\cup N^>)-f_C(N^>) \ (\forall X\subseteq N^=)\},
\end{align}
which is also an integral base-polyhedron of a supermodular function.
Since $\beta$ is an integer, $y^*_{N^=}$ is also integral.
Thus, the vectors $y^*_{N^=}-\chi_{i_2}+\chi_{i_3}$ and $y^*_{N^=}-\chi_{i_3}+\chi_{i_4}$ are contained in $\ddot{\B}^=$, since $y^*(N^>)=f_C(N^>)$ and $i_2, i_3, i_4 \in N^=$.
Then $y^*_{N^=}-\chi_{i_2}+\chi_{i_4}\in \ddot{\B}^=$ follows from \Cref{prop:M-ijk}, which means that $y^*-\chi_{i_2}+\chi_{i_{4}}\in\bar{\B}_C$\footnotemark.
\footnotetext{We need to check for each $X$ such that $i_2 \in X$ but $i_4 \notin X$. The case when $X \supseteq \{i_2,i_3\}$ follows by $y^*-\chi_{i_3}+\chi_{i_4} \in \bar{\B}_C$, and the case when $i_2 \in X$ but $i_3 \notin X$ follows by $y^*-\chi_{i_2}+\chi_{i_3} \in \bar{\B}_C$.}
However, this implies a shortcut of $P$.
Therefore, if $i_4 \in N^=$, then $y^*-\chi_{i_3}+\chi_{i_{4}}\notin\bar{\B}_C$, and hence 
$x^*-\chi_{i_3}+\chi_{i_{4}}\in\ddot{\B}_M$.
By the same argument, we have $x^*-\chi_{i_{\ell}}+\chi_{i_{\ell+1}}\in\ddot{\B}_M$ if $\ell$ is an odd number and $i_{\ell+1}\in N^=$, and $y^*-\chi_{i_{\ell}}+\chi_{i_{\ell+1}}\in\bar{\B}_C$ if $\ell$ is an even number. 
Note that $k$ must be an even number, because $y^*-\chi_{i_{k-1}}+\chi_{i_{k}}\notin\bar{\B}_C$ follows from \Cref{lem:div-layer} and $z^*_{i_{k}}<z^*_{i_{k-1}}$. 
Moreover, for any integers $\ell$ and $h$ with $\ell\ge h+2$, we have $x^*-\chi_{i_\ell}+\chi_{i_{h}}\not\in\ddot{\B}_M$ and $y^*-\chi_{i_\ell}+\chi_{i_{h}}\not\in\bar{\B}_C$.
Let 
$$
x'=x^*-\sum_{\ell:\,\text{odd}}(\chi_{i_{\ell}}-\chi_{i_{\ell+1}}), \quad 
y'=y^*-\sum_{\ell:\,\text{even}}(\chi_{i_{\ell}}-\chi_{i_{\ell+1}}),
$$
and
$z'=x'+y'$.
By \Cref{prop:no-shortcut}, we have $x'\in\ddot{\B}_M$, $y'\in\bar{\B}_C$, and $z'\in\B_E$.
Note that $z'=z^*-\chi_{i_1}+\chi_{i_k}$, $y'(N^>)=y^*(N^>)~(=f_C(N^>))$ and $y'(N^>\cup N^=)=y^*(N^>\cup N^=)~(=f_C(N^>\cup N^=))$ by the construction.
For notational convenience, we denote $i^*=i_1$ and $j^*=i_k$ in the following.

If $z^*_{i^*}>z^*_{j^*}+1$, then $\Phi(z')<\Phi(z^*)$.
Thus, suppose that $z^*_{i^*}\le z^*_{j^*}+1~(<\beta+1)$. 
In this case, $\beta-1<z'_{i^*}<\beta$ and $\beta<z'_{j^*}<\beta+1$.
By \Cref{prop:integrally-convexity}, $y'$ can be represented by a convex combination of its integral neighbors in $\bar{\B}_C$. 
Let $y'=\sum_{t=1}^r\lambda^{(t)}\cdot y^{(t)}$, where $y^{(t)}\in \ddot{\B}_C\cap \{y\in\mathbb{Z}^N\mid \|y-y'\|<1\}~(\forall t\in[r])$, $\sum_{t=1}^r \lambda^{(t)}=1$, and $\lambda^{(t)}\ge 0~(\forall t\in[r])$.
Define $z^{(t)}=x'+y^{(t)}$ for each $t$.
Thus, we also obtain $z'=\sum_{t=1}^r\lambda^{(t)}\cdot z^{(t)}$.
Note that $z^{(t)}_{i^*} \in \{\beta-1, \beta\}$ and $z^{(t)}_{j^*} \in \{\beta, \beta+1\}$ for each $t$.
In addition, for each $t$, it holds that $y^{(t)}(N^>) = f_C(N^>)$ because $\sum_{t=1}^r \lambda^{(t)}\cdot y^{(t)}(N^>) = y'(N^>) = y^*(N^>)=f_C(N^>)$ and $y^{(t)}(N^>) \geq f_C(N^>)$.
Similarly, we can see that $y^{(t)}(N^>\cup N^=) = f_C(N^>\cup N^=)$ for each $t$.

Let us choose an arbitary $t$ with $z^{(t)}_{i^*}=\beta-1$.
Let 
\begin{align}
\bar{\B}^> = \{ \tilde{y}\in \mathbb{R}^{N^>} \mid \tilde{y}(N^>)=f_C(N^>) \text{ and } \tilde{y}(X)\geq f_C(X) \ (\forall X \subseteq N^>)\}
\end{align}
(the restriction of $\bar{\B}_C$ to $N^>$), and $\ddot{\B}^>$ be the M-convex set induced from $\bar{\B}^>$.
Then it holds that $y^{(t)}_{N^>}\in \ddot{\B}^>$, $y'_{N^>}\in \bar{\B}^>$ and $i^* \in \supp^-(y^{(t)}_{N^>}-y'_{N^>})$.
We apply \Cref{prop:B-EXC-M} to them. 
Then we can choose an index $i^{(t)}\in \supp^+(y^{(t)}_{N^>}-y'_{N^>})$ such that 
$y^{(t)}_{N^>}+\chi_{i^*}-\chi_{i^{(t)}}\in\bar{\B}^>$.
We show that this implies 
$$\hat{y}^{(t)} \coloneqq y^{(t)}+\chi_{i^*}-\chi_{i^{(t)}}\in\bar{\B}_C.$$
Indeed, for any $X$ with $i^*\notin X$ and $i^{(t)} \in X$ (the other cases are trivial), since $y^{(t)}(X\cap N^>)-1 \geq f_C(X\cap N^>)$, we have
\begin{align}
    y^{(t)}(X) = y^{(t)}(X\cap N^>) + y^{(t)}(X\cup N^>) - y^{(t)}(N^>)
    > f_C(X\cap N^>) + f_C(X\cup N^>)-f_C(N^>)
    \geq f_C(X),
\end{align}
which implies that $\hat{y}^{(t)}(X) \geq f_C(X)$.
In addition, we observe that $z^{(t)}_{i^{(t)}} = y^{(t)}_{i^{(t)}} + x'_{i^{(t)}} > y'_{i^{(t)}} + x'_{i^{(t)}} = z'_{i^{(t)}} = z_{i^{(t)}} > \beta$.
Thus, since $z^{(t)}_{i^{(t)}}$ is an integer, 
\begin{align}\label{eq:i^t}
    \hat{y}^{(t)}_{i^{(t)}} + x'_{i^{(t)}} = z^{(t)}_{i^{(t)}}-1 \geq  \beta.
\end{align}
On the other hand, 
for each $t$ with $z^{(t)}_{i^*}=\beta$, we denote $\hat{y}^{(t)} = y^{(t)}$.
We also remark that $\hat{y}^{(t)}(N^>\cup N^=)=y^{(t)}(N^>\cup N^=)$ and $\hat{y}^{(t)}_i = y^{(t)}_i$ for each $t$ and $i\in N\setminus N^>$.

We will do similar operations for indices in $N^<$. 
Let us choose an arbitrary $t$ with $z^{(t)}_{j^*}=\beta_1+1$.
We show that we can choose $j^{(t)}\in N^<$ with $z^{(t)}_{j^{(t)}}\le\beta_1-1$ such that $\hat{y}^{(t)}-\chi_{j^*}+\chi_{j^{(t)}}\in\bar{\B}_C$ 
by applying \Cref{prop:B-EXC-M}.
We denote $N^\geq = N^>\cup N^=$.
Let also 
\begin{align}
\bar{\B}^< = \{ \tilde{y}\in \mathbb{R}^{N^<} \mid \tilde{y}(N^<)=f_C(N)-f_C(N^{\geq}) \text{ and } \tilde{y}(X)\geq f_C(X\cup N^{\geq})-f_C(N^{\geq}) \ (\forall X \subseteq N^<)\}.
\end{align}
(the contraction of $\bar{\B}_C$ by $N^\geq$), and $\ddot{\B}^<$ be the M-convex set induced from $\bar{\B}^<$.
Then it holds that $\hat{y}^{(t)}_{N^<}\in \ddot{\B}^<$, $y'_{N^<}\in \bar{\B}^<$ and $j^* \in \supp^+(\hat{y}^{(t)}_{N^<}-y'_{N^<})$.
We apply \Cref{prop:B-EXC-M} to them. 
Then we can choose an index $j^{(t)}\in \supp^-(\hat{y}^{(t)}_{N^<}-y'_{N^<})$ such that 
$\hat{y}^{(t)}_{N^<}-\chi_{j^*}+\chi_{j^{(t)}}\in\bar{\B}^<$.
Then we can observe that 
$$\hat{y}^{(t)}-\chi_{j^*}+\chi_{j^{(t)}}\in\bar{\B}_C.$$
Indeed, for any $X$ with $j^*\in X$ and $j^{(t)} \notin X$, since $y^{(t)}(X\cap N^<)-1 \geq f_C((X\cap N^<)\cup N^{\geq})-f_C(N^{\geq})$ and $y^{(t)}(X\cap N^\geq)\geq f_C(X\cap N^\geq)$, we have
\begin{align}
    y^{(t)}(X) = y^{(t)}(X\cap N^<) + y^{(t)}(X\cap N^\geq)
    > f_C(X\cap N^<) -f_C(N^\geq)+ f_C(X\cup N^\geq)
    \geq f_C(X).
\end{align}
Moreover, $z^{(t)}_{j^{(t)}} < y'_{j^{(t)}} + x'_{j^{(t)}} = z'_{j^{(t)}} = z_{j^{(t)}} < \beta$, which implies that 
\begin{align}
    \hat{y}^{(t)}_{j^{(t)}} +1 + x'_{j^{(t)}} =  z^{(t)}_{j^{(t)}} + 1 \leq \beta.\label{eq:jt}
\end{align}
For simplicity, let $j^{(t)}=j^*$ for each $t$ with $z^{(t)}_{j^*}=\beta$.
Then, $$y''\coloneqq\sum_{t=1}^r\lambda^{(t)}\cdot (\hat{y}^{(t)}-\chi_{j^*}+\chi_{j^{(t)}})\in\bar{\B}_C.$$

Let $z''=x'+y''$.
Note that this operation to produce $z''$ first reduces the value of elements more than $\beta$ while keeping them at least $\beta$ by \eqref{eq:i^t}, and then increases the value of elements less than $\beta$ while keeping them at most $\beta$ by \eqref{eq:jt}.
In other words, $\beta_1\le z''_i\le z^*_i$ for $i\in N^>$, $z''_i= z^*_i$ for $i\in N^=$, and
$z^*_i\le z''_i\le \beta_1$ for $i\in N^<$.
Therefore, $\Phi(z'')<\Phi(z^*)$ holds.
This contradicts to the optimality of $z^*$.
\end{proof}

Next, we show that $z^*_i$ is at most $\beta_1$ for all $i\in N$.
Recall that $\beta_1=\max\{\lceil f_E(X)/|X|\rceil\mid \emptyset\ne X\subseteq N\}$ by~\eqref{eq:beta_j}.
\begin{lemma}\label{lem:S1-upper}
$z^*_i\le \beta_1$ for all $i\in N$.
\end{lemma}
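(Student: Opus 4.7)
The plan is to argue by contradiction using Lemma~\ref{lem:transfer} applied with $N'=N$ and $\beta=\beta_1$: if some $k\in N$ had $z^*_k>\beta_1$, I would exhibit a path in the associated exchange graph $G$ from $N^>=\{i:z^*_i>\beta_1\}$ to $N^<=\{i:z^*_i<\beta_1\}$, whereupon Lemma~\ref{lem:transfer} would produce $z''\in\B_E$ with $\Phi(z'')<\Phi(z^*)$ and contradict the $\Phi$-minimality of $z^*$. Everything therefore reduces to exhibiting such a path whenever $N^>\ne\emptyset$.

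First, I would verify that $x^*(N^>)>f_M(N^>)$, which supplies at least one outgoing edge from $N^>$ in $G$. By Lemma~\ref{lem:div-layer} applied with $\beta=\min_{i\in N^>}z^*_i$, we have $y^*(N^>)=f_C(N^>)$; by the definition~\eqref{eq:beta_j} of $\beta_1$ applied to $f_E$, we have $f_E(N^>)\le\beta_1|N^>|$; and since $z^*_i>\beta_1$ on $N^>$, $z^*(N^>)>\beta_1|N^>|$. Combining these three facts yields
\begin{align}
x^*(N^>)=z^*(N^>)-f_C(N^>)>\beta_1|N^>|-f_C(N^>)\ge f_M(N^>),
\end{align}
and Proposition~\ref{prop:B-EXC-XZ} then produces $i\in N^>$, $j\notin N^>$ with $x^*-\chi_i+\chi_j\in\ddot{\B}_M$, i.e.\ an edge in $G$ leaving $N^>$.

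Second, I would let $R$ be the forward-reachable set from $N^>$ in $G$ and assume toward contradiction that $R\cap N^<=\emptyset$, so $R\subseteq N^>\cup N^=$. The absence of outgoing edges from $R$ forbids any $i\in R$, $j\notin R$ with $x^*-\chi_i+\chi_j\in\ddot{\B}_M$ or $y^*-\chi_i+\chi_j\in\bar{\B}_C$. Proposition~\ref{prop:B-EXC-XZ} immediately gives $x^*(R)=f_M(R)$. If we can analogously argue $y^*(R)=f_C(R)$, then $z^*(R)=f_M(R)+f_C(R)=f_E(R)$; on the other hand, since $N^>\subseteq R\subseteq N^>\cup N^=$ with strict inequality $z^*_i>\beta_1$ on the nonempty set $N^>$, while $\beta_1|R|\ge f_E(R)$ by the definition of $\beta_1$, we get $z^*(R)>\beta_1|R|\ge f_E(R)$, the desired contradiction. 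Therefore $R\cap N^<\ne\emptyset$, supplying the path that activates Lemma~\ref{lem:transfer}.

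The hard part will be proving $y^*(R)=f_C(R)$, because $\bar{\B}_C$ only guarantees fractional exchanges (Proposition~\ref{prop:B-EXC-XR}), so the absence of unit $y^*$-exchanges leaving $R$ does not by itself imply tightness. My plan is to exploit the integral convexity of $\bar{\B}_C$ (Proposition~\ref{prop:integrally-convexity}) to write $y^*$ as a convex combination of integer neighbors $\tilde y^{(t)}\in\ddot{\B}_C$ with $\|\tilde y^{(t)}-y^*\|_\infty<1$; if $y^*(R)>f_C(R)$, then some $\tilde y^{(t)}$ must also satisfy $\tilde y^{(t)}(R)>f_C(R)$, and Proposition~\ref{prop:B-EXC-XZ} supplies a unit exchange $\tilde y^{(t)}-\chi_i+\chi_j\in\ddot{\B}_C$ with $i\in R$, $j\notin R$. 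Lifting this discrete exchange back to $y^*$ by combining Proposition~\ref{prop:B-EXC-M} with the tightness $y^*(N^>\cup N^=)=f_C(N^>\cup N^=)$ (another application of Lemma~\ref{lem:div-layer}) should produce the forbidden unit $y^*$-exchange out of $R$. This lifting step, which controls the fractional components of $y^*$ that prevent a naive unit shift from remaining in $\bar{\B}_C$, is where the bulk of the technical care will be needed.
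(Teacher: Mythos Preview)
Your high-level plan coincides with the paper's: assume $N^>\ne\emptyset$, produce a path in the exchange graph from $N^>$ to $N^<$, and invoke Lemma~\ref{lem:transfer} for the contradiction. Your use of the forward-reachable set $R$ is just a repackaging of the paper's iterative growth (Claims~\ref{claim:edge} and~\ref{claim:find-path}), and the reduction to proving $z^*(R)>f_E(R)$ together with $x^*(R)=f_M(R)$, $y^*(R)=f_C(R)$ is exactly right.

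The gap is in your plan for $y^*(R)=f_C(R)$. Writing $y^*$ as a convex combination of integer neighbours $\tilde y^{(t)}$ via Proposition~\ref{prop:integrally-convexity} and then trying to ``lift'' a unit exchange $\tilde y^{(t)}-\chi_i+\chi_j\in\ddot{\B}_C$ back to $y^*-\chi_i+\chi_j\in\bar{\B}_C$ does not work: the neighbours $\tilde y^{(t)}$ may differ from $y^*$ in many coordinates (each by less than~$1$), so for a set $X\ni i$, $X\not\ni j$ the gap $y^*(X)-\tilde y^{(t)}(X)$ can be large, and $\tilde y^{(t)}(X)\ge f_C(X)+1$ says nothing about $y^*(X)$. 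Proposition~\ref{prop:B-EXC-M} does not rescue this either, since it produces exchanges at the \emph{integral} point, not at $y^*$.

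What you are missing is that integral convexity is unnecessary here because $y^*$ is \emph{already integral on $N^{=}$}: for $i\in N^{=}$ one has $y^*_i=\beta_1-x^*_i\in\mathbb{Z}$. The paper exploits this as follows. For each forbidden exchange $(i,j)$ there is an obstructing set $Y_{ij}$ with $y^*(Y_{ij})\in[f_C(Y_{ij}),f_C(Y_{ij})+1)$. Replacing $Y_{ij}$ by $Y'_{ij}=(Y_{ij}\cup N^>)\setminus N^<$, a submodularity chain together with the tightnesses $y^*(N^>)=f_C(N^>)$ and $y^*(N^>\cup N^=)=f_C(N^>\cup N^=)$ gives $y^*(Y'_{ij})<f_C(Y'_{ij})+1$; since $Y'_{ij}\subseteq N^>\cup N^=$ and $y^*(N^>)\in\mathbb{Z}$, the quantity $y^*(Y'_{ij})$ is an integer, forcing $y^*(Y'_{ij})=f_C(Y'_{ij})$. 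Taking unions and intersections of these tight sets (Proposition~\ref{prop:tightsets}) then yields $y^*(R)=f_C(R)$. An equivalent way to package the same idea is to pass to the minor of $\bar{\B}_C$ obtained by contracting $N^>$ and restricting to $N^>\cup N^=$; there $y^*|_{N^=}$ lies in the integral M-convex set, Proposition~\ref{prop:B-EXC-XZ} applies directly, and the resulting exchange lifts back to $\bar{\B}_C$ by supermodularity and the two tightnesses. Either route replaces your integral-convexity detour with the single observation that $y^*|_{N^=}$ is integral.
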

\begin{proof}
Define the sets $N^>=\{i\in N\mid z^*_i>\beta_1\}$,  $N^==\{i\in N\mid z^*_i=\beta_1\}$, and  $N^<=\{i\in N\mid z^*_i<\beta_1\}$.
By Lemma~\ref{lem:div-layer}, 
$y^*(N^>)=f_C(N^>)$
and $y^*(N^>\cup N^=)=f_C(N^>\cup N^=)$.

Suppose to the contrary that $N^>$ is nonempty.
We construct a graph 
\begin{align}
G=\big(N,\,\big\{(i,j)\in N^2\mid x^*-\chi_i+\chi_j\in\ddot{\B}_M~\text{or}~y^*-\chi_i+\chi_j\in\bar{\B}_C\big\}\big).
\end{align}
We observe that for any $Z$ with $N^>\subseteq Z\subseteq N^>\cup N^=$, it holds that
\begin{align}\label{eq:claim-edge}
z^*(Z) > f_E(Z)
\end{align}
because $f_M(Z)+f_C(Z)=x^*(Z)+y^*(Z)=z^*(Z)>\beta_1\cdot|Z|\ge f_E(Z)=f_M(Z)+f_C(Z)$,
where the last inequality holds by the definition of $\beta_1$.
This implies the following claim.
\def\leftmargini{10pt}
\begin{quote}
\begin{claim}\label{claim:edge}
    For any $Z$ satisfying \eqref{eq:claim-edge}, 
    there exists an edge from some vertex $i\in Z$ to some $j\in N\setminus Z$.
\end{claim}
\begin{proof}
Let us fix $Z$.
Suppose to the contrary that $x^*-\chi_i+\chi_j\not\in\ddot{\B}_M$ and $y^*-\chi_i+\chi_j\not\in\bar{\B}_C$ for all $i\in Z$ and $j\in N\setminus Z$.
In a similar way to Proposition~\ref{prop:B-EXC-XZ}, for each $i\in Z$ and $j\in N\setminus Z$, let $X_{ij}$ be a subset of $N$ such that $x^*(X_{ij})=f_M(X_{ij})$, $i\in X_{ij}$, and $j\not\in X_{ij}$.
Then, since $\bigcup_{i\in Z}\bigcap_{j\in N\setminus Z}X_{ij}$ coincides with $Z$, we have $x^*(Z)=f_M(Z)$ by \Cref{prop:tightsets}.
For each $i\in Z$ and $j\in N\setminus Z$ let $Y_{ij}$ be a subset of $N$ such that $y^*(Y_{ij})\in[f_C(Y_{ij}),f_C(Y_{ij})+1)$, $i\in Y_{ij}$, and $j\not\in Y_{ij}$.
Let $Y'_{ij}\coloneqq (Y_{ij}\cup N^>)\setminus N^<$.
Then, we have
\begin{align}
    y^*(Y_{ij}')
    &= y^*(N^>)+y^*(Y_{ij})-y^*(Y_{ij}\setminus N^=)\\
    &< f_C(N^>)+f_C(Y_{ij})+1-y^*(Y_{ij}\setminus N^=)\\
    &\le f_C(N^>\cup Y_{ij})+f_C(N^>\cap Y_{ij})+1-y^*(Y_{ij}\setminus N^=)\\
    &\le f_C(N^>\cup Y_{ij})-y^*(N^<\cap Y_{ij})+1\\
    &\le f_C(Y_{ij}')+f_C(N^>\cup N^=\cup Y_{ij})-f_C(N^>\cup N^=)-y^*(N^<\cap Y_{ij})+1\\
    &= f_C(Y_{ij}')+f_C(N^>\cup N^=\cup Y_{ij})-y^*(N^>\cup N^=\cup Y_{ij})+1
    \le f_C(Y_{ij}')+1.
\end{align}
Since $y^*_i=\beta_1-x^*_i$ is an integer for any $i\in N^=$ by the choice, $y^*(N^>)=f_C(N^>\cup N^=)-y^*(N^=)$ is also an integer. 
This implies that $y^*(Y'_{ij}) = y^*(N^>) + y^*(Y_{ij}\cap N^=)$ is an integer, and hence we obtain $y^*(Y_{ij}')=f_C(Y'_{ij})$.
Thus, $\bigcup_{i\in Z\cap N^=}\bigcap_{j\in N^=\setminus Z}Y'_{ij}$ coincides with $Z$, and we have $y^*(Z)=f_M(Z)$ by \Cref{prop:tightsets}.
However, this contradicts to \eqref{eq:claim-edge}.
\end{proof}
\end{quote}

\begin{quote}
\begin{claim}\label{claim:find-path}
There exist paths in $G$ from some vertex in $N^>$ to some vertex in $N^<$.
\end{claim}
\begin{proof}
    By setting $Z=N^>$, Claim~\ref{claim:edge} implies that the graph $G$ has an edge from some $i_1\in N^>$ to some $i_2\in N^<\cup N^=$.
    If $i_2 \in N^<$, we have done.
    Let us assume $i_2 \in N^=$.
    Taking $Z=N^>\cup\{i_2\}$, we also use Claim~\ref{claim:edge} to see that $G$ has an edge from some $i_3\in N^>\cup \{i_2\}$ to some $i_4\in (N^<\cup N^=)\setminus \{i_2\}$.
    If $i_4 \in N^<$, then either path $i_3, i_4$ ($i_3\in N^>$) or $i_1, i_2, i_4$ ($i_2=i_3$) is the one what we look for.
    Otherwise, we see that a path from some vertex in $N^>$ ($i_1$ or $i_3$) to $i_4$.
    
    After repeating this procedure $k$ times, we obtain an edge $(i_{2k-1},i_{2k})$ in $G$.
    If $i_{2k}\in N^<$, we obtain a path from some vertex in $N^>$ to $i_{2k}$ via $i_{2k-1}$; otherwise, we obtain a path from some vertex in $N^>$ to $i_{2k}$.
    Since $i_2,i_4,\dots,i_{2k}$ are all distinct, we can find $i_{2k} \in N^<$ after at most $|N^=|+1$ repetitions.
    This completes the proof of the claim.
\end{proof}
\end{quote}

By the above claim and \Cref{lem:transfer} with $N'=N$ and $\beta=\beta_1$, there exists a vector $z$ with $\Phi(z)<\phi(z^*)$, which contradicts to the optimality of $z^*$.
This completes the proof of Lemma~\ref{lem:S1-upper}.
\end{proof}

We then prove that $z^*_i$ is at least $\beta_1-1$ for all $i\in N_1$.
Recall that $N_1$ is the smallest subset of $N$ maximizing $f_E(X)-(\beta_1-1)|X|$.
\begin{lemma}\label{lem:S1-lower}
$z^*_i\ge \beta_1-1$ for all $i\in N_1$.
\end{lemma}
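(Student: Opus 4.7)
The plan is to mirror the strategy of Lemma~\ref{lem:S1-upper}: argue by contradiction, produce a directed path in the exchange graph $G$, and invoke \Cref{lem:transfer} with $N'=N$. Suppose $\tilde{N} \coloneqq \{i \in N_1 : z^*_i < \beta_1 - 1\}$ is nonempty, and define $N^>, N^=, N^<$ with respect to the threshold $\beta = \beta_1 - 1$. First I would verify that $N_1 \cap N^> \ne \emptyset$: since $N_1$ is the smallest maximizer of $h_1$ and $\beta_1 = \lceil f_E(X)/|X|\rceil$ is attained on some nonempty $X$, we have $h_1(N_1) > 0$, equivalently $f_E(N_1) > (\beta_1-1)|N_1|$; combined with $z^*(N_1) \ge f_E(N_1)$, this forces some $j \in N_1$ with $z^*_j > \beta_1 - 1$.

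Next I would construct a directed path in $G$ from $N^>$ to $N^<$ by a backward growth from $\tilde{N}$. I maintain a set $W$ with $\tilde{N} \subseteq W \subseteq \tilde{N} \cup (N_1 \cap N^=)$ such that every vertex of $W$ already has a directed path to $\tilde{N}$ inside $G$. The crucial inequality comes from the strict minimality of $N_1$ applied to $X = N_1 \setminus W \subsetneq N_1$: $(\beta_1-1)|W| < f_E(N_1) - f_E(N_1 \setminus W) \le f_E(N) - f_E(N \setminus W)$, where the second step is supermodularity of $f_E$ and uses $W \subseteq N_1$. Since $\tilde{N} \subseteq W$ contributes a strict inequality $z^*(W) < (\beta_1-1)|W|$, decomposing $z^* = x^* + y^*$ and $f_E = f_M + f_C$ gives at least one of $x^*(W) < f_M(N_1) - f_M(N_1 \setminus W)$ or $y^*(W) < f_C(N_1) - f_C(N_1 \setminus W)$. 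In the first case, \Cref{prop:B-EXC-IJZ} applied with $I = N_1\setminus W, J = W$ yields an edge $(i,j) \in G$ with $i \in N_1 \setminus W$ and $j \in W$. In the second case I would mimic the \Cref{claim:edge} derivation of Lemma~\ref{lem:S1-upper} for $y^*$: the integrality of $y^*|_{N^>\cup N^=}$ (which follows from \Cref{lem:div-layer} applied at threshold $\beta_1-1$ and at a value slightly above it, giving $y^*(N^>) = f_C(N^>)$ and $y^*(N^>\cup N^=) = f_C(N^>\cup N^=)$) makes $y^*_i$ integer on $N^=$, so that the modification $Y'_{ij} = (Y_{ij} \cup N^>) \setminus N^<$ produces a fully tight set for $y^*$ and the tight-set lattice argument forces a $1$-unit exchange edge with source in $N_1 \setminus W$ and target in $W$. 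If the source lies in $N_1 \cap N^>$, concatenating with the backward path from its target in $W$ to $\tilde{N}$ gives a path from $N^>$ to $N^<$; otherwise the source lies in $N_1 \cap N^=$, and I extend $W$ by adding it and iterate. The process terminates in at most $|N_1|$ steps. Applying \Cref{lem:transfer} to the resulting path yields $z''$ with $\Phi(z'') < \Phi(z^*)$, contradicting the optimality of $z^*$.

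The hardest step will be the $y^*$-case analog of \Cref{claim:edge} and, in particular, the localization of the edge source to $N_1 \setminus W$. Unlike $\ddot{\B}_M$, the base polyhedron $\bar{\B}_C$ admits only infinitesimal exchanges in general, so honest $1$-unit integral edges in $G$ via $y^*$ are not automatic; their existence has to be recovered from the integrality of $y^*$ on $N^> \cup N^=$ furnished by \Cref{lem:div-layer}, combined with the same tight-set manipulation $Y'_{ij} = (Y_{ij}\cup N^>)\setminus N^<$ that drove the upper-bound proof.
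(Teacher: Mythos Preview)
Your proposal is essentially correct and mirrors the paper's proof: both derive the strict inequality from $N_1$ being the smallest maximizer of $h_1$, grow an exchange path inside $N_1$, and finish with \Cref{lem:transfer} at $\beta=\beta_1-1$; the paper searches forward from $N_1^{>}$ with $N'=N_1$, you search backward from $\tilde N$ with $N'=N$, and your explicit use of \Cref{prop:B-EXC-IJZ} in the $x^*$-case is a clean way to force the edge source into $N_1\setminus W$ (the paper just gestures at ``a similar proof of Claim~\ref{claim:edge}'' for this localization). One detail you should spell out in the $y^*$-case: once the lattice of tight sets $Y'_{ij}$ yields $T$ with $N_1\setminus W\subseteq T$ and $T\cap W=\emptyset$, the contradiction with $y^*(W)<f_C(N_1)-f_C(N_1\setminus W)$ comes from applying supermodularity of $f_C$ to the pair $T,\,N_1$ (which gives $f_C(N_1)-f_C(N_1\setminus W)\le f_C(T\cup W)-f_C(T)=f_C(T\cup W)-y^*(T)\le y^*(W)$), a step that is not literally present in the proof of \Cref{claim:edge} and is precisely the ``localization'' you flag as hardest.
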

\begin{proof}
Let $N_1^>\coloneqq\{i\in N_1\mid z^*_i(E)> \beta_1-1\}$, $N_1^=\coloneqq\{i\in N_1\mid z^*_i(E)= \beta_1-1\}$, and $N_1^<\coloneqq\{i\in N_1\mid z^*_i< \beta_1-1\}$. 
Suppose to the contrary that $N_1^<$ is nonempty.
We construct a graph 
\begin{align}
G=\big(N_1,\,\big\{(i,j)\in N_1\times N_1\mid x^*-\chi_i+\chi_j\in\ddot{\B}_M~\text{or}~y^*-\chi_i+\chi_j\in\bar{\B}_C\big\}\big).
\end{align}
Recall that $N_1$ is the smallest subset of $N$ maximizing $f_E(S)-(\beta_1-1)|S|$.
For any $X$ with $N_1^<\subseteq X\subseteq N_1^<\cup N_1^=$, we have $f_E(N_1\setminus X)-(\beta_1-1)|N_1\setminus X|\le f_E(N_1)-(\beta_1-1)|N_1|$, which implies $f_E(N_1)-f_E(N_1\setminus X)\ge (\beta_1-1)|X|>z^*(X)$.
In other words, 
$$z^*(Z) = z^*(N_1)-z^*(N_1\setminus Z) \geq f_E(N_1)-z^*(N_1\setminus Z)> f_E(Z)$$
holds for any $Z$ with $N_1^> \subseteq Z \subseteq N_1^>\cup N_1^=$.
By a similar proof of Claim~\ref{claim:edge}, 
we see that there exists an edge from some vertex in $Z$ to $N_1\setminus Z$ for any $Z$ with $N_1^> \subseteq Z \subseteq N_1^>\cup N_1^=$.

Therefore, by a similar way to Claim~\ref{claim:find-path}, we can find a path from some vertex in $N_1^>$ to some in $N_1^<$.
Then, by applying \Cref{lem:transfer} with $N'=N_1$ and $\beta=\beta_1-1$, we can reduce the value of $\Phi$, which is a contradiction.
Hence, $N_1^<$ is empty and $z_i^*\ge \beta_1-1$ for all $i\in N_1$.
\end{proof}

Finally, we show that we cannot decrease the values of elements in $N_1$ (which have high values) anymore. 
In the words of fair allocation, this means that goods not required to be assigned to $N_1$ are not assigned to $N_1$.
\begin{lemma}\label{lem:S1-complete}
$x^*(N_1)=f_M(N_1)$, $y^*(N_1)=f_C(N_1)$, and $z^*(N_1)=f_E(N_1)$.
\end{lemma}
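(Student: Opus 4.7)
Since $x^*(N_1) \ge f_M(N_1)$ and $y^*(N_1) \ge f_C(N_1)$ sum to $z^*(N_1) \geq f_E(N_1)$, it suffices to prove $z^*(N_1) = f_E(N_1)$---equality in the sum then forces equality in each summand. I argue by contradiction: assume $z^*(N_1) > f_E(N_1)$, and follow the blueprint of \Cref{lem:S1-upper}: build an exchange graph on $N$, exhibit a path from the ``high layer'' $N^>$ to the ``low layer'' $N^<$ with threshold $\beta = \beta_1 - 1$, then invoke \Cref{lem:transfer} to strictly decrease $\Phi(z^*)$ and reach a contradiction.

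Setting $\beta = \beta_1 - 1$, by \Cref{lem:S1-upper,lem:S1-lower} we have $N_1 \subseteq N^> \cup N^=$, and the inequality $h_1(N_1) = f_E(N_1)-(\beta_1-1)|N_1| > 0$ combined with $z^*(N_1) > f_E(N_1)$ gives $z^*(N_1) > (\beta_1-1)|N_1|$, so $N^> \cap N_1 \ne \emptyset$. Construct the exchange graph $G$ on $N$ as in \Cref{lem:transfer}, and let $R$ denote the set of vertices reachable in $G$ from $N^> \cap N_1$.

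The main step is to show $R \cap N^< \ne \emptyset$. Otherwise $R \subseteq N^> \cup N^=$ and no edge of $G$ leaves $R$; the ``moreover'' parts of \Cref{prop:B-EXC-XZ,prop:B-EXC-XR} then force $x^*(R) = f_M(R)$ and $y^*(R) = f_C(R)$, whence $z^*(R) = f_E(R)$. However, for any $Z$ with $N^> \cap N_1 \subseteq Z \subseteq N^> \cup N^=$, the exact identity $z^*(N_1 \setminus Z) = (\beta_1-1)|N_1 \setminus Z|$ (since $N_1 \setminus Z \subseteq N_1 \cap N^=$) and the bound $z^*(Z \setminus N_1) \ge (\beta_1-1)|Z \setminus N_1|$ (since $Z \setminus N_1 \subseteq N^> \cup N^=$), combined with $z^*(N_1) > f_E(N_1)$, yield $z^*(Z) > f_E(N_1) + (\beta_1-1)(|Z| - |N_1|)$. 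Because $N_1$ maximizes $h_1$, $h_1(Z) \le h_1(N_1)$, so the right-hand side dominates $f_E(Z)$, giving $z^*(Z) > f_E(Z)$. Applied at $Z = R$, this contradicts $z^*(R) = f_E(R)$.

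Hence a path in $G$ exists from some $i_1 \in N^> \cap N_1$ to some $i_k \in N^<$, and \Cref{lem:transfer} with $N' = N$ and $\beta = \beta_1 - 1$ delivers $z''$ with $\Phi(z'') < \Phi(z^*)$, the desired contradiction. The subtlety I expect to matter most is that \Cref{lem:transfer}'s concluding integrally-convex argument requires the starting vertex to satisfy $z^*_{i_1} > \beta$ strictly; this is precisely why the BFS must be seeded at $N^> \cap N_1$ (which is nonempty by the argument above) rather than at all of $N_1$.
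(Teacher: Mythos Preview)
Your overall strategy matches the paper's: assume $z^*(N_1)>f_E(N_1)$, set $\beta=\beta_1-1$, show $z^*(Z)>f_E(Z)$ on an appropriate family of sets $Z$, exhibit a path in the exchange graph from $N^>$ to $N^<$, and invoke \Cref{lem:transfer}. The inequality chain you give for $z^*(Z)>f_E(Z)$ is correct and is essentially the paper's computation.

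There is, however, a genuine gap at the step ``no edge of $G$ leaves $R$; the `moreover' parts of \Cref{prop:B-EXC-XZ,prop:B-EXC-XR} then force $x^*(R)=f_M(R)$ and $y^*(R)=f_C(R)$.'' The edges of $G$ are defined by \emph{unit} exchanges $y^*-\chi_i+\chi_j\in\bar{\B}_C$, whereas \Cref{prop:B-EXC-XR} only produces an $\epsilon$-exchange $y^*-\epsilon(\chi_i-\chi_j)\in\bar{\B}_C$. These are not equivalent: e.g.\ with $\bar{\B}_C=\{(y_1,y_2):y_1+y_2=1,\ y_1,y_2\ge 0\}$ and $y^*=(\tfrac12,\tfrac12)$, there is an $\epsilon$-exchange out of $\{1\}$ but no unit exchange, yet $y^*(\{1\})>f_C(\{1\})$. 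So from ``no unit edge leaves $R$'' you cannot conclude $y^*(R)=f_C(R)$ via \Cref{prop:B-EXC-XR}. (Your use of \Cref{prop:B-EXC-XZ} for $x^*$ is fine.)

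The paper handles this with the integrality argument of Claim~\ref{claim:edge} (inside the proof of \Cref{lem:S1-upper}): one takes near-tight sets $Y_{ij}$ with $y^*(Y_{ij})\in[f_C(Y_{ij}),f_C(Y_{ij})+1)$, replaces them by $Y'_{ij}=(Y_{ij}\cup N^>)\setminus N^<$, and uses that $y^*$ is integral on $N^=$ together with $y^*(N^>)=f_C(N^>)$ and $y^*(N^>\cup N^=)=f_C(N^>\cup N^=)$ to force $y^*(Y'_{ij})=f_C(Y'_{ij})$ exactly. That argument, however, needs $N^>\subseteq Z$ (so that $j\notin Z$ implies $j\notin N^>$, keeping $j\notin Y'_{ij}$). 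Your reachable set $R$, seeded only at $N^>\cap N_1$, need not contain all of $N^>$, so even appealing to Claim~\ref{claim:edge} would not directly apply.

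The fix is simple once identified: seed the search from all of $N^>$ rather than $N^>\cap N_1$, verify $z^*(X)>f_E(X)$ for every $X$ with $N^>\subseteq X\subseteq N^>\cup N^=$ (your chain already covers this, since such $X$ contain $N^>\cap N_1$), and then invoke Claims~\ref{claim:edge} and~\ref{claim:find-path} verbatim. This is precisely what the paper does.
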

\begin{proof}
It is sufficient to prove that $z^*(N_1)=f_E(N_1)$.
Let $N^>\coloneqq\{i\in N\mid z^*_i> \beta_1-1\}$, $N^=\coloneqq\{i\in N \mid z^*_i= \beta_1-1\}$, and $N^<\coloneqq\{i\in N\mid z^*_i< \beta_1-1\}$. 

Suppose to the contrary that $z^*(N_1)>f_E(N_1)$.
We construct a graph 
\begin{align}
G=\big(N,\,\big\{(i,j)\in N\times N\mid x^*-\chi_i+\chi_j\in\ddot{\B}_M~\text{or}~y^*-\chi_i+\chi_j\in\bar{\B}_C\big\}\big).
\end{align}
Since $N_1 \in \argmax_{S\subseteq N} f_E(S)-(\beta_1-1)|S|$, 
we have $f_E(N_1)-(\beta_1-1)|N_1|\ge f_E(X)-(\beta_1-1)|X|$ for any $X\subseteq N$.
For any $X$ with $N^>\subseteq X\subseteq N^>\cup N^=$, it holds that
\begin{align}
    z^*(X)
    &=z^*(N^>)+z^*(X\setminus N^>)\\
    &= z^*(N^>)+(\beta_1-1)\cdot (|X|-|N^>|)\\
    &\ge z^*(N^>\cap N_1)+(\beta_1-1)\cdot (|X|-|N^>\cap N_1|)\\
    &= z^*(N_1)-(\beta_1-1)\cdot (|N_1|-|N^>\cap N_1|)+(\beta_1-1)\cdot (|X|-|N^>\cap N_1|) &&(\text{by \Cref{lem:S1-lower}})\\
    &= z^*(N_1)-(\beta_1-1)|N_1|+(\beta_1-1)|X|\\
    &> f_E(N_1)-(\beta_1-1)|N_1|+(\beta_1-1)|X| &&(\text{by assumption})\\
    &\ge f_E(X)-(\beta_1-1)|X|+(\beta_1-1)|X|=f_E(X).
\end{align}
Hence, by the same proofs of Claims~\ref{claim:edge} and \ref{claim:find-path}, there exist paths in $G$ from an agent in $N^>$ to an agent in $N^<$.
Then, by applying \Cref{lem:transfer} with $N'=N$ and $\beta=\beta_1-1$, we can decrease the value of $\Phi$, which is a contradiction.
Hence, we obtain $z^*(N_1)=f_E(N_1)$. This implies that $x^*(N_1)=f_M(N_1)$ and $y^*(N_1)=f_C(N_1)$.
\end{proof}

\subsection{Decomposition}\label{sec:proximity-decomp}
We have shown that $\beta_1-1\le z_i^*\le \beta_1$ for all $i\in N_1$ in \Cref{lem:S1-upper,lem:S1-lower}.
We describe that we can derive a similar result for $N_2, \dots, N_q$.

Let $N'_1=N\setminus N_1$.
For a supermodular function $f$, we denote $f^{(1)}\colon 2^{N_1'}\to\mathbb{Z}$ to be the supermodular function obtained from $f$ by contracting $N_1$, i.e., $f^{(1)}(X)=f(X\cup N_1)-f(N_1)$ for each $X\subseteq N_1'$.
We consider the M-convex set $\ddot{\B}_M^{(1)}$ of $f_M^{(1)}$, integral base-polyhedra $\bar{\B}_C^{(1)}$ of $f_C^{(1)}$, and $\B_E^{(1)}=\ddot{\B}_M^{(1)}+\bar{\B}_C^{(1)}$.

By \Cref{lem:S1-complete}, we have $z^*_{N'_1}\in \B_E^{(1)}$.
In addition, for any $z_{N_1'}\in\B_E^{(1)}$, an extended vector $z=(z^*_{N_1},z_{N_1'})$ is contained in $\B_E$ because 
\begin{align}
z(X)
&\ge f_E(X\cap N_1)+f_E^{(1)}(X\cap N_1')
=f_E(X\cap N_1)+f_E(X\cup N_1)-f(N_1)\ge f_E(X)
\end{align}
for all $X\subseteq N$ by the supermodularity of $f_E$.
Hence, we obtain the following lemma.
Let $\Phi'\colon \mathbb{R}^{N_1'}\to\mathbb{R}$ be the symmetric strictly convex function such that $\Phi'(z_{N_1'})=\Phi(z^*_{N_1},z_{N_1'})$.
\begin{lemma}\label{lem:S1-separating}
For any $z_{N_1'} \in \B_E^{(1)}$, 
a vector $z=(z^*_{N_1},z_{N_1'})$ is a $\Phi$-minimizer of $\B_E$ if and only if $z_{N_1'}$ is a $\Phi'$-minimizer of $\B_E^{(1)}$.
\end{lemma}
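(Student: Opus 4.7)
The plan is to exploit the fact that \Cref{lem:S1-complete} has already done all the heavy lifting: it pins down the contribution of $N_1$ so that the problem on $N$ decouples into a fixed piece on $N_1$ and a free piece on $N_1'$. Given that, the lemma reduces to a straightforward minimization transfer along the embedding $z_{N_1'}\mapsto (z^*_{N_1},z_{N_1'})$.

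First, I would note that both feasibility statements needed are already established in the paragraph preceding the lemma: $z^*_{N_1'}\in\B_E^{(1)}$ (via \Cref{lem:S1-complete}, since $x^*(N_1)=f_M(N_1)$ and $y^*(N_1)=f_C(N_1)$ make $x^*_{N_1'}$ and $y^*_{N_1'}$ feasible for the contracted M-convex set and base-polyhedron, respectively), and $(z^*_{N_1},z_{N_1'})\in\B_E$ for every $z_{N_1'}\in\B_E^{(1)}$ by the supermodularity calculation shown there. By definition, $\Phi(z^*_{N_1},z_{N_1'})=\Phi'(z_{N_1'})$, so the affine embedding $z_{N_1'}\mapsto (z^*_{N_1},z_{N_1'})$ identifies values of $\Phi'$ on $\B_E^{(1)}$ with values of $\Phi$ on the slice $\{z^*_{N_1}\}\times\B_E^{(1)}\subseteq \B_E$.

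For the forward direction, assume $z=(z^*_{N_1},z_{N_1'})$ is a $\Phi$-minimizer of $\B_E$. For any $\tilde z_{N_1'}\in\B_E^{(1)}$, the extended vector $(z^*_{N_1},\tilde z_{N_1'})$ lies in $\B_E$, whence
\begin{align}
\Phi'(\tilde z_{N_1'})=\Phi(z^*_{N_1},\tilde z_{N_1'})\ge \Phi(z^*_{N_1}, z_{N_1'})=\Phi'(z_{N_1'}),
\end{align}
so $z_{N_1'}$ is a $\Phi'$-minimizer. Conversely, if $z_{N_1'}$ minimizes $\Phi'$ over $\B_E^{(1)}$, then since $z^*_{N_1'}\in\B_E^{(1)}$ we get $\Phi'(z_{N_1'})\le\Phi'(z^*_{N_1'})$, hence $\Phi(z^*_{N_1},z_{N_1'})\le \Phi(z^*_{N_1},z^*_{N_1'})=\Phi(z^*)$; since $z^*$ is optimal and $(z^*_{N_1},z_{N_1'})\in\B_E$, equality holds, so $(z^*_{N_1},z_{N_1'})$ is itself a $\Phi$-minimizer.

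There is no real obstacle here — the entire decomposition is a clean consequence of \Cref{lem:S1-complete}, which is the genuine technical content. The only point to double-check is that $\Phi'$ is indeed well defined as a symmetric strictly convex function on $\mathbb{R}^{N_1'}$: strict convexity is inherited by restriction in the remaining coordinates, and symmetry on $N_1'$ follows because the symmetry group of $\Phi$ contains all permutations fixing $N_1$ pointwise.
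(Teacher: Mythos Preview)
Your proposal is correct and matches the paper's approach exactly: the paper itself does not give a separate proof but writes ``Hence, we obtain the following lemma'' after establishing precisely the two feasibility facts you invoke, so your argument simply spells out the elementary minimization transfer that the paper leaves implicit. Your closing remark on $\Phi'$ being symmetric strictly convex is a useful sanity check that the paper omits.
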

Therefore, we can apply the results in Section~\ref{sec:proximity-peak} to $N_1'$,  $f_M^{(1)}$, $f_C^{(1)}$ and $f_E^{(1)}$, and repeat the same procedure.
By the definition of the canonical partition and the essential values, we obtain the following lemma.
\begin{lemma}\label{lem:decomposition}
For each $j=1,\dots,q$, it holds that $\beta_j-1\le z_i^*\le \beta_j$ for every $i\in N_j$.
\end{lemma}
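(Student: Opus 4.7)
The plan is to prove \Cref{lem:decomposition} by induction on $j$, using the results already established in \Cref{sec:proximity-peak} as the base case and \Cref{lem:S1-separating} to reduce the case of $N_j$ to the case of the first peak-set on a suitably contracted problem.

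For the base case $j=1$, the claim is essentially already proved: \Cref{lem:S1-upper} gives $z_i^* \le \beta_1$ for all $i\in N$ (in particular for $i\in N_1$), while \Cref{lem:S1-lower} gives $z_i^* \ge \beta_1 - 1$ for $i\in N_1$.

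For the inductive step, assume the conclusion holds for indices $1,2,\dots,j-1$. Set $S = N_1 \cup \cdots \cup N_{j-1}$ and $N^{(j-1)} = N\setminus S$. Iterating \Cref{lem:S1-complete} along the sequence of decompositions yields $x^*(S)=f_M(S)$ and $y^*(S)=f_C(S)$, so the restrictions $x^*_{N^{(j-1)}}$ and $y^*_{N^{(j-1)}}$ lie in the contracted M-convex set $\ddot{\B}_M^{(j-1)}$ of $f_M^{(j-1)}(X)=f_M(X\cup S)-f_M(S)$ and in the contracted integral base-polyhedron $\bar{\B}_C^{(j-1)}$ of $f_C^{(j-1)}$, respectively. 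Repeated application of \Cref{lem:S1-separating} then shows that $z^*_{N^{(j-1)}} = x^*_{N^{(j-1)}} + y^*_{N^{(j-1)}}$ is a $\Phi'$-minimizer on the hybrid set $\B_E^{(j-1)} = \ddot{\B}_M^{(j-1)} + \bar{\B}_C^{(j-1)}$ for a symmetric strictly convex function $\Phi'$ obtained from $\Phi$ by fixing the coordinates indexed by $S$. By the recursive characterization of the canonical partition and essential values recalled in~\eqref{eq:beta_j}, the first essential value of the contracted problem on $N^{(j-1)}$ is exactly $\beta_j$ and its first peak-set is exactly $N_j$. Applying the base case (\Cref{lem:S1-upper} and \Cref{lem:S1-lower}) to this contracted instance therefore yields $\beta_j - 1 \le z_i^* \le \beta_j$ for every $i\in N_j$, completing the induction.

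The main obstacle is the bookkeeping needed to justify that Section~\ref{sec:proximity-peak} may be applied to the contracted problem at each step: one must verify that the decompositions of $x^*$ and $y^*$ persist under contraction (so that $x^*_{N^{(j-1)}}\in\ddot{\B}_M^{(j-1)}$ and $y^*_{N^{(j-1)}}\in\bar{\B}_C^{(j-1)}$), and that the canonical partition of $\ddot{\B}_E^{(j-1)}$ has first block $N_j$ with first essential value $\beta_j$. The first point follows from the iterated application of \Cref{lem:S1-complete} together with the fact that contraction of a supermodular function preserves supermodularity; the second point is exactly the content of the recursive definition~\eqref{eq:beta_j}. Modulo these two verifications, which are routine once the recursion is unwound, the induction goes through.
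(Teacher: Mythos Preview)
Your proposal is correct and follows essentially the same approach as the paper: the paper proves \Cref{lem:decomposition} by first establishing the case $j=1$ via \Cref{lem:S1-upper,lem:S1-lower}, then uses \Cref{lem:S1-complete} and \Cref{lem:S1-separating} to pass to the contracted problem on $N\setminus N_1$, and finally invokes the recursive description of the canonical partition (cf.~\eqref{eq:beta_j}) to iterate. Your write-up is in fact more explicit than the paper's one-sentence justification, but the underlying argument is identical.
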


Now, we are ready to prove \Cref{thm:proximity}.
\begin{proof}[Proof of \Cref{thm:proximity}]
Recall that the partition $N_1, \dots, N_q$ is the canonical partition of $\ddot{\B}_E$, and $\beta_1,\dots, \beta_q$ are the corresponding essential values.
Let $\hat{N}_1, \dots, \hat{N}_r$ be the principal partition of $\bar{\B}_E$, and let $\lambda_1, \dots, \lambda_r$ be the critical values.
    We fix $i\in N$ arbitrarily.
    Let $j\in [q]$ and $k \in [r]$ be the unique indices such that $i \in N_j$ and $i \in \hat{N}_k$.
    By invoking \Cref{thm:canonical}, we have $\beta_j=\lceil \lambda_k\rceil$.
    Consequently, by \Cref{thm:principal}  and \Cref{lem:decomposition}, we obtain $z_i^*\le \beta_j=\lceil\lambda_k\rceil=\lceil\bar{z}_i\rceil$.

    To show the other inequality, let $\Phi'\colon \mathbb{R}^N\to\mathbb{R}$ be a symmetric strictly convex function such that $\Phi'(z)=\Phi(-z)$ for $z\in\mathbb{R}^N$.
    Let $\B_E'=-\B_E$ and $\bar{\B}_E'=-\bar{\B}_E$.
    Then, $-z^*$ and $-\bar{z}$ are $\Phi'$-minimizers of $\B_E'$ and $\bar{\B}_E'$, respectively.
    By applying the same argument as above, we obtain $-z_i^*\le \lceil -\bar{z}_i\rceil$, which is equivalent to $z_i^*\ge\lfloor\bar{z}_i\rfloor$.
\end{proof}

\subsection{Structures in Fair Allocation}\label{sec:proximity-fair}
Before concluding this section, we discuss additional structures that can be established in the case of fair allocations.
We partition the goods according to the canonical partition as follows.
Let $M_1$ and $C_1$ denote the subset of indivisible goods $M$ and divisible goods $C$, respectively, that must be allocated to agents in $N_1$.
We iteratively define $M_j$ and $C_j$ as the subset of $M\setminus\bigcup_{j'=1}^{j-1}M_{j'}$ and $C\setminus\bigcup_{j'=1}^{j-1}C_{j'}$, respectively, that must be allocated to agents in $\bigcup_{j=1}^i N_j$.
In other words, $M_j$ and $C_j$ ($j=1,\dots, q$) is defined as
\begin{align}
    M_j &= \textstyle\left\{g\in M\setminus\bigcup_{j'=1}^{j-1}M_{j'} \mid v_{ig}=0~(\forall i\in N\setminus \bigcup_{j'=1}^j N_{j'}) \right\}, \label{eq:canonical indivisible}\\
    C_j &= \textstyle\left\{c\in C\setminus\bigcup_{j'=1}^{j-1}C_{j'} \mid v_{ic}=0~(\forall i\in N\setminus \bigcup_{j'=1}^j N_{j'})\right\}. \label{eq:canonical divisible}
\end{align}
We refer $M_1,\dots,M_q$ and $C_1,\dots,C_q$ as the canonical partitions of the indivisible goods and the divisible goods, respectively.

\begin{theorem}\label{thm:structure fair}
    For any allocation $\pi^*$ whose utility vector is a $\Phi$-minimizer over $\B_E$, it holds that $\sum_{i\in N_j}\pi^*_{ie}=1$ for every good $e\in M_j\cup C_j$ and $j=1,2,\dots,q$.
\end{theorem}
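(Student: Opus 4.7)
The plan is to prove the statement by induction on $j$, combining the iterated version of Lemma~\ref{lem:S1-complete} (obtained via Lemma~\ref{lem:S1-separating}) with a direct counting argument based on the definitions of $M_j$ and $C_j$.

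First I would establish the identity
\[ z^*(N_j) \;=\; f_E\Bigl(\bigcup_{j'\le j}N_{j'}\Bigr)-f_E\Bigl(\bigcup_{j'< j}N_{j'}\Bigr) \;=\; |M_j|+|C_j|\qquad (j\in[q]). \]
The first equality is obtained by iterating the decomposition already carried out in this section: Lemma~\ref{lem:S1-complete} gives $z^*(N_1)=f_E(N_1)$, and Lemma~\ref{lem:S1-separating} reduces the problem to $\B_E^{(1)}$, whose canonical partition is $N_2,\dots,N_q$; applying Lemma~\ref{lem:S1-complete} inside this reduced problem gives $z^*(N_2)=f_E^{(1)}(N_2)$, and repeating the reduction yields the formula for every $j$. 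The second equality follows from \eqref{eq:canonical indivisible} and \eqref{eq:canonical divisible}, since $\bigcup_{j'\le j}M_{j'}$ (resp.\ $\bigcup_{j'\le j}C_{j'}$) is precisely the set of goods in $M$ (resp.\ $C$) that may only be allocated to agents in $\bigcup_{j'\le j}N_{j'}$, and the union is disjoint.

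Then I would prove by induction on $j$ the conjunction: \emph{(a)} $\sum_{i\in N_j}\pi^*_{ie}=1$ for every $e\in M_j\cup C_j$, and \emph{(b)} $\pi^*_{ie}=0$ for every $i\in N_j$ and every $e\in E\setminus(M_j\cup C_j)$. For the base case $j=1$, each $e\in M_1\cup C_1$ satisfies $v_{ie}=0$ for all $i\notin N_1$ by \eqref{eq:canonical indivisible}--\eqref{eq:canonical divisible}, so utilitarian optimality of $\pi^*$ immediately yields (a); since then $\pi^*(N_1,M_1\cup C_1)=|M_1|+|C_1|=z^*(N_1)$, no utility remains for agents in $N_1$ from goods outside $M_1\cup C_1$, proving (b).

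For the inductive step ($j\ge 2$), take any $e\in M_j\cup C_j$. By construction, $v_{ie}=0$ for every $i\notin\bigcup_{j'\le j}N_{j'}$, and the inductive hypothesis (b) applied to each $j'<j$ forces $\pi^*_{ie}=0$ for $i\in N_{j'}$ (since $e\notin M_{j'}\cup C_{j'}$ by disjointness of the canonical partitions of the goods). Hence $\sum_{i\in N_j}\pi^*_{ie}=\sum_{i\in N}\pi^*_{ie}=1$, which is (a). Then $\pi^*(N_j,M_j\cup C_j)=|M_j|+|C_j|=z^*(N_j)$, and the same reasoning as in the base case gives (b). The only substantive input is the identity $z^*(N_j)=|M_j|+|C_j|$, which is supplied by the iterated decomposition; the remainder is a straightforward bookkeeping argument driven by the definitions of $M_j$, $C_j$ and utilitarian optimality.
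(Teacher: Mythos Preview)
Your proposal is correct and takes essentially the same approach as the paper: both first establish $z^*(N_j)=|M_j|+|C_j|$ by iterating Lemmas~\ref{lem:S1-complete} and~\ref{lem:S1-separating}, and then conclude by a counting argument. The only cosmetic difference is the direction of the induction---the paper argues backward from $j=q$ (agents in $N_j$ desire only goods in $\bigcup_{j'\ge j}(M_{j'}\cup C_{j'})$), while you argue forward from $j=1$ (goods in $M_j\cup C_j$ are desired only by agents in $\bigcup_{j'\le j}N_{j'}$)---but these are mirror images of the same bookkeeping.
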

\begin{proof}
    Let $z^*$ be the utility vector of $\pi^*$.
    Let $x^*$ and $y^*$ be the utility vectors for indivisible and divisible goods in $\pi^*$, respectively.
    Thus, $z^*=x^*+y^*$.
    
First, since $f_M(N_1)=|M_1|$ and $f_C(N_1)=|C_1|$, we have $z^*(N_1)=|M_1 \cup C_1|$ by \Cref{lem:S1-complete}.
Next, let $N_j'=N\setminus \bigcup_{j'=1}^j N_{j'}$ for $j=1,\dots, q-1$.
By \Cref{lem:S1-separating}, $z^*_{N_1'}$ is a $\Phi'$-minimizer, where $\Phi'(z)=\phi(z^*_{N_1}, z)$.
Thus by the definition of $N_2$ and \Cref{lem:S1-complete} again, $x^*_{N_1'}(N_2) = f^{(1)}_M(N_2) = f_M(N_1\cup N_2)-f_M(N_1) = |M_2|$ and $y^*_{N_1'}(N_2) = f^{(1)}_C(N_2) = |C_2|$.
By iteratvely applying this argument, we observe that $z^*(N_{j}) = |M_j \cup C_{j}|$ for $j=2,\dots, q$.

Because agents in $N_q$ want only the goods in $M_q \cup C_q$, these goods are allocated to the agents in $N_q$.
Then, for each $j=q-1, \dots, 1$, since agents in $N_{j}$ want only the goods in $\bigcup_{j'=j}^q (M_{j'} \cup C_{j'})$ but the goods in $\bigcup_{j'=j+1}^q (M_{j'} \cup C_{j'})$ are allocated to agents in $N_{j+1}\cup \dots \cup N_q$, the goods in $M_{j} \cup C_{j}$ are allocated to agents in $N_j$.
Therefore, the theorem holds.
\end{proof}

\section{Tractability for Identical Divisible Goods}\label{sec:identical-divisible}

In this section, we focus on the setting where all the divisible goods are identical, i.e., $v_{ic}=v_{ic'}$ for any $c,c'\in C$ and $i\in N$.
Assume that $C$ is non-empty, and let $\Phi$ be a symmetric strictly convex function.
The main result of this section is a polynomial-time algorithm to find a $\Phi$-fair allocation (i.e., an allocation whose utility vector is a $\Phi$-minimizer).
This implies that we can find an allocation that maximizes the Nash welfare for mixed goods in polynomial time when the divisible goods are identical.

Our algorithm utilizes the structures discussed in the previous section.
Namely, we seek allocations whose utility vectors satisfy the statements in \Cref{lem:decomposition} and \Cref{thm:structure fair}, which are necessary conditions to be optimal.
Then, we find the best allocation from them.
In the following, we describe our algorithm, which is summarized in \Cref{alg:identical_div}.

The algorithm first finds a discrete $\Phi$-minimizer $\ddot{z}^*\in\argmin_{z\in\ddot{\B}_E}\Phi(z)$, the canonical partition $N_1,\dots,N_q$, and the essential values $\beta_1,\dots,\beta_q$ of $\ddot{\B}_E$. 
These can be computed in polynomial time by \Cref{thm:canonical}.
Note that $\ddot{z}^*$ is an optimal utility vector if every good is assumed to be indivisible.
In addition, the canonical partition of the indivisible goods $M_1,\dots,M_q$ and that of the divisible ones $C_1,\dots,C_q$ can be calculated easily by~\eqref{eq:canonical indivisible} and \eqref{eq:canonical divisible}.
Since all divisible goods are identical, only one of $C_1,\dots,C_q$ is non-empty.
Let $j^*$ be the index such that $C_{j^*}=C$.

Next, we decide on an allocation for agents in $N_j$ for each $j \neq j^*$.
Let us fix $j \neq j^*$.
\Cref{thm:structure fair} implies that in an optimal allocation, the agents in $N_j$ receive only the indivisible goods in $M_j$.
Moreover, by \Cref{lem:decomposition}, some agents in $N_j$ must receive $\beta_j$ goods and the others must receive $\beta_j-1$.
Because $\beta_j-1 \leq \ddot{z}^*_i \leq \beta_j$ holds by \Cref{thm:canonical}, we allocate goods in $M_j$ so that each agent $i\in N\setminus N_{j^*}$ receives $\ddot{z}^*_i$ goods. 
Such an allocation can be computed by solving a bipartite matching problem.\footnote{It can also be calculated directly with a method of Harvey et al.~\cite{HLLT2006}.}
For agents in $N_j$, the utility vector of this allocation is value-equivalent to an optimal one.
Thus, we have found an optimal allocation for agents in $N_j$.

The remaining task is to determine the allocation of $M_{j^*}\cup C$ to agents $N_{j^*}$.
Since the optimal allocation of $M_{j^*}\cup C$ depends on $\Phi$ as we saw in \Cref{ex:decmin=SOS,ex:incmax=SOS}, we conduct an enumeration-based approach rather than performing a full characterization.

Let $\pi^*$ be an optimal allocation.
Let $N^+_{j^*}$ be the set of agents in $N_{j^*}$ who desire the divisible goods, i.e., $N^+_{j^*}=\{i\in N_{j^*} \mid v_i(c)=1~(\forall c \in C)\}$. Let $N^-_{j^*}=N_{j^*}\setminus N^+_{j^*}$.
The following lemma indicates that there are a finite number of candidates for a $\Phi$-minimizer.
\begin{lemma}\label{lem:cake-equal}
All the agents who receive divisible goods (i.e., $\pi^*_i(C)>0$) have the same utility.
\end{lemma}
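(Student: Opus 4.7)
The plan is to argue by contradiction through a local swap of divisible mass, exploiting both the identical nature of the divisible goods and the strict convexity inequality~\eqref{eq:ssc}. Suppose two agents $i,j \in N$ both satisfy $\pi^*_i(C)>0$ and $\pi^*_j(C)>0$ but $\pi^*_i(E) \neq \pi^*_j(E)$; without loss of generality, $\pi^*_i(E) > \pi^*_j(E)$.

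First, I would observe that $\pi^*_i(C)>0$ forces $i \in N^+_{j^*}$: since every allocation we consider is utilitarian, $\pi^*_{ic}>0$ for some $c \in C$ implies $v_{ic}=1$, and because the divisible goods are identical, $v_{ic'}=1$ for every $c' \in C$. The same reasoning puts $j \in N^+_{j^*}$. Hence any small transfer of a divisible good from $i$ to $j$ remains utilitarian.

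The core step is then the perturbation: pick any $c \in C$ with $\pi^*_{ic}>0$, fix $\epsilon$ with $0 < \epsilon < \min\{\pi^*_{ic},\, \pi^*_i(E)-\pi^*_j(E)\}$, and define $\pi'$ by $\pi'_{ic} = \pi^*_{ic}-\epsilon$, $\pi'_{jc} = \pi^*_{jc}+\epsilon$, leaving every other entry unchanged. Then $\pi'$ is a valid relaxed allocation, and its utility vector satisfies $\pi'(E) = \pi^*(E) - \epsilon(\chi_i - \chi_j)$. Applying~\eqref{eq:ssc} with $z = \pi^*(E)$ yields $\Phi(\pi'(E)) < \Phi(\pi^*(E))$, contradicting the $\Phi$-fairness of $\pi^*$. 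I do not foresee any genuine obstacle; the only checks are that $\epsilon$ is small enough to keep $\pi'$ in $[0,1]^{N\times E}$ and that $j$ indeed desires $c$, both of which are ensured by the argument above.
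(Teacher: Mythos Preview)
Your proposal is correct and is essentially the paper's own argument: assume two agents receiving divisible goods have different utilities, transfer a small fraction of a divisible good from the higher-utility agent to the lower one, and invoke~\eqref{eq:ssc} to contradict the optimality of $\pi^*$. The only cosmetic difference is that you spell out why both agents desire every $c\in C$ (via the utilitarian constraint and the identical-goods assumption), whereas the paper simply takes $i,i'\in N_{j^*}^+$ at the outset.
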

\begin{proof}
     Assume that there exist two agents $i,i' \in N_{j^*}^+$ with $\pi^*_i(E) > \pi^*_{i'}(E)$, $\pi^*_i(C)>0$, and $\pi^*_{i'}(C)>0$.
     Then by equation~\eqref{eq:ssc}, we can decrease the value of $\Phi$ by transferring a fraction $\varepsilon \in (0, \pi^*_i(E) - \pi^*_{i'}(E))$ of a divisible good from agent $i$ to agent $i'$.
     This contradicts that the utility vector of $\pi^*$ is a $\Phi$-minimizer.
\end{proof}

Let $k$ be the number of agents in $N^+_{j^*}$ who receive $\beta_{j^*}$ indivisible goods and let $\ell$ be the total number of indivisible goods received by agents in $N^+_{j^*}$.
Note that $k < |N^+_{j^*}|$.
The key observation is the following lemma.
\begin{lemma}\label{lem:optimal alloc}
    The following properties hold:
    \begin{enumerate}
        \item $|N^+_{j^*}|\cdot (\beta_{j^*}-1)+k\le \ell+|C|\le |N^+_{j^*}|\cdot \beta_{j^*}$;
        \item there exist $X\subseteq N^+_{j^*}$ such that $|X|=k$ and 
        \begin{enumerate}
            \item for each $i\in X$:  $\pi^*_i(M_{j^*})=\beta_{j^*}$ and $\pi^*_i(C)=0$;
            \item 
            for each $i\in N^+_{j^*}\setminus X$: $\pi^*_i(M_{j^*})\le \beta_{j^*}-1$ and $\pi^*_i(E)=\beta_{j^*}-(|N^+_{j^*}|\cdot \beta_{j^*}-\ell-|C|)/(|N^+_{j^*}|-k)$;
        \end{enumerate}
        \item $\pi^*_i(M_{j^*})\in \{\beta_{j^*},\beta_{j^*}-1\}$ and $\pi^*_i(C)=0$ for each $i\in N^-_{j^*}$.
    \end{enumerate}
\end{lemma}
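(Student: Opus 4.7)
The plan is to work from the outside in: first dispatch the agents in $N^-_{j^*}$ to obtain part 3, then analyze the internal structure of $N^+_{j^*}$ through the set $X$ to obtain part 2, and finally read off part 1 from the resulting bounds.

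Part 3 is a two-line consequence of the definitions: no agent in $N^-_{j^*}$ desires any divisible good, so $\pi^*_i(C)=0$ automatically, which forces $\pi^*_i(E)=\pi^*_i(M_{j^*})$ to be an integer. Lemma~\ref{lem:decomposition} then pins this integer down to $\beta_{j^*}-1$ or $\beta_{j^*}$.

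For part 2, I would define $X\coloneqq\{i\in N^+_{j^*}\colon \pi^*_i(M_{j^*})=\beta_{j^*}\}$, which has cardinality $k$ by construction. Property 2(a) is immediate: any positive share $\pi^*_i(C)>0$ for $i\in X$ would push $\pi^*_i(E)$ strictly above $\beta_{j^*}$, contradicting Lemma~\ref{lem:decomposition}. The substantive step is to show that all agents in $N^+_{j^*}\setminus X$ share a common utility value. By Lemma~\ref{lem:cake-equal}, those in $N^+_{j^*}\setminus X$ receiving a positive fraction of divisible goods already agree on a common utility $u\in[\beta_{j^*}-1,\beta_{j^*}]$. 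If some $i\in N^+_{j^*}\setminus X$ receives no divisible goods, then $\pi^*_i(E)=\pi^*_i(M_{j^*})$ is an integer bounded above by $\beta_{j^*}-1$ (since $i\notin X$) and below by $\beta_{j^*}-1$ (by Lemma~\ref{lem:decomposition}), so $\pi^*_i(E)=\beta_{j^*}-1$. I would then rule out the coexistence of such an $i$ with some $j\in N^+_{j^*}\setminus X$ having $\pi^*_j(C)>0$ and $u>\beta_{j^*}-1$: transferring a sufficiently small fraction $\epsilon$ of a divisible good from $j$ to $i$ strictly decreases $\Phi$ by \eqref{eq:ssc}, contradicting the optimality of $\pi^*$. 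Hence every agent in $N^+_{j^*}\setminus X$ shares a single utility $u$.

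To identify this common utility, I would count the total utility of $N^+_{j^*}$ in two ways: because $C=C_{j^*}$ and only agents in $N^+_{j^*}$ desire divisible goods, Theorem~\ref{thm:structure fair} guarantees that all $|C|$ divisible goods are allocated to $N^+_{j^*}$, so the total utility equals $\ell+|C|$. Splitting by $X$ yields $\ell+|C|=k\beta_{j^*}+(|N^+_{j^*}|-k)u$, and solving for $u$ (using $|N^+_{j^*}|-k>0$) recovers the expression in 2(b). Part 1 then follows immediately from $\beta_{j^*}-1\le u\le\beta_{j^*}$: the upper inequality gives $\ell+|C|\le|N^+_{j^*}|\beta_{j^*}$ and the lower gives $\ell+|C|\ge|N^+_{j^*}|(\beta_{j^*}-1)+k$.

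The main obstacle is the case distinction establishing uniform utility across $N^+_{j^*}\setminus X$, specifically the configuration in which some agents in this set receive divisible goods at utility $u>\beta_{j^*}-1$ while others receive none at utility exactly $\beta_{j^*}-1$. The perturbation argument handles it cleanly, but one must verify that the gap $u-(\beta_{j^*}-1)>0$ is precisely what makes a small divisible-good transfer strictly improve $\Phi$ via \eqref{eq:ssc}. Everything else is routine bookkeeping.
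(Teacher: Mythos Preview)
Your proposal is correct and follows essentially the same route as the paper: define $X=\{i\in N^+_{j^*}\mid \pi^*_i(M_{j^*})=\beta_{j^*}\}$, use Lemma~\ref{lem:decomposition} to settle parts 3 and 2(a), invoke Lemma~\ref{lem:cake-equal} plus a divisible-good transfer (via \eqref{eq:ssc}) to force a common utility on $N^+_{j^*}\setminus X$, and then count total utility $\ell+|C|$ to pin down that common value. The only cosmetic difference is ordering: the paper derives part 1 directly from $\sum_{i\in N^+_{j^*}}\pi^*_i(E)=\ell+|C|$ together with the per-agent bounds before finishing 2(b), whereas you read part 1 off from the bounds on $u$ after establishing 2(b); both are equivalent.
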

\begin{proof}
Let $X= \{i\in N^+_{j^*}\mid \pi^*_i(M_{j^*})=\beta_{j^*}\}$.
By \Cref{lem:decomposition}, it holds that $\beta_{j^*}-1 \leq \pi^*_i(E) \leq \beta_{j^*}$ for each $i \in N_{j^*}$.
Hence, it is easy to see properties 2(a) and 3.
In addition, by the definitions of $k$ and $\ell$, we have $|N^+_{j^*}|\cdot (\beta_{j^*}-1)+k\leq \sum_{i \in N_{j^*}^+} \pi^*_i (E) = \ell+|C|$, and $\sum_{i \in N_{j^*}^+} \pi^*_i (E) \leq |N^+_{j^*}|\cdot \beta_{j^*}$. 

It remains to see property 2(b).
By definition of $X$, we have $\pi^*_i(M_{j^*})\le \beta_{j^*}-1$ for each $i \in N_{j^*}^+ \setminus X$.
Let $Y$ be the set of agents in $N_{j^*}^+ \setminus X$ who receive divisible goods.
By \Cref{lem:cake-equal}, $\pi^*_i(E)=\pi^*_{i'}(E)$ for any $i, i' \in Y$. 
If $Y \neq N_{j^*}^+ \setminus X$ and $\pi^*_i(E) > \beta_{j^*}-1$ for some $i\in Y$, then giving some portion of a divisible good from agent $i$ to agent in $N_{j^*}^+ \setminus (X\cup Y)$ decreases the value of $\Phi$ by \eqref{eq:ssc}, which is a contradiction.
Thus, $Y=N^+_{j^*}\setminus X$ or $\pi^*_i(E) = \beta_{j^*}-1$ for all $i\in Y$.
Hence, in either case, all the agents in $N_{j^*}^+ \setminus X$ have the same utility.
Therefore, it holds that
\begin{align}
\pi^*_i(E)=\frac{\ell+|C|-k\cdot \beta_{j^*}}{|N^+_{j^*}|-k}=\beta_{j^*}-\frac{|N^+_{j^*}|\cdot \beta_{j^*}-\ell-|C|}{|N^+_{j^*}|-k} \quad (\forall i \in N_{j^*}^+ \setminus X),
\end{align}
and property 2(b) is satisfied.
\end{proof}
Since \Cref{lem:optimal alloc} specifies the utility vector (i.e., $\pi^*_i(E)$ for each $i\in N_{j^*}$) of an optimal allocation up to value-equivalence, it suffices to find an allocation whose utility vector satisfies the statement in \Cref{lem:optimal alloc}.
In fact, if we are given $\ell$, an optimal allocation can be computed as follows.
For each $k=0,\dots, |N_{j^*}^+|$ such that property 1 in \Cref{lem:optimal alloc} is satisfied, 
\begin{enumerate}
\item find an allocation $\pi^{k,\ell} \in \{0,1\}^{N_{j^*}\times M_{j^*}}$ of indivisible goods in $M_{j^*}$ such that 
(a) $|\{i\in N^+_{j^*}\mid \pi_i(M_{j^*})=\beta_{j^*}\}| \leq k$,
(b) $\pi_i(M_{j^*})\le \beta_{j^*}$ for each $i\in N^+_{j^*}$,
(c) $\sum_{i\in N^+_{j^*}}\pi_i(M_{j^*})=\ell$,
(d) $\pi_i(M_{j^*})\in \{\beta_{j^*},\beta_{j^*}-1\}$ for each $i\in N^-_{j^*}$;

\item if $\pi^{k,\ell}$ exists, 
let $\hat{\pi}^{k,\ell}$ be the allocation by allocating indivisible goods according to $\pi^{k, \ell}$, and allocating divisible goods by a water-filling policy so that 
\begin{align}
\hat{\pi}^{k,\ell}_i(c)=
\frac{1}{|C|}\cdot\left(\beta_{j^*}-\frac{|N^+_{j^*}|\cdot\beta_{j^*}-\ell-|C|}{|N^+_{j^*}|-k}-\pi^{k,\ell}_i(M)\right)
\end{align}
for each $i\in N^+_{j^*}$ such that $\pi_i^{k,\ell}(M)<\beta_{j^*}$ and $c\in C$.
\end{enumerate}
Let us see that this indeed works.
Since $\Phi(\hat{\pi}^{k,\ell})\leq \Phi(\hat{\pi}^{k+1,\ell})$ holds (if $\pi^{k,\ell}$ exists), 
the smallest $k$ such that $\pi^{k,\ell}$ exists is the number in \Cref{lem:optimal alloc}.
For such an integer $k$, we have $|\{i\in N^+_{j^*}\mid \pi_i(M_{j^*})=\beta_{j^*}\}| = k$, and the properties in \Cref{lem:optimal alloc} are satisfied except the allocation of divisible goods.
Once we have decided on an allocation of indivisible goods, an optimal allocation of divisible goods is found by the water-filling policy.

Now, we explain how to find an allocation $\pi^{k, \ell}$ at step 1 in polynomial time.
We reduce this problem to the submodular flow problem, which can be solved in polynomial time~(see, e.g., Murota~\cite{Murota:DCA}).
Let $G=(V,A)$ be a directed graph constructed as follows.
The set of vertices $V$ is $M_{j^*}\cup N_{j^*}\cup N'_{j^*}$ where $N'_{j^*}$ is a set of copy $i'$ of each $i\in N_{j^*}$.
The set of edges $A$ is $A_1\cup A_2 \cup A_3$ where $A_1=\{(g,i')\in M_{j^*}\times N'_{j^*}\mid v_i(g)=1\}$, $A_2=\{(i',i)\mid i\in N^+_{j^*}\}$, and $A_3=\{(i',i)\mid i\in N^-_{j^*}\}$.
We define $\underbar{c}, \bar{c} \colon A \to \mathbb{Z}$ as $\underbar{c}(a)=0$ and $\bar{c}(a)=1$ for each $a\in A_1$; $\underbar{c}(a)=0$ and $\bar{c}(a)=\beta_{j^*}$ for each $a \in A_2$; $\underbar{c}(a)=\beta_{j^*}-1$ and $\bar{c}(a)=\beta_{j^*}$ for each $a \in A_3$.
In addition, let $f_{k, \ell}\colon 2^V\to\mathbb{Z}$ be a function such that 
\begin{align}
f_{k, \ell}(X)=\varphi_{k, \ell}(|X\cap N^+_{j^*}|)+(|M_{j^*}|-\ell)\bm{1}_{X\cap N^-_{j^*}\ne \emptyset}-|X\cap M_{j^*}| \quad (\forall X\subseteq V),
\end{align}
where $\varphi_{k, \ell}(h)=\min\{\beta_{j^*}h,\, (\beta_{j^*}-1)h+k,\,\ell\}$, and $\bm{1}_{X\cap N^-_{j^*}\ne\emptyset}$ takes the value $1$ if $X\cap N^-_{j^*}\ne\emptyset$ and $0$ otherwise.
We remark that $f_{k, \ell}$ is a submodular function, and $f_{k,\ell}(V)=0$ since $\ell \leq \sum_{i\in N_{j^*}^+} \pi^*_i(M_{j^*})\leq (\beta_{j^*}-1)|N_{j^*}^+|+k\le \beta_{j^*}|N_{j^*}^+|$.

\begin{lemma}\label{lem:submodular flow}
    There exists an allocation $\pi\in\{0,1\}^{N_{j^*}\times M_{j^*}}$ satisfying (a)--(d) if and only if there exists an integral flow $\xi\colon A\to\mathbb{Z}$ satisfying $\underbar{c}(a)\le \xi(a)\le \bar{c}(a)$ (capacity constraints) and a constraint (called supply specification) that the boundary $\partial\xi\in\mathbb{Z}^V$ of the flow $\xi$, which is defined by
\begin{align}
    \partial\xi(v)=\sum_{a=(v,u)\in A}\xi(a)-\sum_{a=(u,v)\in A}\xi(a),
\end{align}
is in the M-convex set $\ddot{\B}=\{x\in\mathbb{Z}^V\mid x(V)=f_{k,\ell}(V)\text{ and }x(X)\le f_{k, \ell}(X)~(\forall X\subseteq V)\}$. 
\end{lemma}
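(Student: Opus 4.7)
The plan is to construct an explicit bijection between allocations $\pi \in \{0,1\}^{N_{j^*} \times M_{j^*}}$ satisfying (a)--(d) and integral flows $\xi \colon A \to \mathbb{Z}$ satisfying the capacity and supply constraints. Given an allocation $\pi$, set $\xi(g, i') := \pi_{ig}$ for $(g, i') \in A_1$ and $\xi(i', i) := \pi_i(M_{j^*})$ for $(i', i) \in A_2 \cup A_3$; flow conservation at each copy vertex $i'$ then holds automatically since $\pi_i(M_{j^*}) = \sum_g \pi_{ig}$. Conversely, given an integral flow $\xi$, set $\pi_{ig} := \xi(g, i')$ (and $\pi_{ig} := 0$ when $v_i(g) = 0$, i.e., the edge $(g,i')$ is absent). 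The capacity constraints then translate directly into the bounds on the entries of $\pi$: on $A_1$ they say $\pi_{ig} \in \{0,1\}$, on $A_2$ they give property (b), and on $A_3$ they give property (d).

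It remains to show that the supply specification $\partial\xi \in \ddot{\B}$ is equivalent to the ``each good is fully allocated'' condition together with properties (a) and (c). The plan is to evaluate $\partial\xi(X)$ and $f_{k,\ell}(X)$ by decomposing an arbitrary $X \subseteq V$ according to its intersections with $M_{j^*}$, $N^+_{j^*}$, $N^-_{j^*}$, and $N'_{j^*}$. Each summand of $f_{k,\ell}$ plays a distinct role: the $-|X \cap M_{j^*}|$ term (combined with $\partial\xi(V) = 0$) pins the contribution at each good vertex and forces each $g$ to be allocated to exactly one agent; the term $\varphi_{k,\ell}(|X\cap N^+_{j^*}|) = \min\{\beta_{j^*} h,\,(\beta_{j^*}-1)h + k,\,\ell\}$ encodes three bounds simultaneously, the first being implied by (b), the second precisely ruling out more than $k$ agents in $N^+_{j^*}$ with $\pi_i(M_{j^*}) = \beta_{j^*}$ and hence encoding (a), and the third encoding $\sum_{N^+_{j^*}} \pi_i(M_{j^*}) \le \ell$; finally, the term $(|M_{j^*}|-\ell)\bm{1}_{X \cap N^-_{j^*} \neq \emptyset}$ together with $\partial\xi(V) = f_{k,\ell}(V) = 0$ forces $\sum_{N^-_{j^*}} \pi_i(M_{j^*}) = |M_{j^*}| - \ell$, which together with the previous inequality turns it into the equality (c).

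A preliminary observation to justify the setup is that $f_{k,\ell}$ is submodular (so that $\ddot{\B}$ is genuinely an M-convex set); this follows because $\varphi_{k,\ell}$ is a concave (min of linear) function of an integer argument, the indicator $\bm{1}_{X \cap N^-_{j^*} \neq \emptyset}$ is submodular with nonnegative coefficient, and $-|X \cap M_{j^*}|$ is modular. The main obstacle is the case analysis showing the forward implication in both directions across all subsets $X$; while each individual inequality unpacks straightforwardly, confirming that the subset-indexed supply constraints collectively contain exactly the information in (a) and (c) (no more, no less) requires matching each of the three pieces of $\varphi_{k,\ell}$ to the right allocation property and carefully handling mixed subsets that span multiple categories.
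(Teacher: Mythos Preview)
Your plan is essentially the same as the paper's: both directions use the correspondence $\xi(g,i')=\pi_{ig}$, $\xi(i',i)=\pi_i(M_{j^*})$, and the verification proceeds by decomposing $X$ according to its intersections with $M_{j^*}$, $N^+_{j^*}$, $N^-_{j^*}$, $N'_{j^*}$ and matching each piece of $f_{k,\ell}$ to the corresponding allocation property.

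One small imprecision is worth flagging. In the reverse direction you write that the capacity constraints on $A_2$ and $A_3$ ``directly'' give properties (b) and (d). But those capacities bound $\xi(i',i)$, whereas (b) and (d) concern $\pi_i(M_{j^*})=\sum_g \xi(g,i')$; the two coincide only once you know $\partial\xi(i')=0$. That equality is \emph{not} automatic for an arbitrary feasible $\xi$: it must be deduced from the supply specification. The paper does this by combining $\partial\xi(\{i'\})\le f_{k,\ell}(\{i'\})=0$ for each $i'\in N'_{j^*}$ with $\partial\xi(M_{j^*}\cup N_{j^*})\le 0$ and $\partial\xi(V)=0$, which together force $\partial\xi(i')=0$. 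So in your decomposition of $\partial\xi(X)$ you should handle the $N'_{j^*}$ part first and extract this fact before invoking the $A_2$/$A_3$ capacities. Once that is in place, the rest of your outline goes through exactly as in the paper.
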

\begin{proof}
Let $\pi$ be an allocation satisfying (a)--(d).
Let $\xi$ be an integral flow such that $\xi(g,i')=\pi_{i}(g)$ for each $i \in N_{j^*}, g\in M_{j^*}$, and $\xi(i',i)=\sum_{g} \xi(g,i')$ for each $i \in N_{j^*}$. 
Since $\xi(i',i)=\pi_i(M_{j^*})$, the conditions (b) and (d) ensure the capacity constraints.
By construction, $\partial \xi(g)=-1$ for each $g\in M_{j^*}$, $\partial \xi(i')=0$ for each $i'\in N'_{j^*}$, and $\partial \xi(i)=\pi_i(M_{j^*})$ for each $i\in N_{j^*}$.
To see that $\partial \xi \in \ddot{B}$, fix $X\subseteq V$.
We have $\partial \xi(X)=\sum_{i\in X\cap N_{j^*}^+} \pi_i(M_{j^*})+\sum_{i\in X\cap N_{j^*}^-} \pi_i(M_{j^*})-|X\cap M_{j^*}|$.
Here, we have $\sum_{i\in X\cap N_{j^*}^+} \pi_i(M_{j^*}) \leq \beta_{j^*} |X\cap N_{j^*}^+|$ by condition (b),  $\sum_{i\in X\cap N_{j^*}^+} \pi_i(M_{j^*}) \leq (\beta_{j^*}-1)|X\cap N_{j^*}^+|+k$ by conditions (a) and (b), and $\sum_{i\in X\cap N_{j^*}^+} \pi_i(M_{j^*}) \leq \ell$ by condition (c).
Moreover, $\sum_{i\in X\cap N_{j^*}^-} \pi_i(M_{j^*}) \leq (|M_{j^*}|-\ell)\bm{1}_{X\cap N^-_{j^*}\ne \emptyset}$ by condition (c).
Thus, we have $\partial \xi(X) \leq f_{k,\ell}(X)$.
In addition, $\partial \xi(V)=0$, and hence $\partial \xi \in \ddot{B}$.

Conversely, let $\xi$ be a feasible integral flow.
Define $\pi$ by $\pi_i(g)=\xi(g,i')$ for each $i\in N_{j^*}$ and $g\in M_{j^*}$.
For each $i\in N_{j^*}$ and $g\in M_{j^*}$, we have $\xi(g,i') \in \{0,1\}$ by the capacity constraint, and $\partial \xi(g) \leq -1$ by the supply specification with $X=\{g\}$.
In addition, we have $\partial \xi(M_{j^*}) \leq f_{k,\ell}(M_{j^*})= -|M_{j^*}|$, $\partial \xi(N_{j^*}\cup N_{j^*}') \leq |M_{j^*}|$ and $\partial \xi(V)=0$, we have $\partial \xi(M_{j^*}) = -|M_{j^*}|$.
Thus, $\sum_{i} \pi_i(g)=1$ for each $g\in M_{j^*}$, and hence $\pi$ is an allocation of $M_{j^*}$.

We observe that $\partial \xi(i')=\sum_{g\colon v_i(g)=1} \xi(g,i')-\xi(i',i) \leq f_{k,\ell}(\{i'\})=0$ for each $i'\in N'_{j^*}$.
Moreover, since $\partial \xi(M_{j^*}\cup N_{j^*}) \leq 0$ and $\partial \xi(V)=0$, we have $\partial \xi(i')=0$ for each $i'\in N'_{j^*}$, which implies that $\pi_i(M_{j^*})=\sum_{g\colon v_i(g)=1} \xi(g,i') = \xi(i',i)$.
Then we can see that each condition is satisfied.
\begin{description}
    \item[(d)] The capacity constraint for $A_3$ implies that $\pi_i(M_{j^*}) = \xi(i',i) \in \{\beta_{j^*}-1, \beta_{j^*}\}$.

    \item[(c)] We have $\sum_{i\in N_{j^*}^+} \pi_i(M_{j^*})=\partial \xi(N_{j^*}^+) \leq f_{k,\ell}(N_{j^*}^+)=\ell$. Moreover, $\partial \xi(N_{j^*}^+) = -\partial \xi(V\setminus N_{j^*}^+) \geq \ell$ since $\partial \xi(V\setminus N_{j^*}^+) \leq -\ell$.

    \item[(b)] The capacity constraint for $A_2$ implies that $\pi_i(M_{j^*}) = \xi(i',i) \leq \beta_{j^*}$ for each $i\in N_{j^*}^+$.

    \item[(a)] It holds that $\sum_{i\in N_{j^*}^+} \pi_i(M_{j^*})=\partial \xi(N_{j^*}^+) \leq \phi_{k,\ell}(N_{j^*}^+) \leq (\beta_{j^*}-1)|N_{j^*}^+|+k = (\beta_{j^*}-1)(|N_{j^*}^+|-k)+k\beta_{j^*}$. By (b), at most $k$ agents have the utility $\beta_{j^*}$.
\end{description}
This completes the proof.
\end{proof}

It is known that the feasibility of the submodular flow problem can be determined in polynomial time~\cite{fujishige2005}.
Hence, by \Cref{lem:submodular flow}, the existence of an allocation satisfying conditions (a)--(d) can be determined in polynomial time. Moreover, if such an allocation exists, we can find one of such allocations in polynomial time.

Finally, because we do not know $\ell$ in advance, we enumerate all possibilities.
That is, find a best allocation $\pi^{k, \ell}$ for each $\ell=0,1,\dots,|M_{j^*}|$ by the above procedure, and choose the best one.
Then the resulting allocation is as good as an optimal allocation $\pi^*$.

\begin{algorithm}[htb]
    \caption{Allocation algorithm when the divisible goods are identical}\label{alg:identical_div}
    \SetKwInOut{Input}{input}
    \SetKwInOut{Output}{output}
    \Input{A fair allocation instance $(N,M,C,v)$ and a symmetric strictly convex function $\Phi$}
    \Output{A $\Phi$-fair allocation}
    Compute the canonical partition $N_1,\dots,N_q$, the essential values $\beta_1,\dots,\beta_q$, the canonical partition of the indivisible goods $M_1,\dots,M_q$, and the canonical partition of the divisible goods $C_1,\dots,C_q$\;
    Let $j^*$ be the index such that $C_{j^*}=C$\;
    \For{$j\ot 1,\dots,j^*-1,j^*+1,\dots,q$}{
        Allocate $M_j$ to $N_j$ so that each agent receives $\beta_j$ or $\beta_j-1$\;\label{line:assign_Mj}
    }
    Let $N^+_{j^*}\ot\{j\in N_{j^*}\mid v_i(c)=1~(\forall c\in C)\}$ and $N^-_{j^*}\ot\{j\in N_{j^*}\mid v_i(c)=0~(\forall c\in C)\}$\;
    Let $\Pi\ot\emptyset$ be a set of candidate allocations\;
    \For{$k\ot 0,1,\dots,|N^+_{j^*}|$ and $\ell\ot 0,1,\dots,|M_{j^*}|$}{
        \If{
        $|N^+_{j^*}|\cdot (\beta_{j^*}-1)+k\le \ell+|C|\le |N^+_{j^*}|\cdot \beta_{j^*}$
        }{
            Determine the existence an allocation $\pi^{k,\ell} \in\{0,1\}^{N_{j^*}\times M_{j^*}}$ satisfying the following conditions via the submodular flow problem:
            $\{i\in N^+_{j^*}\mid \pi_i(M_{j^*})=\beta_{j^*}\}\le k$,
            $\pi_i(M_{j^*})\le \beta_{j^*}$ for each $i\in N^+_{j^*}$,
            $\sum_{i\in N^+_{j^*}}\pi_i(M_{j^*})=\ell$,
            $\pi_i(M_{j^*})\in \{\beta_{j^*},\beta_{j^*}-1\}$ for each $i\in N^-_{j^*}$\;
            \If{Such an allocation $\pi^{k,\ell}$ exists}{
                Let $\pi$ be an allocation such that indivisible goods are allocated according to \cref{line:assign_Mj} and $\pi^{k, \ell}$, and the divisible goods are allocated to agents in $N_{j^*}^+$ by a water-filling policy\;
                $\Pi\ot \Pi\cup\{\pi\}$\;
            }
        }
    }
    \Return $\pi^*\in \argmin_{\pi\in\Pi}\Phi(\pi(E))$\;
\end{algorithm}

By summarizing the discussions so far, we obtain the main result in this section.
\begin{theorem}[restatement of \Cref{thm:intro-alg}]\label{thm:identical cake}
    Let $\Phi$ be a symmetric strictly convex function.
    If all the divisible goods are identical, \Cref{alg:identical_div} finds a $\Phi$-fair allocation 
    in polynomial time.
\end{theorem}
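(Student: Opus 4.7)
My plan is to verify that Algorithm~\ref{alg:identical_div} produces an allocation whose utility vector is a $\Phi$-minimizer on $\B_E$, and then observe that every step runs in polynomial time.

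For correctness on the blocks $j \neq j^*$, I would combine Theorem~\ref{thm:structure fair} and \Cref{lem:decomposition}: any $\Phi$-fair allocation $\pi^*$ routes $M_j$ entirely into $N_j$, and each agent in $N_j$ has integer utility in $\{\beta_j-1, \beta_j\}$ (integer because $C_j = \emptyset$). Such an allocation is feasible, since the discrete $\Phi$-minimizer $\ddot z^*\in\ddot{\B}_M$ realizes exactly these counts on $N_j$; and the utility vector restricted to $N_j$ is determined up to value-equivalence by these counts together with $|M_j|$, so any matching-based distribution in \cref{line:assign_Mj} has the same $\Phi$-contribution from $N_j$ as $\pi^*$.

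For the block $j^*$, I would invoke \Cref{lem:optimal alloc}, which characterizes an optimal allocation (up to value-equivalence) by two integers $k \in \{0,\dots,|N_{j^*}^+|-1\}$ and $\ell\in\{0,\dots,|M_{j^*}|\}$ plus any indivisible allocation consistent with the constraints. The algorithm enumerates all $(k,\ell)$ satisfying property~1 of that lemma and, by \Cref{lem:submodular flow}, decides feasibility of the indivisible allocation in polynomial time via a submodular flow oracle. Whenever feasibility holds, \Cref{lem:cake-equal} forces the divisible goods to be distributed by the water-filling rule used in the algorithm, so $\hat\pi^{k,\ell}$ is optimal within its $(k,\ell)$ class. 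Taking the best element of $\Pi$ then returns a $\Phi$-fair allocation, because the parameters $(k^*, \ell^*)$ of $\pi^*$ themselves trigger the construction and place $\pi^*$ (or a value-equivalent allocation) in $\Pi$.

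Polynomial runtime is immediate: \Cref{thm:canonical} gives $\ddot z^*$, the canonical partitions, and the essential values in strongly polynomial time; the canonical partitions of goods via~\eqref{eq:canonical indivisible}--\eqref{eq:canonical divisible} reduce to simple scans; the $j \neq j^*$ matchings are polynomial; and the double loop performs $O(|N|\cdot|M|)$ iterations, each dominated by a polynomial-time submodular flow feasibility check, followed by a trivial water-filling step. The one place that needs real care — and the main obstacle — is checking that the enumerated family $\{\hat\pi^{k,\ell}\}$ genuinely contains an optimizer: this hinges on showing that the water-filling step reproduces the equal-utility values dictated by part~2(b) of \Cref{lem:optimal alloc} whenever the indivisible component coming out of the submodular flow satisfies conditions (a)--(d), so that the candidate $\hat\pi^{k^*,\ell^*}$ has utility vector value-equivalent to $\pi^*(E)$ and thus the same $\Phi$-value.
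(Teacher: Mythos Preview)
Your proposal is correct and mirrors the paper's own argument, which is precisely the running discussion of Section~\ref{sec:identical-divisible} culminating in the theorem statement: decompose along the canonical partition via \Cref{thm:structure fair} and \Cref{lem:decomposition}, handle each $j\neq j^*$ by the integral minimizer $\ddot z^*$, and on $N_{j^*}$ enumerate $(k,\ell)$, test feasibility by \Cref{lem:submodular flow}, water-fill, and pick the best. One minor slip: $\ddot z^*$ lives in $\ddot{\B}_E$, not $\ddot{\B}_M$; since $C_j=\emptyset$ for $j\neq j^*$ the restriction to $N_j$ is realized purely by indivisible goods from $M_j$, so your feasibility claim still goes through.
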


\section{Hardness for Identical Indivisible Goods}\label{sec:identical-indivisible}
In this section, we show a hardness result when divisible goods are non-identical but indivisible goods are identical.

We prove the NP-hardness of finding a $\Phi$-fair allocation by using the 3-dimensional matching (3DM) problem, which is known to be NP-hard~\cite{GJ1979}.
In the 3DM problem, we are given three sets of elements, $X=\{x_1,\dots,x_n\}$, $Y=\{y_1,\dots,y_n\}$, and $Z=\{z_1,\dots,z_n\}$. We are also given a set of hyperedges $T=\{t_1,\dots,t_m\}$ where $t\in T$ is an ordered triplets in $X\times Y\times Z$. The goal of the problem is to determine if there exists a subset $T'$ of $T$ such that each element from $X$, $Y$, and $Z$ appears exactly once in $T'$.
In other words, our task is to find a perfect matching that covers all the elements in $X$, $Y$, and $Z$ without any repetitions.
\begin{theorem}[restatement of \Cref{thm:intro-NP-hard}]\label{thm:hardness}
For any fixed symmetric strictly convex function $\Phi$,
finding a $\Phi$-minimizer is NP-hard even when indivisible goods are identical.
Hence, finding a $\Phi$-fair allocation is NP-hard.
\end{theorem}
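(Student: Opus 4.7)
The plan is to reduce from 3-dimensional matching (3DM). Given a 3DM instance $(X,Y,Z,T)$ with $|X|=|Y|=|Z|=n$ and $|T|=m$, we construct a fair allocation instance with identical indivisible goods. The basic template uses one agent per triple $t\in T$, one divisible good $c_e$ per element $e\in X\cup Y\cup Z$, and a carefully chosen number $K$ of identical indivisible goods desired by the triple agents. Each triple agent $t$ values exactly the three divisible goods $c_e$ for $e\in t$ (along with the indivisible goods). We tune $K$, and potentially add auxiliary ``filler'' agents, so that the continuous $\Phi$-minimizer $\bar{z}$ on $\bar{\B}_E$ has all coordinates equal to a common value $u^*$.

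Given this setup, the first step is to invoke the proximity theorem (\Cref{thm:proximity}) to conclude that the mixed $\Phi$-minimizer $z^*$ has each coordinate in $\{\lfloor u^*\rfloor,\lceil u^*\rceil\}$, while \Cref{thm:structure fair} further restricts which agents and goods belong to each block of the canonical partition. The central claim is that the uniform utility vector $(u^*,\ldots,u^*)$ is realizable as the utility vector of some allocation if and only if the 3DM instance has a perfect matching. The forward direction is by direct construction: given a perfect matching $T'\subseteq T$, assign the three divisible goods of each $t\in T'$ entirely to agent $t$, and distribute the indivisible goods among the remaining triple agents so that every agent reaches utility $u^*$. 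Since a uniform utility vector strictly minimizes every symmetric strictly convex $\Phi$ by a majorization argument, a polynomial-time algorithm for $\Phi$-fair allocation would decide which of the two cases we are in, thereby solving 3DM.

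The principal obstacle is the converse direction: showing that no uniform utility vector is achievable when the 3DM has no perfect matching. Divisible goods provide substantial fractional flexibility; naive constructions can attain uniform utilities even without any matching, for example by leaving one triple agent idle and spreading its three elements fractionally among the remaining triples. Overcoming this requires tying the fractional divisible allocations to integer combinatorial structure. The key idea is to exploit the integrality of the indivisible-good allocations $k_t\in\mathbb{Z}_{\ge 0}$: together with $u_t=d_t+k_t=u^*$, this constrains the divisible utilities $d_t$ to a discrete set. A combinatorial argument, leveraging \Cref{lem:decomposition} and the canonical-partition structure of \Cref{thm:structure fair}, then identifies a subset of triple agents forced to receive their full triple ($d_t=3$), and shows that this subset must form a perfect 3DM. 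Making this final combinatorial step work—possibly by introducing additional gadget goods or agents that prevent the ``fractional rewiring'' trick—is the technical heart of the reduction.
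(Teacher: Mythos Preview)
Your high-level plan (reduce from 3DM, invoke the proximity theorem to pin every coordinate to a two-value range, and argue that a distinguished utility vector is achievable if and only if a perfect matching exists) matches the paper's strategy. However, the proposal has a genuine gap exactly where you flag it: the converse direction is not established, and the simple construction you describe---one agent per triple, one divisible good per element, $K$ identical indivisible goods---does not appear to force matching structure. Your own observation that $d_t=u^*-k_t$ lies in a discrete set is correct but insufficient: a value like $d_t=2$ does not mean agent $t$ fully receives any particular element good (it could receive, say, $2/3$ of each of its three goods), so the integrality of $k_t$ does not by itself translate into combinatorial coverage constraints on $X\cup Y\cup Z$. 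You end by saying the ``technical heart'' may require additional gadgets, which is an admission that the reduction is not yet a proof.

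The paper closes this gap with a substantially richer instance. Instead of one divisible good per element, it creates three divisible goods per \emph{triple}; instead of one agent per triple, it creates five (one ``main'' agent $t$ plus four clones $t^{(1)},\dots,t^{(4)}$ who can only consume those three goods); and each element $s$ gets its own private divisible good $c_s$ shared with a dedicated sibling $s'$. The target profile is not uniform: exactly $n$ agents (those receiving the $n$ indivisible goods) sit at utility $1$, and everyone else sits at $0.6$. The converse is then a short counting argument: if two selected triples overlap, their combined set of element labels has size $k\in\{3,4,5\}$, so $k+6$ divisible goods must serve $2k+8$ agents, and since $(k+6)/(2k+8)>0.6$ for these $k$, some agent must exceed $0.6$. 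No matching-style combinatorics on fractional flows is needed---the numerics of the gadget do the work. Your sketch does not yet contain any mechanism playing this role.
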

\begin{proof}
Let us fix a symmetric strictly convex function $\Phi$.
Given an instance of 3DM $(X,Y,Z;T)$, we construct an instance of fair allocation $(N,M,C,v)$.
Without loss of generality, we assume $m\ge n$, since otherwise, the 3DM instance is a no-instance.
We construct two agents $s$ and $s'$ for each $s\in X\cup Y\cup Z$ and five agents 
$t,t^{(1)},t^{(2)},t^{(3)},t^{(4)}$ for each $t\in T$. 
Thus, we have the set $N$ of $6n+5m$ agents: $N=\bigcup_{s\in X\cup Y\cup X}\{s,s'\}\cup\bigcup_{t\in T}\{t,t^{(1)},t^{(2)},t^{(3)},t^{(4)}\}$.
We create $n$ indivisible goods $M=\{g_1,g_2,\dots,g_n\}$, which correspond to the hyperedges chosen as a matching. 
We set the valuations so that only the agents in $T$ desire the indivisible goods, i.e., $v_t(g)=1$ ($t\in T$) and $v_i(g)=0$ ($i \in N\setminus T$) for every $g \in M$.
We construct two types of divisible goods.
First, for each $s\in X\cup Y\cup Z$, we create a good $c_s$, which is desired by only $s$ and $s'$.
Second, for each hyperedge $t=(x,y,z) \in T$, we create goods $c_t^{(1)},c_t^{(2)},c_t^{(3)}$, which are desired by only $8$ agents $t,t^{(1)},t^{2},t^{(3)},t^{(4)},x,y,z$.
Thus, we have the set of $3n+3m$ divisible goods, denoted by $C=\bigcup_{s\in X\cup Y\cup Z}\{c_s\}\cup\bigcup_{t\in T}\{c_t^{(1)},c_t^{(2)},c_t^{(3)}\}$.
We summarize the valuation of each agent for each good in Table~\ref{tbl:reduction}.

\begin{table}[htb]
    \centering
    \caption{Agents' valuations in the proof of Theorem~\ref{thm:hardness}.}\label{tbl:reduction}
    \begin{tabular}{c|ccc}
      \toprule
      agents                   & $g_i~(i\in[n])$ & $c_{\hat{s}}~(\hat{s}\in X\cup Y\cup Z)$ & $c_{\hat{t}}^{(1)},c_{\hat{t}}^{(2)},c_{\hat{t}}^{(3)}~(\hat{t}\in T)$\\\midrule
      $s~(s\in X\cup Y\cup Z)$ & $0$   & $1$ if $s=\hat{s}$ and $0$ otherwise  & $1$ if $s\in\hat{t}$ and $0$ otherwise \\
      $s'~(s\in X\cup Y\cup Z)$& $0$   & $1$ if $s=\hat{s}$ and $0$ otherwise  & $0$\\
      $t~(t\in T)$             & $1$   & $0$   & $1$ if $t=\hat{t}$ and $0$ otherwise\\      
      $t^{(1)},t^{(2)},t^{(3)},t^{(4)}~(t\in T)$ & $0$   & $0$   & $1$ if $t=\hat{t}$ and $0$ otherwise\\ 
      \bottomrule
    \end{tabular}
\end{table}

We can observe that there always exists an integral allocation such that the utility of each agent is at most $1$ as follows:
allocate indivisible good $g_i\in M$ to agent $t_i\in N$ for each $i\in[n]$,
divisible good $c_s\in C$ to agent $s\in N$ for each $s\in X\cup Y\cup Z$,
divisible good $c_t^{(j)}\in C$ to agent $t^{(j)}\in N$ for each $t\in T$ and $j\in\{1,2,3\}$.
Thus, in any dec-min \emph{relaxed} allocation, the largest utility is at most $1$.
Hence, the utility of each agent is at most $1$ in any $\Phi$-fair relaxed allocation since $\Phi$-fair and dec-min allocations coincide in relaxed allocations by \Cref{thm:divisible}.
By the proximity (Theorem~\ref{thm:proximity}), the utility of each agent is also at most $1$ in any $\Phi$-fair allocation.

We demonstrate that the 3DM instance is a yes-instance if and only if there exists an allocation such that only $n$ agents have utility $1$ and the other agents have utility $0.6$.
Note that, if such an allocation exists, it must be a $\Phi$-fair allocation.
This is because at least $n$ agents have utility $1$ as there are $n$ indivisible goods.

Suppose that the 3DM instance is a yes-instance.
Let $T'$ be a subset of $T$ such that each element in $X\cup Y\cup Z$ appears exactly once.
We construct a desired allocation $\pi^*$ as follows.
For the indivisible goods $M$, we distribute one to each agent in $T'$ 
(i.e., $\pi^*_{t_{\tau(i)}}(g_i)=1$ for each $i\in [n]$, where $T'=\{t_{\tau(1)},\dots,t_{\tau(n)}\}$).
For each divisible good $c_s$ with $s\in X\cup Y\cup Z$, we allocate $0.4$ fraction of $c_s$ to agent $s\in N$ and $0.6$ fraction to agent $s'\in N$ (i.e., $\pi^*_s(c_s)=0.4$ and $\pi^*_{s'}(c_s)=0.6$).
For each $t=(x,y,z)\in T'$ and $j\in\{1,2,3\}$, we allocate $0.2/3$ fraction of $c_t^{(j)}$ to each of agents $x,y,z$ (i.e., $\pi^*_{x}(c_t^{(j)})=\pi^*_{y}(c_t^{(j)})=\pi^*_{z}(c_t^{(j)})=0.2/3$) and $0.2$ fraction to each of agents $t^{(1)},t^{(2)},t^{(3)},t^{(4)}$ (i.e., $\pi^*_{t^{(1)}}(c_t^{(j)})=\pi^*_{t^{(2)}}(c_t^{(j)})=\pi^*_{t^{(3)}}(c_t^{(j)})=\pi^*_{t^{(4)}}(c_t^{(j)})=0.2$).
For each $t\in T\setminus T'$ and $j\in\{1,2,3\}$, we allocate $0.2$ of $c_t^{(j)}$ to each of agents $t$, $t^{(1)}$, $t^{(2)}$, $t^{(3)}$, and $t^{(4)}$ (i.e., $\pi^*_{t}(c_t^{(j)})=\pi^*_{t^{(1)}}(c_t^{(j)})=\pi^*_{t^{(2)}}(c_t^{(j)})=\pi^*_{t^{(3)}}(c_t^{(j)})=\pi^*_{t^{(4)}}(c_t^{(j)})=0.2$).
Then, we can see that $\pi_i^*(E)$ is $1$ if $i\in T'$ and $0.6$ otherwise.

Conversely, suppose that there exists an allocation $\pi^\dagger$ such that $\pi_i^\dagger(E)$ is $1$ for $n$ agents of $i\in N$ and $0.6$ for the other agents.
Let $N^\dagger$ be the set of agents $i\in N$ with $\pi_i^\dagger(E)=1$.
As there are $n$ indivisible goods $M$, each agent $i\in N^\dagger$ must receive one indivisible good, and hence $N^\dagger\subseteq T$.
Thus, we identify $N^\dagger$ with a set of hyperedges.
We prove that $N^\dagger$ is a perfect matching for the 3DM instance.
Suppose to the contrary that $N^\dagger$ is not a perfect matching in the 3DM instance.
Then, there exist two distinct agents (corresponding to hyperedges) 
$t=(x,y,z)$ and $\hat{t}=(\hat{x},\hat{y},\hat{z})$
in $N^\dagger$ such that they have an intersection.
Then, 
we have $k\coloneqq|\{x,y,z,\hat{x},\hat{y},\hat{z}\}|\in\{3,4,5\}$.
Thus, the $k+6$ divisible goods $\bigcup_{s\in\{x,y,z,\hat{x},\hat{y},\hat{z}\}}\{c_s\}\cup\{c_t^{(1)},c_t^{(2)},c_t^{(3)}\}\cup\{c_{\hat{t}}^{(1)},c_{\hat{t}}^{(2)},c_{\hat{t}}^{(3)}\}$ must be allocated to $2k+8$ agents $\bigcup_{s\in\{x,y,z,\hat{x},\hat{y},\hat{z}\}}\{s,s'\}\cup\{t^{(1)},t^{(2)},t^{(3)},t^{(4)}\}\cup\{\hat{t}^{(1)},\hat{t}^{(2)},\hat{t}^{(3)},\hat{t}^{(4)}\}$.
However, we have $(k+6)/(2k+8)>0.6$ since $k\in\{3,4,5\}$.
Therefore, at least one of the $2k+8$ agents must have a utility greater than $0.6$ in the allocation $\pi^\dagger$, which is a contradiction.
\end{proof}

By the proof of this theorem, we can also prove the following.
\begin{corollary}
The problems of finding an MNW allocation and an optimal egalitarian allocation are both NP-hard, even when indivisible goods are identical.
\end{corollary}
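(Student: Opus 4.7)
The plan is to reuse the reduction $(X,Y,Z;T)\mapsto (N,M,C,v)$ from the proof of \Cref{thm:hardness}. Recall from that proof that the 3DM instance is a yes-instance if and only if there exists an allocation whose utility vector is value-equivalent to the vector having $n$ entries equal to $1$ and $5n+5m$ entries equal to $0.6$. The goal is to argue that both an MNW allocation and an optimal egalitarian allocation effectively detect the existence of such a utility vector, so the same reduction yields NP-hardness for both problems.

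For the egalitarian case, let $\mu^*$ denote the egalitarian optimum. The forward direction $\mu^*\geq 0.6$ when 3DM is a yes-instance is immediate from the explicit allocation in the proof of \Cref{thm:hardness}. For the converse, suppose $\pi$ is an allocation with $\pi_i(E)\geq 0.6$ for every agent $i$, and I carry out the following counting. For each $t\in T$, the four agents $t^{(1)},\dots,t^{(4)}$ obtain utility only from $c_t^{(1)},c_t^{(2)},c_t^{(3)}$, so together they consume at least $2.4$ of these three goods, leaving at most $0.6$ for the other four desirers $t,x,y,z$ (where $t=(x,y,z)$). Each primed agent $s'$ consumes at least $0.6$ of $c_s$, so $s$ draws at most $0.4$ from $c_s$ and hence at least $0.2$ from the goods $c_t^{(j)}$ with $t\ni s$; summing over $s$ yields total draw at least $0.6n$. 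Let $T_1\subseteq T$ denote the set of $t$'s receiving at least one indivisible good. For $t\in T\setminus T_1$, the agent $t$ itself must absorb at least $0.6$ of $c_t^{(1,2,3)}$, leaving nothing for $x,y,z$, so only $T_1$ contributes to $X\cup Y\cup Z$; this contribution is at most $0.6|T_1|$. Combined with $|T_1|\leq n$ (from the $n$ indivisibles), this forces $|T_1|=n$, each $t\in T_1$ gets exactly one indivisible and contributes exactly $0.6$ to its three endpoints. A final incidence count (each $s$ must draw positively and therefore lies in some $t\in T_1$; total incidences equal $3n=|X\cup Y\cup Z|$) shows each $s$ lies in exactly one hyperedge of $T_1$, so $T_1$ is a perfect 3-matching.

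For MNW, I combine \Cref{thm:proximity} with the arithmetic-geometric mean inequality. Choose $\varepsilon>0$ small enough that $\Phi(z)=-\sum_i\log(z_i+\varepsilon)$, a symmetric strictly convex function, produces the same minimizer on $\B_E$ as the MNW objective on this instance; this is legitimate because every agent can be made to have positive utility via a uniform fractional allocation, so the ``maximize the number of positive utilities first'' clause of MNW is automatic. The proximity theorem then yields $\pi_i(E)\leq 1$ for every agent in any MNW allocation, so the $n$ indivisible goods must go to $n$ distinct agents of $T$, each reaching utility exactly $1$ and receiving no divisible goods (two indivisibles to one agent would force utility at least $2$). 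The remaining $5n+5m$ agents share a total divisible utility of $3(n+m)$, and strict concavity of $\log$ implies the product of their utilities is strictly maximized at the uniform vector $0.6\cdot\mathbf{1}$ with value $0.6^{5n+5m}$. This uniform vector is realizable precisely when the utility vector in the statement of \Cref{thm:hardness} is realizable, i.e., iff the 3DM instance is a yes-instance. Hence MNW equals $1^n\cdot 0.6^{5n+5m}$ iff 3DM is a yes-instance, and an MNW oracle decides 3DM in polynomial time.

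The main obstacle I expect is the counting argument for the egalitarian case: extracting a perfect 3-matching purely from a single lower bound on the minimum utility requires carefully tracing the flow of every divisible good through the $s'$ and $t^{(j)}$ gadget agents, and in particular ruling out configurations where an agent in $T$ receives two or more indivisible goods, or where a matched hyperedge fails to cover the elements of $X\cup Y\cup Z$ exactly once.
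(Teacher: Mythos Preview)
Your proposal is correct and follows the paper's route of recycling the 3DM reduction from \Cref{thm:hardness}. The MNW argument is exactly what the paper has in mind: identify MNW with a $\Phi$-fair allocation for a suitable symmetric strictly convex $\Phi$ (as in Appendix~\ref{sec:MNW}), invoke proximity to cap all utilities at $1$, and then use AM--GM on the remaining $5n+5m$ agents sharing $3(n+m)$ to conclude that the MNW value hits $0.6^{5n+5m}$ iff the special utility vector is realized, iff 3DM is a yes-instance.

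For the egalitarian case your gadget-by-gadget flow-tracing is valid, but it is more work than needed. A one-line total-utility count gives the same conclusion and lets you reuse the converse already proved inside \Cref{thm:hardness}: if some allocation $\pi$ has $\min_i\pi_i(E)\ge 0.6$, then with $T_1$ the set of agents receiving at least one indivisible good,
\[
4n+3m \;=\; \sum_{i\in N}\pi_i(E) \;\ge\; \sum_{t\in T_1}\pi_t(E) \;+\; 0.6\bigl(6n+5m-|T_1|\bigr) \;\ge\; n + 0.6\bigl(6n+5m-|T_1|\bigr),
\]
forcing $|T_1|\ge n$; combined with $|T_1|\le n$ every inequality is tight, so the utility vector is exactly the ``$n$ ones and $5n+5m$ values of $0.6$'' vector. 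At that point the paragraph in the proof of \Cref{thm:hardness} beginning ``Conversely, suppose that there exists an allocation $\pi^\dagger$\dots'' already extracts the perfect 3-matching. Your detailed counting rediscovers that paragraph from scratch, which is fine but redundant. The obstacle you flagged (an agent in $T$ receiving two indivisibles) is automatically ruled out by $|T_1|=n$ together with each $t\in T_1$ receiving at least one, so no extra case analysis is needed.
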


\section*{Acknowledgments}
We are grateful to Kazuo Murota and Warut Suksompong for their helpful comments.
The first author is supported by JSPS KAKENHI Grant Number JP20K19739, JST PRESTO Grant Number JPMJPR2122, and Value Exchange Engineering, a joint research project between R4D, Mercari, Inc.\ and the RIISE. 
The third author is supported by JSPS KAKENHI Grant Numbers JP21K17708 and JP21H03397, Japan.

\bibliographystyle{abbrv}
\bibliography{ref}

\appendix

\section{The Nash welfare and a symmetric strictly convex function}\label{sec:MNW}
In this section, we describe the detail of the Nash welfare for binary additive valuations and a symmetric strictly convex function.
Namely, we can find the utility vector of an MNW allocation by minimizing some symmetric strictly convex functions.

Recall that an MNW allocation $\pi$ is the one such that the number of agents with positive utilities is maximized, and subject to that, the Nash welfare $\prod_{i\in N: \pi_i(E)>0} \pi_i(E)$ is maximized.
Maximizing the product of utilities is equivalent to maximizing the sum of the logarithm of utilities.
To minimize the number of agents with zero utilities, we set a symmetric strictly convex function $\Phi(z)=-\sum_{i\in N} \log(z_i+\epsilon)$, where $\epsilon$ is a positive real that is smaller than $\frac{1}{(2|E|n!)^{n+1}}$. 
We show that $\Phi$-minimizer for this function indeed implies an MNW allocation.
To this end, we see a property of an MNW allocation.

Suppose that the allocation of indivisible goods in an MNW allocation is given.
We may remove agents who do not receive indivisible goods and do not desire divisible goods because the utility of such agents cannot be positive.
Then we can find an MNW allocation by assuming that each indivisible good is divisible and desired by only the agent who receives it.
For the setting where only divisible homogeneous goods exist, Devanur et al.~\cite{Devanur2008} proposed an algorithm to find an MNW allocation.
We may assume that every agent has a positive valuation for at least one divisible good, and 
hence there exists an allocation in which every agent has a positive utility.
By applying Lemma 7.1 in \cite{Devanur2008} to when all the agents have binary additive utility functions, there is an MNW allocation in which each agent's utility is a rational value with a denominator at most $n$. 
Moreover, by Theorem~\ref{thm:divisible}, every MNW allocation satisfies the above property, and so does every $\Phi$-fair allocation.

Therefore, in a similar way to the proof of Proposition~\ref{prop:Phi}, we obtain the following lemma.
\begin{lemma}\label{lem:lower}
    In every $\Phi$-fair allocation, each agent's utility is a rational value with a denominator at most $n$, and hence it is zero or a positive rational value at least $1/n$.
\end{lemma}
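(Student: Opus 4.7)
The plan is to reduce the claim to the purely divisible case by freezing the indivisible part of the $\Phi$-fair allocation, and then read off the denominator bound from the principal-partition description of the unique divisible $\Phi$-minimizer.

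First, I would take an arbitrary $\Phi$-fair allocation $\pi^*$ and construct a modified instance $\mathcal{I}'$ by replacing each $g\in M$ with a divisible good desired only by the unique agent $i$ with $\pi^*_{ig}=1$. In $\mathcal{I}'$ every good is divisible, so by the setup in Section~\ref{sec:M-convex} the set of feasible utility vectors is an integral base-polyhedron $\bar{\B}'$ over $N$, associated with some integer-valued supermodular function $f'$. The vector $\pi^*(E)$ lies in $\bar{\B}'$, and conversely any $z\in\bar{\B}'$ is attainable in the original instance by an allocation that agrees with $\pi^*$ on $M$ (only the divisible part is reshuffled). Consequently, $\Phi$-fairness of $\pi^*$ forces $\pi^*(E)$ to be a $\Phi$-minimizer over $\bar{\B}'$.

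Next, Theorem~\ref{thm:principal} gives a principal partition $\hat{N}_1,\dots,\hat{N}_r$ of $N$ and critical values $\lambda_1>\dots>\lambda_r$ with $\pi^*_i(E)=\lambda_j$ for every $i\in\hat{N}_j$. It then suffices to show each critical $\lambda_j$ is a rational with denominator at most $n$. Using the lattice property of $\cL(\lambda_j)=\argmax_X(f'(X)-\lambda_j|X|)$ (Proposition~\ref{prop:tightsets} applied to the face $\{x\in\bar{\B}'\mid x(X)=f'(X)\}$, or directly to $\cL(\lambda_j)$), the family $\cL(\lambda_j)$ has a smallest member $X_{\min}$ and a largest member $X_{\max}$, and criticality of $\lambda_j$ forces $X_{\min}\subsetneq X_{\max}$, hence $|X_{\max}|>|X_{\min}|$. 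Equating the two attained maxima yields $\lambda_j=(f'(X_{\max})-f'(X_{\min}))/(|X_{\max}|-|X_{\min}|)$, whose denominator is a positive integer at most $n$.

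Combining the two steps, every $\pi^*_i(E)$ is a rational $p/q$ with $1\le q\le n$, so whenever it is positive it is at least $1/n$. The step needing most care is the reduction: one must verify that freezing the indivisible part of $\pi^*$ and treating each of those goods as singleton-desired divisible goods genuinely produces an integral base-polyhedron and that $\pi^*(E)$ is a $\Phi$-minimizer on it (i.e., that reshuffling only the divisible part cannot decrease $\Phi$ beyond what is already achieved). Once this is in place, the denominator bound is immediate from the lattice structure already recorded in the paper.
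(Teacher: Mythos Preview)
Your proposal is correct and shares its core reduction with the paper: freeze the indivisible part of the $\Phi$-fair allocation $\pi^*$ and analyze the resulting purely divisible instance, whose feasible utility vectors form an integral base-polyhedron on which $\pi^*(E)$ must be the unique $\Phi$-minimizer. The paper does exactly this (treating each indivisible good as a divisible good desired only by its recipient), so the overall architecture is the same.

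The difference is in how the denominator bound is obtained. The paper appeals to Lemma~7.1 of Devanur et al.\ for the fact that, in the binary divisible-only case, the MNW utility vector has rational entries with denominator at most $n$, and then uses Theorem~\ref{thm:divisible} to identify MNW with $\Phi$-fairness there. You instead derive the bound directly from Theorem~\ref{thm:principal}: since any critical value $\lambda_j$ satisfies $f'(X_{\min})-\lambda_j|X_{\min}|=f'(X_{\max})-\lambda_j|X_{\max}|$ for the lattice extremes of $\cL(\lambda_j)$, one gets $\lambda_j=(f'(X_{\max})-f'(X_{\min}))/(|X_{\max}|-|X_{\min}|)$ with integer numerator (because $f'$ is integer-valued) and denominator in $\{1,\dots,n\}$. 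Your route is more self-contained and makes transparent exactly where the bound $n$ comes from; the paper's route is terser but leans on an external reference.
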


We focus only on the utility vectors that satisfy the property in this lemma.
Let $x$ be the utility vector of an allocation $\pi$ in which $s \ (\geq 1)$ agents have zero utility, and let $y$ be the utility vector of an allocation in which $t \ (< s)$ agents have zero utility. Suppose that $x$ and $y$ satisfy the property.
Then, we have
\begin{align}
    \Phi(y) - \Phi(x) 
    &=-\sum_{i\in N}\log(y_i+\epsilon)+\sum_{i\in N} \log(x_i+\epsilon)\\
    &\leq (s-t)\log \epsilon + (n-s) \log( |E|/(n-s)+\epsilon) - (n-t)\log(1/n)\\
    &=(s-t)\log (n\epsilon) + (n-s)\log \left(n|E|/(n-s)+n\epsilon\right)\\
    &\le \log (n\epsilon) + n\log \left(n|E|+1\right)\\
    &\le \log (n\left(n|E|+1\right)^{n}\epsilon)
    < \log (\left(2n|E|\right)^{n+1}\epsilon)<0,
\end{align}
where the second inequality holds by $\log(n\epsilon)<0$.
Thus, $\Phi$ is minimized at allocations that maximize the number of agents with positive utilities.
Let $u$ and $v$ be the corresponding utility vectors of two such allocations.
By Lemma~\ref{lem:lower}, $u'=n! \cdot u$ and $v'=n! \cdot v$ are integral vectors.
Let $A=\prod_{i\in N: u'_i>0} (u'_i+n!\epsilon) - \prod_{i\in N: v'_i>0} (v'_i+n!\epsilon)$ and $B=\prod_{i\in N: u'_i>0} u'_i - \prod_{i\in N: v'_i>0} v'_i$.
Then, we have
\begin{align}
    |A-B| 
    &\leq \left|\prod_{i\in N: u'_i>0} (u'_i+n!\epsilon) - \prod_{i\in N: u'_i>0} u'_i \right| + \left|\prod_{i\in N: v'_i>0} (v'_i+n!\epsilon) - \prod_{i\in N: v'_i>0} v'_i \right|\\ 
    &\le \epsilon\cdot 2^n\prod_{i\in N: u'_i>0} u'_i  + \epsilon\cdot 2^n\prod_{i\in N: v'_i>0} v'_i\\
    &\le \epsilon\cdot 2\cdot (2|E|n!)^n
    <  (2|E|n!)^{n+1} \cdot \epsilon < 1.
\end{align}
This implies that if $A>0$, then $B \geq 0$ because $B$ is integral.
In other words, if $\Phi(u) < \Phi(v)$ then $\prod_{i\in N: u_i>0} u_i \geq \prod_{i\in N: v_i>0} v_i$. 

Therefore, any $\Phi$-fair allocation is an MNW allocation.

\section{Hardness of allocation for a specified utility vector}
In this section, we demonstrate NP-hardness of checking existence of an allocation whose utility vector is equal to a given vector.
We remark that, when there are only divisible goods or only indivisible goods, the problem can be solved in polynomial time via the maximum flow problem.

\begin{theorem}\label{thm:util-hard}
It is NP-hard to determine whether a given utility vector $u\in\mathbb{R}^N$ can be achieved by an allocation, even when the indivisible goods are identical.
\end{theorem}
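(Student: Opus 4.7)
The plan is to reduce from the Partition problem: given positive integers $a_1,\dots,a_n$ with $\sum_i a_i=2T$, decide whether some $S\subseteq[n]$ satisfies $\sum_{i\in S}a_i=T$.

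Given such an instance, I would build an allocation instance on agents $s_1,\dots,s_n$ (the ``item agents'') together with a small number of auxiliary partner agents; the identical indivisible goods are desired by exactly the item agents. For each $i$, I introduce a gadget of divisible goods shared between $s_i$ and some auxiliary agents, with copy counts tuned so that the supermodular constraints $y(X)\geq f_C(X)$ defining $\bar{\B}_C$ pin down $y_{s_i}$ to a small interval depending on $a_i$. I then calibrate the number $m$ of indivisibles and the target utility vector $u$---allowing fractional entries---so that the only integer values of $k_{s_i}$ compatible with $y_{s_i}=u_{s_i}-k_{s_i}$ lying in the pinned interval are $k_{s_i}\in\{0,a_i\}$. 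The remaining equality $\sum_i k_{s_i}=m$ then coincides with a subset-sum constraint and encodes Partition.

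The key observation is the contrast with the integer-$u$ case: when $u\in\mathbb{Z}^N$, feasibility of $u\in\B_E$ is polynomial because both $\ddot{\B}_M$ and $u-\bar{\B}_C$ are integer polytopes, so their intersection inherits integer vertices via the supermodular--submodular intersection theorem. For fractional $u$, however, the shifted polytope $u-\bar{\B}_C$ acquires non-integer vertices, and deciding whether $\ddot{\B}_M\cap(u-\bar{\B}_C)$ contains an integer point becomes a genuine integer-feasibility problem that can encode NP-hard combinatorial constraints; this is exactly the gap the reduction exploits.

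The main obstacle is the gadget design: arranging the divisible desire sets and the fractional offsets of $u$ so that the combined inequalities of $\bar{\B}_C$ enforce the dichotomy $k_{s_i}\in\{0,a_i\}$ for every $i$ simultaneously, with no unintended slack in any direction of the polyhedron. Once this forcing is verified, both directions of the reduction are immediate by construction: a Partition solution $S$ yields an integer $x\in\ddot{\B}_M$ (with $k_{s_i}=a_i$ for $i\in S$ and $k_{s_i}=0$ otherwise) together with a compatible $y\in\bar{\B}_C$ satisfying $x+y=u$, while conversely any feasible pair $(x,y)$ realizing $u$ determines a subset $S$ that solves Partition.
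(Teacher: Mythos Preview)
Your proposal is only a plan, and the plan has two genuine obstructions that the sketch does not address.

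\textbf{The dichotomy $k_{s_i}\in\{0,a_i\}$ cannot be enforced.} Once you fix the auxiliary agents' utilities to $u_A$ (they receive no indivisibles), the set of compatible $y_S$ is the slice $\{y_S:(y_S,u_A)\in\bar{\B}_C\}$, which is again a base-polyhedron (for a possibly non-integer supermodular function). Consequently the feasible region for $x_S=u_S-y_S$, intersected with the simplex coming from $\ddot{\B}_M$, is a convex polytope. Any ``pinned interval'' for $y_{s_i}$ therefore yields a \emph{contiguous} range of admissible integers for $k_{s_i}$, never a two-point set $\{0,a_i\}$ with $a_i\ge 2$. Concretely, one checks on small examples (e.g.\ $S=\{1,2,3\}$, $a=(1,3,2)$) that no system of constraints $x(X')\le c_{X'}$ can exclude the intermediate point $(1,2,0)$ while retaining both target points $(1,0,2)$ and $(0,3,0)$. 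Your proposal acknowledges the gadget design as ``the main obstacle'' but never resolves it; in fact it cannot be resolved in this form.

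\textbf{The reduction would only encode unary Partition.} In this model goods are listed explicitly and valuations are binary, so $u(N)=|E|$ equals the number of goods. Hence $\sum_i a_i\le |E|$ is bounded by the instance size, and any Partition instance you embed has numbers given in unary. Unary Partition is in P, so even a correct gadget would not prove NP-hardness.

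The paper sidesteps both issues by reducing from \emph{3-dimensional matching}, a strongly NP-hard problem with no numerical data. Hyperedge agents $t\in T$ desire the identical indivisibles and have target $u_t=1$; since $y_t\ge 0$, this forces the natural binary choice $x_t\in\{0,1\}$ without any gadget. Element agents $s\in X\cup Y\cup Z$ have fractional target $u_s=1/3$, and each divisible good $c_t$ is desired by $t$ together with the three elements of $t$. A simple counting argument then shows that the targets are realizable iff the set $\{t:x_t=1\}$ is a perfect 3-dimensional matching: if two chosen hyperedges overlap, the elements they cover must absorb more divisible mass than $1/3$ each. The fractional $u$ is used for this covering/counting constraint, not to create non-convex integer gaps.
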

\begin{proof}
We provide a reduction from the 3DM problem.
For a given instance of 3DM $(X,Y,Z;T)$, we construct an instance of fair allocation $(N,M,C,v)$ as follows.
Suppose that $X=\{x_1,\dots,x_n\}$, $Y=\{y_1,\dots,y_n\}$, $Z=\{z_1,\dots,z_n\}$, and $T=\{t_1,\dots,t_m\}\subseteq X\times Y\times Z$.
The set of agents is defined as $N=X\cup Y\cup Z\cup T$.
We create $n$ indivisible goods $M=\{g_1,g_2,\dots,g_n\}$, which will correspond to the hyperedges chosen as a matching.
We set the valuations so that only agents in $T$ desire the indivisible goods, i.e., $v_t(g)=1~(\forall t\in T)$ and $v_s(g)=0~(\forall s\in X\cup Y\cup Z)$ for all $g\in M$.
Additionally, we create $m$ divisible goods $C=\{c_1,c_2,\dots,c_m\}$.
For each $i\in[m]$, the divisible good $c_i$ is desired by only four agents $t_i,x,y,z$, where $t_i=(x,y,z)$.
Moreover, we set the utility vector $u$ as $u_t=1$ for all $t\in T$ and $u_s=1/3$ for all $s\in X\cup Y\cup Z$.
We summarize the valuation of each agent for each good and the given utilities in Table~\ref{tbl:reduction-util}.

\begin{table}[htb]
    \centering
    \caption{Agents' valuations and utilities in the proof of Theorem~\ref{thm:util-hard}.}\label{tbl:reduction-util}
    \begin{tabular}{c|ccc}
      \toprule
      agents               & $g_i~(i\in[n])$ & $c_{i}~(i\in[m])$                   & $u$  \\\midrule
      $s\in X\cup Y\cup Z$ & $0$             & $1$ if $s\in t_i$ and $0$ otherwise & $1/3$\\
      $t_j\in T$           & $1$             & $1$ if $i=j$  and $0$ otherwise     & $1$  \\
      \bottomrule
    \end{tabular}
\end{table}

We demonstrate that the 3DM instance is a yes-instance if and only if there is an allocation such that its utility vector is $u$.

Suppose that the 3DM instance is a yes-instance.
Let $T'$ be a subset of $T$ such that each element in $X\cup Y\cup Z$ appears exactly once.
We construct a desired allocation $\pi^*$ as follows.
For the indivisible goods $M$, we distribute one to each agent in $T'$.
For each $t_i\in T'$ with $t_i=(x,y,z)$, we allocate $1/3$ fraction of $c_i$ to each of agents $x,y,z$.
For each $t_i\in T\setminus T'$, we assign the entire $c_i$ to $t_i$.
Then, it is not difficult to see that $\pi^*(E)$ is the desired vector $u$.

Conversely, suppose that $\pi^*$ is an allocation such that $\pi^*(E)=u$.
Let $T^\dagger\subseteq T$ be the set of agents who receive indivisible goods.
We prove that $T^\dagger$ is a perfect matching for the 3DM instance.
Suppose to the contrary that $T^\dagger$ is not a perfect matching in the 3DM instance.
Then, there exist two distinct agents $t_i=(x,y,z)$ and $t_j=(x',y',z')$ in $T^\dagger$ such that they have an intersection.
Then, we have $k\coloneqq|\{x,y,z,x',y',z'\}|<6$.
As $c_i$ and $c_j$ must be allocated to someone in $\{x,y,z,x',y',z'\}$, we have $\sum_{s\in\{x,y,z,x',y',z'\}}\pi_s^*(E)\ge 2$.
In contrast, we also have $\sum_{s\in\{x,y,z,x',y',z'\}}u_s=k/3<2$.
This contradicts the assumption that $\pi^*(E)=u$.
\end{proof}

\end{document}